\documentclass[a4paper]{article}
\pdfoutput=1 

\usepackage[left=3cm, right=3cm, top=2.5cm, bottom=3cm]{geometry}

\usepackage{braket}
\usepackage{amsmath,amssymb,amsfonts}
\usepackage{amsthm}
\usepackage{mathtools}
\usepackage{graphicx}
\usepackage{xcolor}
\usepackage{comment}
\usepackage{dsfont}
\usepackage{enumitem}
\usepackage{authblk}
\usepackage{autobreak}
\usepackage[colorlinks=true,allcolors=blue]{hyperref}
\usepackage[capitalize,nameinlink]{cleveref}
\usepackage[style=phys,articletitle=true,biblabel=brackets,chaptertitle=false,pageranges=false]{biblatex}

\addbibresource{library.bib}

\title{Macroscopicity and observational deficit\\in states, operations, and correlations\thanks{This work is dedicated to Professor Ryszard Horodecki on the occasion of his 80th birthday, in deep admiration and gratitude for his pioneering contributions to quantum information theory, his profound influence on the field, and his unwavering support and encouragement of young researchers.}}

\author{Teruaki Nagasawa\thanks{teruaki.nagasawa@nagoya-u.jp}}
\author{Eyuri Wakakuwa\thanks{e.wakakuwa@gmail.com}}
\author{Kohtaro Kato\thanks{kokato@i.nagoya-u.ac.jp}}
\author{Francesco Buscemi\thanks{buscemi@nagoya-u.jp}}

\affil{Department of Mathematical Informatics, Nagoya University,\\ Furo-cho Chikusa-ku, Nagoya 464-8601, Japan}

\newtheorem{theorem}{Theorem}[section]
\newtheorem{definition}[theorem]{Definition}
\newtheorem{lemma}[theorem]{Lemma}
\newtheorem{proposition}[theorem]{Proposition}
\newtheorem{corollary}[theorem]{Corollary}
\newtheorem{example}[theorem]{Example}

\newtheorem{remark}[theorem]{Remark}

\newcommand{\f}{\forall}

\newcommand{\ot}{\otimes}

\newcommand{\mr}{\mathrm}
\newcommand{\mf}{\mathfrak}
\newcommand{\mcl}{\mathcal}
\newcommand{\mbb}{\mathbb}
\newcommand{\lam}{\lambda}
\newcommand{\gm}{\gamma}

\newcommand{\sn}{\operatorname{span}}

\def\sH{\mathcal{H}}

\def\openone{\mathds{1}}

\newcommand{\povm}[1]{\boldsymbol{#1}}

\renewcommand{\P}{{P}}

\renewcommand{\set}[1]{\mathcal{#1}}

\newcommand{\ketbra}[1]{|#1\rangle\!\!\; \langle #1 |}
\newcommand{\tr}[1]{\operatorname{Tr}\!\left[#1\right]}

\renewcommand{\leq}{\leqslant}
\renewcommand{\geq}{\geqslant}

\renewcommand{\ge}{\geqslant}

\makeatletter
\newcommand*\bigcdot{\mathpalette\bigcdot@{.5}}
\newcommand*\bigcdot@[2]{\mathbin{\vcenter{\hbox{\scalebox{#2}{$\m@th#1\bullet$}}}}}
\makeatother

\definecolor{KKgreen}{RGB}{0,200,0}

\begin{document}
\maketitle
\begin{abstract}

    To understand the emergence of macroscopic irreversibility from microscopic reversible dynamics, the idea of coarse-graining plays a fundamental role. In this work, we develop a unified inferential framework for \textit{macroscopic states}, that is, coarse descriptions of microscopic quantum systems that can be inferred from macroscopic measurements. Building on quantum statistical sufficiency and Bayesian retrodiction, we characterize macroscopic states through equivalent abstract (algebraic) and explicit (constructive) formulations.

    Central to our approach is the notion of \textit{observational deficit}, which quantifies the degree of irretrodictability of a state relative to a prior and a measurement. This leads to a general definition of macroscopic entropy as an inferentially grounded measure of asymmetry under Bayesian inversion. We formalize this structure in terms of \textit{inferential reference frames}, defined by the pair consisting of a prior and a measurement, which encapsulate the observer’s informational perspective.
    
    We then formulate a resource theory of microscopicity, treating macroscopic states as free states and introducing a hierarchy of \textit{microscopicity-non-generating operations}. This theory unifies and extends existing resource theories of coherence, athermality, and asymmetry.

    Finally, we apply the framework to study quantum correlations under observational constraints, introducing the notion of \textit{observational discord} and deriving necessary and sufficient conditions for their vanishing in terms of information recoverability.

    This work is dedicated to Professor Ryszard Horodecki on the occasion of his 80th birthday, in deep admiration and gratitude for his pioneering contributions to quantum information theory.

\end{abstract}

\section{Introduction}
Since von Neumann’s introduction of quantum entropy~\cite{von1955mathematical}, significant efforts have been made to extend classical thermodynamic principles to quantum systems. One drawback of the original von Neumann entropy is that it does not naturally connect to macroscopic phenomena, such as the second law of thermodynamics. To fill the gap between macroscopic phenomena and microscopic theory, von Neumann proposed another type of entropy called \textit{macroscopic entropy}~\cite{von1955mathematical,vonNeumann1929translation}. This concept, long forgotten, has recently been rediscovered and mathematically generalized into what is known as \textit{observational entropy}~\cite{safranek2019a,safranek2019b,safranek2021brief,buscemi2022observational}, finding applications spanning quantum information, quantum thermodynamics, quantum inference, and resource theories~\cite{bai2024observational,riera2020finite,safranek2020classical,deutsch2020probabilistic,faiez2020typical,nation2020snapshots,strasberg2021clausius,hamazaki2022speed,modak2022observational,sreeram2023witnessing,schindler2023continuity,safranek2023work,safranek2023measuring,safranek2023ergotropic,teixido2024entropic,nagasawa2024generic}. 

The key idea is that we ultimately always observe quantum systems through the lens of a macroscopic measurement, which von Neumann defines as the list of all physical quantities that can be simultaneously measured by the observer. The latter, from a mathematical point of view, is simply represented by a positive operator-valued measure (POVM), which corresponds to the finest possible measurement. However, such a finest measurement is still not sufficient to extract all information for all microscopic states at once: i.e., there are situations in which information about the microscopic state is irretrievably lost. This is the reason why observational entropy can increase under unitary time evolution, while von Neumann microscopic entropy remains invariant~\cite{vonNeumann1929translation,von1955mathematical,nagasawa2024generic}. In this sense, the use of observational entropy has been advocated to explain the emergence of irreversible processes from an underlying reversible theory, including, for example, the increase in entropy observed in closed systems~\cite{nagasawa2024generic}. Beyond its foundational implications for entropy and irreversibility, the framework of observational entropy may also offer a novel perspective on quantum correlations.

Closely related to the above discussion is the mathematical fact that observational entropy is never less than the von Neumann entropy. A state whose observational entropy is equal to its von Neumann entropy is called {\it macroscopic}~\cite{buscemi2022observational, nagasawa2024generic}. Operationally, macroscopic states are those that are fully known to the macroscopic observer, i.e., states that can be fully recovered from the macroscopically accessible data alone~\cite{bai2024observational}. From such a Bayesian statistical perspective, the macroscopic state corresponding to a given microscopic state is nothing but the \textit{Bayesian inverse} retrodicted from the measurement outcomes distribution~\cite{PARZYGNAT-russo-2022-noncomm-Bayes-theorem,Parzygnat2023axiomsretrodiction,bai2024quantum} by means of the Petz recovery map~\cite{Parzygnat2023axiomsretrodiction,bai2024quantum}. In this sense, the observational entropy precisely becomes a measure of ``how irretrodictable'' a state is from the outcomes of a given measurement~\cite{bai2024observational,nagasawa2024generic}. 

The aim of this paper is to develop a mathematical framework for macroscopic states by extending the notion of observational entropy to a more general, retrodictive setting that incorporates arbitrary POVMs and quantum priors. Central to this extension is the introduction of \textit{observational deficit}, a measure of a state's irretrodictability relative to a given prior. Macroscopic states are then defined as those with zero observational deficit. Equivalently, they are the fixed points of a \textit{coarse-graining map}, i.e., the composition of a quantum-to-classical channel representing a POVM measurement and its associated Petz recovery map, constructed with respect to the chosen prior. As shown in Ref.~\cite{bai2024quantum}, this composition yields the optimal Bayesian inference that can be drawn from the measurement outcomes, grounding the notion of macroscopicity in a principled inferential framework.

Leveraging the theory of quantum statistical sufficiency~\cite{petz1986sufficient, petz1988sufficiency,petz2003monotonicity,jencova-petz-2006-sufficiency-survey,jenvcova2024recoverability}, we derive several complete and equivalent characterizations of the set of macroscopic states. A central result is the following: for any POVM and any quantum prior, there exists a unique projection-valued measure (PVM), that is at the same time compatible with the given POVM and that can be measured without disturbing the prior (the rigorous definition is given below in Definition~\ref{definition:mppp}). We refer to this PVM as the \textit{maximal projective post-processing} of the POVM with respect to the prior. Its existence and uniqueness are established by introducing a partial order on POVMs under classical post-processing. This structure may be of independent interest, as PVMs derived in this way can be interpreted as analogous to reference frames, but for inference rather than symmetry: they encode the minimal yet sufficient information required to perform optimal retrodiction, and are used to quantify the inferential asymmetry, or ``irretrodictability''~\cite{watanabe65}, of a given process. For this reason, we refer to the pair comprising the maximal PVM and the prior as the \textit{inferential reference frame} corresponding to a specific macroscopic observer.

Next, we investigate macroscopic states from a resource-theoretic perspective, aiming to characterize macroscopic irreversibility, such as that captured by the second law of thermodynamics, as a form of statistical irreversibility arising from information loss in macroscopic measurements~\cite{watanabe55,watanabe65}. We define a \textit{resource theory of microscopicity}, by specifying the corresponding resource-destroying map~\cite{liu2017resource} and introducing the relative entropy of microscopicity as a quantitative measure of irretrodictability: i.e., the extent to which microscopic details are irrecoverable from macroscopic observations. Within this framework, we characterize a hierarchy of free operations, which we name \textit{macroscopic operations}. Moreover, we show that this resource theory unifies and generalizes several prominent resource theories, including those of coherence~\cite{aberg2006quantifying,baumgratz2014quantifying,streltsov2017colloquium}, athermality~\cite{brandao2013resource,brandao2015second,buscemi-2015-fully-quantum-second-laws,Gour:2018aa-jennings-buscemi-2018}, and asymmetry~\cite{bartlett2007reference,gour2008resource,gour2009measuring}.

Building on this foundation, we introduce and investigate \emph{observational discord}, a measure designed to quantify correlations from an observer's perspective. We then establish the necessary and sufficient conditions for this correlation to vanish, framing them in terms of information recoverability. A key insight from our findings is that quantum correlations, such as \emph{entanglement}~\cite{einstein1935can,horodecki2009quantum}, \emph{deficit}~\cite{oppenheim2002thermodynamical,horodecki2003local,devetak2005distillation,horodecki2005local}, and \emph{discord}~\cite{ollivier2001quantum,modi2012classical,ming2018quantum}, should not be viewed as absolute properties that are merely present or absent. Instead, we show they are context-dependent resources whose very visibility and utility are determined by the observer’s \emph{(quantum) reference frame}~\cite{bartlett2007reference,gour2008resource,fewster2025quantum}. This observer-dependent perspective opens new avenues for investigating quantum information processing in realistic scenarios, which are often characterized by limited control and imperfect access to quantum systems.

This paper is organized as follows. In Section~\ref{section:notations}, we introduce the basic notation and concepts used throughout the work. Section~\ref{section:macroscopic} presents the key definitions of observational deficit, coarse-graining maps, and macroscopic states. In Section~\ref{section:characterization}, we define projective post-processings and prove the existence and uniqueness of the maximal projective post-processing for any POVM and prior. We then analyze the algebraic structure of the set of macroscopic states with general quantum priors and characterize them in terms of these projective measurements. In Section~\ref{section:microscopicity}, we develop the resource theory of microscopicity, treating macroscopic states as free and introducing a hierarchy of macroscopic operations and resource measures quantifying retrodictive irreversibility. Finally, in Section~\ref{section:correlation}, we introduce the concept of observational discord and derive the necessary and sufficient conditions under which it vanishes.

\section{Notations}
\label{section:notations}
In this paper, we consider a quantum system with a finite $d$-dimensional Hilbert space $\sH$. The set of linear operators on a Hilbert space $\sH$ is denoted by $\mcl{L}(\sH)$. We also denote the set of quantum states on $\mcl{H}$ by $\mcl{S}(\sH)$. The maximally mixed (or \textit{uniform}) state is denoted by $u\coloneq \openone/d$. Furthermore, for a \textit{completely positive trace-preserving} (CPTP) map (or \textit{channel}) $\mcl{E}$, we write as $\mcl{E}^*$ its adjoint with respect to the Hilbert-Schmidt inner product. Any observation on the quantum system can be represented by a \textit{positive operator-valued measure} (POVM) $\povm{P}=\{P_x\}_{x\in\set{X}}$, i.e., a family of positive semidefinite operators $P_x\geq 0$ such that $\sum_xP_x=\openone$. When all elements of a POVM are projections, i.e., $P_xP_{x'}=\delta_{x,x'}P_x$, the former is called a \textit{projection-valued measure} (PVM). In addition, POVMs can be put in a one-to-one correspondence with quantum-classical channels of the form
\begin{align}\label{eq:qc-channel}
    \mcl{M}_{\povm{P}}(\bigcdot)\coloneq \sum_{x\in\set{X}}\tr{P_x\;\bigcdot}\ketbra{x}\;,
\end{align}
where $\ket{x}$ are a set of orthonormal unit vectors in a suitable Hilbert space representing the classical outcomes.
The \emph{Umegaki quantum relative entropy}~\cite{umegaki-q-rel-ent-1961,umegaki1962conditional} is defined as
\begin{align}
    D(\rho\|\sigma)\coloneq \tr{\rho(\log\rho-\log\sigma)}\;,
\end{align}
where $\sigma$ (which we assume invertible for simplicity) is a reference (or \textit{prior}) operator.
Then, the von Neumann (microscopic) entropy is defined by
\begin{align}
    S(\rho)&\coloneq -D(\rho\|\openone)=-\tr{\rho\log\rho}\;.
\end{align}
The logarithm is taken in base 2.

Let us now consider an arbitrary quantum channel $\mcl{E}$ and a state $\gm$, which in the following is either assumed to be invertible, otherwise the whole discussion can be restricted to its support. Then, the corresponding \emph{Petz map} is defined as
\begin{align}
    \mcl{R}_{\mcl{E}, \gm}(\bigcdot) \coloneq  \gm^{\frac{1}{2}}\mcl{E}^{*} \left[\mcl{E}(\gm)^{-\frac{1}{2}} (\bigcdot) \mcl{E}(\gm)^{-\frac{1}{2}}\right]\gm^{\frac{1}{2}}\;.
\end{align}
A central result in quantum mathematical statistics is the following statement about sufficient statistics, which was characterized by Petz already in the 1980s~\cite{petz1986sufficient,petz1988sufficiency,petz2003monotonicity,petz2007quantum}: for any quantum state $\rho$, the condition
\begin{align}
    D(\rho\|\gm)=D(\mcl{E}(\rho)\|\mcl{E}(\gm))\label{equation:eq-dpi}
\end{align}
is equivalent to
\begin{align}
    \mcl{R}_{\mcl{E},\gm}\circ\mcl{E}(\rho)=\rho\;.\label{equation:eq-petz}
\end{align}
The above theorem gives the equality condition for the universally valid \emph{quantum data processing inequality} (DPI):
\begin{align}
    D(\rho\|\gm)\geq D(\mcl{E}(\rho)\|\mcl{E}(\gm))\;.
\end{align}

Given two POVMs $\povm{P}=\{P_x\}_{x\in\set{X}}$ and $\povm{Q}=\{Q_y\}_{y\in\set{Y}}$, we write 
\begin{align}\label{eq:post-proc-preorder}
    \povm{Q}\preceq\povm{P}
\end{align}
whenever there exists a conditional probability distribution $p(y|x)$ such that 
\begin{align}
    Q_y=\sum_xp(y|x)P_x
\end{align}
for all $y\in\set{Y}$. In this case, we say that $\povm{Q}$ is a classical post-processing of $\povm{P}$~\cite{martens1990nonideal,buscemi-2005-clean-POVMs}. We will often refer to the following lemma.

\begin{lemma}[\cite{nagasawa2024generic}]
\label{lemma:groupings}
    Suppose that $\povm{Q}=\{Q_y\}_{y\in\set{Y}}$ is a PVM, i.e., $Q_y Q_{y'} = \delta_{yy'} Q_y$ for all $y, y' \in \set{Y}$. Suppose also that there exists another POVM $\povm{P}=\{P_x\}_{x\in\set{X}}$ such that $\povm{Q}\preceq \povm{P}$. Then, the post-processing transforming $\povm{P}$ into $\povm{Q}$ is deterministic, i.e., 
    \begin{align}
        p(y|x)\in\{0,1\}
    \end{align}
    for all $x$ and $y$.
\end{lemma}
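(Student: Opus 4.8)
The plan is to exploit the idempotency of the projections $Q_y$ together with the fact that the trace of a product of two positive semidefinite operators is nonnegative. Fix an outcome $y\in\set{Y}$ and abbreviate $a_x\coloneq p(y|x)\in[0,1]$. Since $\sum_x P_x=\openone$, the hypothesis $Q_y=\sum_x a_x P_x$ comes paired with the complementary identity $\openone-Q_y=\sum_x(1-a_x)P_x$, and both sides are positive semidefinite combinations of the $P_x$.

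Next I would use that $Q_y$ is a projection, so $Q_y(\openone-Q_y)=Q_y-Q_y^2=0$, and in particular $\tr{Q_y(\openone-Q_y)}=0$. Substituting the two expansions above yields
\begin{align}
    0=\sum_{x,x'\in\set{X}}a_x\,(1-a_{x'})\,\tr{P_x P_{x'}}\;.
\end{align}
Every summand here is nonnegative: the coefficients $a_x$ and $1-a_{x'}$ lie in $[0,1]$, and $\tr{P_x P_{x'}}=\tr{P_x^{1/2}P_{x'}P_x^{1/2}}\geq 0$ since it is the trace of a positive semidefinite operator. Hence every summand vanishes; in particular the diagonal ones do, i.e. $a_x(1-a_x)\tr{P_x^2}=0$ for every $x$.

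Finally, for any $x$ with $P_x\neq 0$ one has $\tr{P_x^2}>0$, which forces $a_x(1-a_x)=0$, that is $p(y|x)\in\{0,1\}$; since $y$ was arbitrary, this is the claim. (If some $P_x=0$, the corresponding values $p(y|x)$, $y\in\set{Y}$, are immaterial to the decomposition and may be taken deterministic without loss of generality, so one may as well assume all elements of $\povm{P}$ are nonzero from the outset.)

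I do not expect a genuine obstacle here; the only subtlety worth flagging is the temptation to manipulate the operator products $P_x P_{x'}$ directly — these need not commute, and $\povm{P}$ is not assumed projective — whereas passing immediately to the trace turns the idempotency condition $Q_y^2=Q_y$ into a sum of manifestly nonnegative numbers, which is precisely what makes the argument close.
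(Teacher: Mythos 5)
Your argument is correct: expanding $\tr{Q_y(\openone-Q_y)}=0$ in the two positive combinations and using $\tr{P_xP_{x'}}\geq 0$ term by term is a clean, self-contained proof, and your caveat about zero elements $P_x$ matches the paper's later restriction to $\set{X}_+$. The paper itself only cites this lemma from prior work without reproducing a proof, and your trace-positivity route is essentially the standard argument for it, so nothing further is needed.
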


\section{Macroscopic states}
\label{section:macroscopic}
Inspired by the terminology introduced in Ref.~\cite{horodecki2005local}, we begin with the following definition~\cite{bai2024observational}.
\begin{definition}[Observational deficit]
\label{definition:deficit}
    For any POVM $\povm{P}=\{P_x\}_{x\in\set{X}}$ and any pair of quantum states $\rho$ and $\gm>0$, the \emph{observational deficit} of $\povm{P}$ with respect to the pair (dichotomy) $(\rho,\gamma)$ is defined as
    \begin{align}
        \delta_{\povm{P}}(\rho\|\gm)\coloneq D(\rho\|\gm)-D(\mcl{M}_{\povm{P}}(\rho)\|\mcl{M}_{\povm{P}}(\gm))\geq0\;.
    \end{align}
\end{definition}
The observational deficit defined here does not merely quantify information loss in a statistical sense: it also implicitly reflects the degradation of observable correlations under macroscopic measurements. This view, which directly connects with the notion of \textit{quantum deficit} developed by Horodecki and collaborators~\cite{oppenheim2002thermodynamical,horodecki2003local,devetak2005distillation,horodecki2005local}, will be explored in Section~\ref{section:correlation}.

The observational deficit is zero if and only if the quantum-classical channel $\mcl{M}_{\povm{P}}$ is sufficient with respect to the quantum dichotomy $(\rho, \gm)$, i.e., the corresponding Petz map $\mcl{R}_{\mcl{M}_{\povm{P}}, \gm}$ satisfies both $[\mcl{R}_{\mcl{M}_{\povm{P}}, \gm}\circ\mcl{M}_{\povm{P}}](\rho)=\rho$ and $[\mcl{R}_{\mcl{M}_{\povm{P}}, \gm}\circ\mcl{M}_{\povm{P}}](\gamma)=\gamma$, the latter by construction~\cite{petz1986sufficient, petz1988sufficiency,petz2003monotonicity,jencova-petz-2006-sufficiency-survey,jenvcova2024recoverability}.

Since the channel to be recovered, i.e., $\mcl{M}_{\povm{P}}$, is a quantum-to-classical measurement channel, its corresponding Petz transpose map is the unique solution to the \textit{minimum change principle}~\cite{bai2024quantum}. It thus has a compelling operational interpretation as the quantum analog of Bayes' rule even when $\delta_{\povm{P}}(\rho\|\gm)>0$. This observation motivates us to define\footnote{Note that the definition we give here of ``coarse-graining'' differs from that used in~\cite{buscemi2022observational}.} \emph{coarse-graining maps} as follows.

\begin{definition}[Coarse-graining maps]
\label{definition:coarse-graining-map}
    Let $\povm{P}$ be a POVM and let $\gm$ be a quantum state. Then, the \emph{coarse-graining map} with respect to $\povm{P}$ and $\gm$ is defined as
    \begin{align}
        \mcl{C}_{\povm{P},\gm}(\bigcdot)&\coloneq [\mcl{R}_{\mcl{M}_{\povm{P}}, \gm}\circ\mcl{M}_{\povm{P}}](\bigcdot)\\
        &=\sum_x\tr{P_x(\bigcdot)}\frac{\gm^{\frac{1}{2}}P_x\gm^{\frac{1}{2}}}{\tr{P_x\gm}}\;.\label{eq:CGMdfn}
    \end{align}
\end{definition}

It is immediate to verify that all coarse-grainings are channels of the measure-and-prepare kind and, as such, destroy entanglement when applied locally~\cite{horodecki2003entanglement}. This observation will play an important role in Section~\ref{section:correlation}.

\begin{definition}[Coarse-grained state and macroscopic state]
\label{definition:coarse-grained}
    Let $\rho$ be a quantum state. Then, $\mcl{C}_{\povm{P},\gm}(\rho)$ is called the \emph{coarse-grained state} corresponding to $\rho$. Furthermore, when 
    \begin{align}
        \mcl{C}_{\povm{P},\gm}(\rho)=\rho\;, 
    \end{align}
    $\rho$ is said to be \emph{macroscopic} with respect to the POVM $\povm{P}$ and the quantum prior $\gm$.
\end{definition}

As explained below, given an observation $\povm{P}$ and a prior $\gamma$, $\mcl{C}_{\povm{P},\gm}$ corresponds to the coarse-graining channel acting on quantum states, while $\mcl{C}_{\povm{P},\gm}^*$ corresponds to the coarse-graining \textit{post-processing} acting on POVMs outcomes.

\begin{remark}[Coarse-graining and POVM post-processing]
    Let $\povm{Q}=\{Q_y\}_y$ be a POVM. Then, 
    \begin{align}
        \mcl{C}_{\povm{P},\gm}^*(Q_y)&=\sum_x\tr{Q_y\frac{\gm^{\frac{1}{2}}P_x\gm^{\frac{1}{2}}}{\tr{P_x\gm}}}P_x\\
        &=\sum_xq(y|x)P_x\;,
    \end{align}
    where $q(y|x)\coloneq \tr{Q_y\left(\gm^{\frac{1}{2}}P_x\gm^{\frac{1}{2}}/\tr{P_x\gm}\right)}$. By definition, this means that for any $\povm{Q}$
    \begin{align}
        \mcl{C}_{\povm{P},\gm}^*(\povm{Q})\preceq\povm{P}\;.
    \end{align}
    Thus, the adjoint of the coarse-graining map for quantum states corresponds to POVMs post-processing. The interpretation is very natural: if we perform a further observation (the POVM $\povm{Q}$) on a quantum state already coarse-grained under $\povm{P}$, it is as if we were observing a classical post-processing of $\povm{P}$ on the original state (before the coarse-graining). That is, after the state has been coarse-grained with respect to a POVM $\povm{P}$, it only contains information that can be perfectly inferred from $\povm{P}$, and nothing more.
\end{remark} 

\subsection{Observational entropy}
\label{subsection:observational}
\begin{definition}[Observational entropy]
\label{remark:OE}
    Let $\povm{P}=\{P_x\}_{x\in\set{X}}$ be a POVM. Then for arbitrary $\rho\in\mcl{S}(\sH)$,
    \begin{align}
        S_{\povm{P}}(\rho)&\coloneq S(\rho)+\delta_{\povm{P}}(\rho\|u)
    \end{align}
    is called the \emph{observational entropy} of $\rho$ with respect to $\povm{P}$. (Notice the uniform prior.)
\end{definition}

Notice that while von Neumann originally defined macroscopic entropy only for PVMs~\cite{von1955mathematical,vonNeumann1929translation}, the \emph{observational entropy} extends the definition for arbitrary POVMs~\cite{safranek2019a,safranek2019b,safranek2021brief,buscemi2022observational}.

In the case of observational entropy, as conventionally defined, the prior state is assumed to be the uniform state $u$, but more generally, we can consider arbitrary prior distributions~\cite{bai2024observational,schindler2025unification}. For example, the prior distribution $\gm$ can be viewed as a generalization of the equilibrium state.

Von Neumann entropy is an entropy derived from thermo-statistical discussions based on the assumption that the second law of thermodynamics is universally valid~\cite{von1955mathematical}. On the other hand, macroscopic entropy is an entropy calculated under the additional assumption that states that cannot be distinguished by a macroscopic observer (PVM) are considered to be the same state.

In order to formalize this, let us assume that a macroscopic observer corresponding to a PVM $\povm{\Pi}=\{\Pi_y\}_{y=1}^m$ cannot distinguish between $\rho$ and $\rho'$. This means that all PVM elements, $\Pi_1, \Pi_2,\dots, \Pi_m$, have the same expectation value:
\begin{align}
\label{equation:macro}
    \tr{\rho\ \Pi_y}=\tr{\rho'\ \Pi_y}\;,\qquad\f y\;.
\end{align}
Now we have, for any $\rho\in\mcl{S}(\sH)$,
\begin{align}
\label{equation:macro-state}
    \tr{\rho\ \Pi_y}=\tr{\left(\sum_{n=1}^m \tr{\rho \Pi_n}\frac{\Pi_n}{\tr{\Pi_n}}\right)\Pi_y}\;,\qquad\f y\;.
\end{align}
Thus, $\rho$ and $\rho_{\povm{\Pi},u}\coloneq\sum_{n=1}^m \tr{\rho \Pi_n}\Pi_n/\tr{\Pi_n}$ are indistinguishable for macroscopic observer $\povm{\Pi}$ and have the same macroscopic entropy. Here, the macroscopic entropy of the quantum state $\rho$ is given by the von Neumann entropy of the corresponding macroscopic state $\rho_{\povm{\Pi},u}$~\cite{von1955mathematical}:
\begin{align}
    S(\rho_{\povm{\Pi},u})&=-\tr{\rho_{\povm{\Pi},u}\log\rho_{\povm{\Pi},u}}\\
    &=-\sum_{n=1}^m\tr{\Pi_n\rho}\log \frac{\tr{\Pi_n\rho}}{\tr{\Pi_n}}\\
    &=S_{\povm{\Pi}}(\rho)\\
    &=S_{\povm{\Pi}}(\rho_{\povm{\Pi},u})\;.
\end{align}
If we consider such a macroscopic state $\rho_{\povm{\Pi},u}$ as the initial state, we can then show that observational entropy will generally increase with unitary time-evolution~\cite{strasberg-winter-2021-PRX-quantum,nagasawa2024generic}. On the contrary, in the microscopic case, von Neumann entropy cannot increase (in fact, it remains constant) under unitary time evolution.

\section{Equivalent characterizations of macroscopic states and inferential reference frames}
\label{section:characterization}

To systematically understand which features of a quantum state, including correlations, survive macroscopic coarse-graining, we now establish equivalent characterizations of macroscopicity. These will allow us to later identify the correlation content of macroscopic versus microscopic states. 

The theorem proved in this section, which generalizes the known case of rank-1 PVM~\cite{ohya1993quantum,hayashi2017quantum}, is one of the main results of this paper. It provides four, from purely algebraic to explicitly constructive, equivalent characterizations of macroscopic states for arbitrary POVMs and priors. Crucial for our discussion is the concept of \textit{maximal projective post-processing} (MPPP), which plays a central role in our framework as an ``inferential symmetry reference frame'', i.e., a projection-valued measure that captures the information recoverable under both a measurement and a prior. The following definition uses the post-processing preorder defined in Eq.~\eqref{eq:post-proc-preorder}.

\begin{definition}[Maximal projective post-processing (MPPP)]
\label{definition:mppp}
    A PVM $\povm{\Pi}$ is called a \emph{$\gamma$-commuting projective post-processing} (PPP) of $\povm{P}$ if $\povm{\Pi}\preceq\povm{P}$ and all its elements commute with $\gamma$.
    A $\gamma$-commuting PPP $\povm{\Pi}$ of $\povm{P}$ is said to be \emph{maximal} if $\povm{\Pi}'\preceq\povm{\Pi}$ for any $\gamma$-commuting PPP $\povm{\Pi}'$ of $\povm{P}$.
\end{definition}

The proof that a $\gamma$-commuting MPPP always exists unique for any pair $(\povm{P},\gamma)$ is postponed to Section~\ref{subsection:maximal}. In what follows, we assume that the prior $\gamma$ is invertible and that the POVM $\povm{P}$ has no zero-elements, which implies that also $\mcl{M}_{\povm{P}}(\gm)$ is invertible. We are now ready to state the theorem.

\begin{theorem}[Macroscopic states with general quantum prior]
\label{theorem:macroscopic}
    Let $\povm{P}=\{P_x\}_{x\in\set{X}}$ be a POVM, $\gm$ be an invertible quantum state, and $\povm{\Pi}_{\povm{P},\gm}\equiv\povm{\Pi}=\{\Pi_y\}_{y\in\set{Y}}$ be the corresponding MPPP, see Definition~\ref{definition:mppp}. Then, for any quantum state $\rho$, the following conditions are equivalent:
    \begin{enumerate}
        \item $\delta_{\povm{P}}(\rho\|\gm)=0$;\label{theorem:macroscopic:deficit}
        \item $\mcl{C}_{\povm{P},\gm}(\rho)=\rho$;\label{theorem:macroscopic:coarse-graining}
        \item $\Delta_{\povm{P},\gm}(\rho)=\rho$, where $\Delta_{\povm{P},\gm}$ is an idempotent CPTP map defined by 
        \begin{align}
            \Delta_{\povm{P},\gm}(\bigcdot)\coloneq\sum_{y\in\set{Y}}\tr{\Pi_y(\bigcdot)}\frac{\Pi_y\gm}{\tr{\Pi_y\gm}}\;;\label{eq:RDMrtm}
        \end{align}\label{theorem:macroscopic:resource-destroying}
        \item there exist coefficients $c_y\geq0$ such that $\rho=\sum_{y\in\set{Y}}c_y\Pi_y\gm$.\label{theorem:macroscopic:explicit}
    \end{enumerate}
\end{theorem}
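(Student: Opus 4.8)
The plan is to establish the cycle $(\ref{theorem:macroscopic:deficit}) \Leftrightarrow (\ref{theorem:macroscopic:coarse-graining})$ first, then $(\ref{theorem:macroscopic:coarse-graining}) \Rightarrow (\ref{theorem:macroscopic:explicit}) \Rightarrow (\ref{theorem:macroscopic:resource-destroying}) \Rightarrow (\ref{theorem:macroscopic:coarse-graining})$, closing the loop. The equivalence $(\ref{theorem:macroscopic:deficit}) \Leftrightarrow (\ref{theorem:macroscopic:coarse-graining})$ is essentially a restatement of Petz's sufficiency theorem: by Definition~\ref{definition:deficit}, $\delta_{\povm{P}}(\rho\|\gm)=0$ means $D(\rho\|\gm) = D(\mcl{M}_{\povm{P}}(\rho)\|\mcl{M}_{\povm{P}}(\gm))$, which by \eqref{equation:eq-dpi}--\eqref{equation:eq-petz} (applied with $\mcl{E} = \mcl{M}_{\povm{P}}$, noting $\mcl{M}_{\povm{P}}(\gm)$ is invertible by hypothesis) is equivalent to $\mcl{R}_{\mcl{M}_{\povm{P}},\gm}\circ\mcl{M}_{\povm{P}}(\rho) = \rho$, i.e. $\mcl{C}_{\povm{P},\gm}(\rho)=\rho$ by Definition~\ref{definition:coarse-graining-map}. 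The implication $(\ref{theorem:macroscopic:coarse-graining}) \Rightarrow (\ref{theorem:macroscopic:explicit})$ is then immediate from the explicit form \eqref{eq:CGMdfn}: if $\rho = \mcl{C}_{\povm{P},\gm}(\rho) = \sum_x \tr{P_x\rho}\,\gm^{1/2}P_x\gm^{1/2}/\tr{P_x\gm}$, one would like to rewrite this as a combination of the $\Pi_y\gm$; but this requires knowing that $\gm^{1/2}P_x\gm^{1/2}$, suitably summed, organizes into the MPPP structure. I expect this to be the main obstacle, and it is where the structural results of Section~\ref{subsection:maximal} (existence/uniqueness of the MPPP and its relation to $\povm{P}$ via Lemma~\ref{lemma:groupings}) must be invoked.

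For the key structural step, the idea is the following. Since $\delta_{\povm{P}}(\rho\|\gm) = 0$ forces $\mcl{M}_{\povm{P}}$ to be sufficient for $(\rho,\gm)$, the Petz-recovered state $\rho$ lies in the image of $\mcl{C}_{\povm{P},\gm}$, which is a measure-and-prepare channel whose output is supported on the operator system spanned by $\{\gm^{1/2}P_x\gm^{1/2}\}_x$. I would argue that this span, intersected with the constraint of being $\mcl{C}_{\povm{P},\gm}$-fixed, coincides with $\sn\{\Pi_y\gm\}_y$ where $\povm{\Pi}$ is the MPPP: intuitively, the fixed-point set of $\mcl{C}_{\povm{P},\gm}$ is exactly the ``recoverable'' algebra, which by the theory of quantum sufficiency is a (twisted) matrix algebra, and its extreme structure is captured by the coarsest PVM $\povm{\Pi}\preceq\povm{P}$ commuting with $\gm$ — precisely the MPPP. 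Lemma~\ref{lemma:groupings} guarantees the post-processing $\povm{P}\to\povm{\Pi}$ is deterministic, so the $\Pi_y$ are obtained by grouping the $P_x$, which lets one resum $\sum_x \tr{P_x\rho}\gm^{1/2}P_x\gm^{1/2}/\tr{P_x\gm}$ into $\sum_y c_y \Pi_y\gm$ with $c_y \geq 0$. One subtlety: $\Pi_y\gm$ need not be self-adjoint, but $\Pi_y\gm = \gm\Pi_y = \gm^{1/2}\Pi_y\gm^{1/2}$ since $\Pi_y$ commutes with $\gm$, so it is in fact a positive operator, and $\rho = \sum_y c_y \Pi_y\gm$ is automatically a valid density operator once $\sum_y c_y\tr{\Pi_y\gm} = 1$.

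The remaining two implications are routine. For $(\ref{theorem:macroscopic:explicit}) \Rightarrow (\ref{theorem:macroscopic:resource-destroying})$: substitute $\rho = \sum_{y'} c_{y'}\Pi_{y'}\gm$ into \eqref{eq:RDMrtm}; using $\tr{\Pi_y\Pi_{y'}\gm} = \delta_{yy'}\tr{\Pi_y\gm}$ (orthogonality of the PVM elements and their commutation with $\gm$), each term collapses and one recovers $\sum_y c_y\Pi_y\gm = \rho$. For $(\ref{theorem:macroscopic:resource-destroying}) \Rightarrow (\ref{theorem:macroscopic:coarse-graining})$: one shows $\mcl{C}_{\povm{P},\gm}$ and $\Delta_{\povm{P},\gm}$ have the same fixed-point set. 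Here I would use that $\povm{\Pi}$ is a post-processing of $\povm{P}$ — so $\mcl{M}_{\povm{\Pi}}$ factors through $\mcl{M}_{\povm{P}}$ — together with the sufficiency of $\mcl{M}_{\povm{\Pi}}$ for $(\rho,\gm)$ whenever $\Delta_{\povm{P},\gm}(\rho)=\rho$ (since $\Delta_{\povm{P},\gm} = \mcl{R}_{\mcl{M}_{\povm{\Pi}},\gm}\circ\mcl{M}_{\povm{\Pi}}$, by the same computation as in \eqref{eq:CGMdfn} specialized to a PVM commuting with $\gm$). Chaining data-processing through $\mcl{M}_{\povm{P}}$ between $\mcl{M}_{\povm{\Pi}}$-sufficiency and the DPI then forces $\delta_{\povm{P}}(\rho\|\gm)=0$, hence $(\ref{theorem:macroscopic:coarse-graining})$ by the first equivalence. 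I would double check that no circularity arises — the MPPP's defining maximality property should only be needed in the $(\ref{theorem:macroscopic:coarse-graining})\Rightarrow(\ref{theorem:macroscopic:explicit})$ direction.
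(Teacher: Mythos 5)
Your overall architecture is fine, and two pieces of it are solid: the equivalence (\ref{theorem:macroscopic:deficit})$\iff$(\ref{theorem:macroscopic:coarse-graining}) via Petz's sufficiency theorem is exactly the paper's argument, and your route for (\ref{theorem:macroscopic:resource-destroying})$\Rightarrow$(\ref{theorem:macroscopic:coarse-graining}) is correct and in fact more elementary than the paper's: since $[\Pi_y,\gm]=0$, one has $\Delta_{\povm{P},\gm}=\mcl{R}_{\mcl{M}_{\povm{\Pi}},\gm}\circ\mcl{M}_{\povm{\Pi}}$, so $\Delta_{\povm{P},\gm}(\rho)=\rho$ gives $D(\rho\|\gm)=D(\mcl{M}_{\povm{\Pi}}(\rho)\|\mcl{M}_{\povm{\Pi}}(\gm))$, and the sandwich $D(\rho\|\gm)\geq D(\mcl{M}_{\povm{P}}(\rho)\|\mcl{M}_{\povm{P}}(\gm))\geq D(\mcl{M}_{\povm{\Pi}}(\rho)\|\mcl{M}_{\povm{\Pi}}(\gm))$ (using $\povm{\Pi}\preceq\povm{P}$) forces $\delta_{\povm{P}}(\rho\|\gm)=0$; the paper instead obtains (\ref{theorem:macroscopic:coarse-graining})$\iff$(\ref{theorem:macroscopic:resource-destroying}) from Theorem~\ref{theorem:rdm-coarse}, i.e.\ via Ces\`aro means, conditional expectations and Takesaki's theorem. (Minor slip: the MPPP is the \emph{finest} $\gamma$-commuting projective post-processing of $\povm{P}$, i.e.\ maximal in the preorder, not the "coarsest".)

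The genuine gap is the implication (\ref{theorem:macroscopic:coarse-graining})$\Rightarrow$(\ref{theorem:macroscopic:explicit}), which you correctly identify as "the main obstacle" but then settle by assertion. You claim that the fixed points of $\mcl{C}_{\povm{P},\gm}$ are "captured by" the MPPP, i.e.\ lie in $\sn\{\Pi_y\gm\}_{y}$; that identification \emph{is} the hard content of the theorem, and the Section~\ref{subsection:maximal} results you invoke (existence/uniqueness of the MPPP, Lemma~\ref{lemma:groupings}) do not deliver it. Concretely, $\mcl{C}_{\povm{P},\gm}(\rho)=\rho$ only gives $\rho=\gm^{1/2}\bigl(\sum_x\beta_xP_x\bigr)\gm^{1/2}$ with $\beta_x=\tr{P_x\rho}/\tr{P_x\gm}$, and your "resum" step requires $\beta_x$ to be constant on each block $\set{X}_y$ of the irreducible $\gamma$-disconnected partition — precisely the statement that must be proved. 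Note also that the natural classical attempt (tracing the fixed-point equation against $P_{x'}$) only constrains $\beta$ through the $\gm$-weighted overlaps $\tr{P_{x'}\gm^{1/2}P_x\gm^{1/2}}$, so it cannot enforce agreement between outcomes that the MPPP merges only because $P_xP_{x'}\neq0$. The paper closes this with operator-level machinery: Lemma~\ref{lemma:dual} (Petz's characterization of fixed points of the dual map via the multiplicative-domain condition $\mcl{M}_{\povm{P}}^*(B^\dagger B)=\mcl{M}_{\povm{P}}^*(B)^\dagger\mcl{M}_{\povm{P}}^*(B)$ together with $[\mcl{M}_{\povm{P}}^*(B),\gm]=0$) and the trace computation in the proof of Theorem~\ref{theorem:dual}, run for both $\xi=\openone$ and $\xi=\gm$, which is exactly what forces $\alpha_x=\alpha_{x'}$ whenever $P_xP_{x'}\neq0$ or $P_x\gm P_{x'}\neq0$; Theorem~\ref{theorem:rdm-coarse} then transfers this to the state picture. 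Without some version of that argument (or the conditional-expectation route), your chain does not close, so the proposal as written does not prove the theorem.
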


By the equality condition of DPI, Eqs.~(\ref{equation:eq-dpi}) and~(\ref{equation:eq-petz}), we have~(\ref{theorem:macroscopic:deficit})$\iff$(\ref{theorem:macroscopic:coarse-graining}). We have~(\ref{theorem:macroscopic:resource-destroying})$\iff$(\ref{theorem:macroscopic:explicit}) by the definition of $\Delta_{\povm{P},\gm}$ and the fact that $\povm{\Pi}$ is a PVM whose elements commute with $\gm$. The proof of the remaining equivalence, i.e.,~(\ref{theorem:macroscopic:coarse-graining})$\iff$(\ref{theorem:macroscopic:resource-destroying}) is given below,  as Theorem~\ref{theorem:rdm-coarse}. 

\begin{remark}
    In \cite[Lemma 4.1]{petz2003monotonicity} it is stated that if 
    \begin{align}
        \delta_{\povm{P}}(\rho\|\gm)=0\;,\label{equation:deficit}
    \end{align}
    then
    \begin{align}
        [P_x,\gm]=0\label{equation:commute}
    \end{align}
    for all $x\in\set{X}$. However, the above does not hold in general. Indeed, regardless of how we choose the POVM $\povm{P}=\{P_x\}_{x\in\set{X}}$ and the prior $\gm$, the latter is \emph{always} a macroscopic state (by construction), even if we choose it so that $[P_x,\gm]\neq 0$. Only in the particular case when $\gamma=u$, then, all macroscopic states must commute with the initial POVM~\cite{nagasawa2024generic}.
\end{remark}

\begin{remark}
    In hindsight, knowing that $\delta_{\povm{P}_\gm}(\rho\|\gm)=0$ implies $[\rho,\gm]=0$, it is possible to characterize macroscopic states using, instead of $\povm{P}$, its pinched version $\povm{P}_\gm$---pinched with respect to an eigenbasis of $\gm$. Our approach has the advantage of showing, in an explicitly constructive way, that the condition $\delta_{\povm{P}_\gm}(\rho\|\gm)=0$ implies $[\rho,\gm]=0$.
\end{remark}

\subsection{Maximal projective post-processing}
\label{subsection:maximal}

In what follows, we prove the existence and uniqueness of the $\gamma$-commuting MPPP defined in Definition~\ref{definition:mppp}. The statement appears as Corollary~\ref{corollary:mppp-existence-uniqueness}, and its proof is broken into a few preceding Lemmas and a Proposition.

We fix an arbitrary prior $\gamma$ and an arbitrary POVM $\povm{P}=\{P_x\}_{x\in\mathcal{X}}$, and define $\mathcal{X}_+\coloneq \{x\in\mathcal{X}|P_x\neq0\}$. 
We begin with the following lemma:

\begin{lemma}\label{lemma:partition}
    Let $\povm{\Pi}=\{\Pi_y\}_{y\in\mathcal{Y}}$ be a set of projectors.
    Then, $\povm{\Pi}$ is a PVM satisfying $\povm{\Pi}\preceq \povm{P}$ if and only if there exists a disjoint partition of $\mathcal{X}_+$ into $\{\mathcal{X}_y\}_{y\in\mathcal{Y}}$ such that $\Pi_y=\sum_{x\in\mathcal{X}_y}P_x$ for all $y\in\mathcal{Y}$.
\end{lemma}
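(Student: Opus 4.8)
The plan is to prove Lemma~\ref{lemma:partition} by showing each direction separately, with the forward direction being the substantive one.

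For the ``if'' direction: suppose we have a disjoint partition $\{\mathcal{X}_y\}_{y\in\mathcal{Y}}$ of $\mathcal{X}_+$ with $\Pi_y = \sum_{x\in\mathcal{X}_y} P_x$. First I would check that $\povm{\Pi}$ is a valid POVM: positivity is clear since each $P_x \geq 0$, and $\sum_y \Pi_y = \sum_{x\in\mathcal{X}_+} P_x = \sum_{x\in\mathcal{X}} P_x = \openone$ because the zero elements contribute nothing. The post-processing relation $\povm{\Pi}\preceq\povm{P}$ holds via the deterministic conditional distribution $p(y|x) = 1$ iff $x\in\mathcal{X}_y$ (extended arbitrarily on $\mathcal{X}\setminus\mathcal{X}_+$). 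The only genuine content here is that $\povm{\Pi}$ is given to be a set of projectors by hypothesis, so it is automatically a PVM once we know it is a POVM — I should be careful to note that being a projector-valued POVM is the same as being a PVM (orthogonality $\Pi_y\Pi_{y'} = \delta_{yy'}\Pi_y$ for $y\neq y'$ follows from $\Pi_y + \Pi_{y'} \leq \openone$ together with both being projectors).

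For the ``only if'' direction: suppose $\povm{\Pi}$ is a PVM with $\povm{\Pi}\preceq\povm{P}$. By Lemma~\ref{lemma:groupings}, the post-processing $p(y|x)$ is deterministic, so it is given by a function $f:\mathcal{X}\to\mathcal{Y}$, i.e.\ $p(y|x) = \delta_{y,f(x)}$. Define $\mathcal{X}_y \coloneq \{x\in\mathcal{X}_+ \mid f(x) = y\}$. These sets are disjoint by construction and their union is $\mathcal{X}_+$. Then $\Pi_y = \sum_{x\in\mathcal{X}} p(y|x) P_x = \sum_{x : f(x) = y} P_x = \sum_{x\in\mathcal{X}_y} P_x$, where the last equality drops the zero elements. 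The one subtlety is whether some $\mathcal{X}_y$ could be empty, making $\Pi_y = 0$; if we want a genuine partition into nonempty blocks we may simply discard indices $y$ with $\Pi_y = 0$, or allow empty blocks in the statement — I would add a remark that zero elements of $\povm{\Pi}$ can be removed without loss.

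The main obstacle, such as it is, is mostly bookkeeping: handling the distinction between $\mathcal{X}$ and $\mathcal{X}_+$ cleanly, and making sure the appeal to Lemma~\ref{lemma:groupings} is correctly set up (that lemma requires $\povm{\Pi}$ to be a PVM and $\povm{\Pi}\preceq\povm{P}$, which is exactly our hypothesis). There is no hard analytic step here; the lemma is essentially a restatement of Lemma~\ref{lemma:groupings} packaged in terms of partitions, and the real work was already done there. I would keep the write-up short and emphasize the partition structure since that is what gets used in the subsequent proof of existence and uniqueness of the MPPP.
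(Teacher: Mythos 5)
Your proposal is correct and follows essentially the same route as the paper: the ``if'' direction via the explicit deterministic conditional distribution $p(y|x)=\mathds{1}[x\in\mathcal{X}_y]$, and the ``only if'' direction by invoking Lemma~\ref{lemma:groupings} and defining $\mathcal{X}_y$ as the preimage of $y$. Your extra remarks (orthogonality of projectors summing to at most the identity, and the handling of zero elements and empty blocks) are sound bookkeeping refinements of the same argument.
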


\begin{proof}
    The ``if'' part follows because
    \begin{align}
        \Pi_y&=\sum_{x\in\mathcal{X}_y}P_x\\
        &=\sum_{x\in\mathcal{X}}p(y|x)P_x\;,
    \end{align}
    where
    \begin{align}
        p(y|x)=
        \begin{cases}
            1 &(x\in\mathcal{X}_y)
            \\
            0 &(x\notin\mathcal{X}_y)
        \end{cases}
    \end{align}
    and
    \begin{align}
        \sum_{y\in\mathcal{Y}}\Pi_y&=\sum_{y\in\mathcal{Y}}\sum_{x\in\mathcal{X}_y}P_x\\
        &=\sum_{x\in\mathcal{X}}P_x\\
        &=\openone.
    \end{align}
     The ``only if'' part follows from Lemma \ref{lemma:groupings}.
     Indeed, if $\povm{\Pi}$ is a PVM satisfying $\povm{\Pi}\preceq \povm{P}$, which clearly has a unit eigenvector with eigenvalue 1, there exists a conditional probability distribution $\{p(y|x)\}$ such that $\Pi_y=\sum_{x\in\mathcal{X}}p(y|x)P_x$ and $p(y|x)\in\{0,1\}$ for all $x$ and $y$.
    Letting $\mathcal{X}_y\coloneq \{x\in\mathcal{X}|p(y|x)=1\}$, we complete the proof.
\end{proof}

\

\begin{definition}
    A partition $\{\mathcal{X}_y\}_{y\in\mathcal{Y}}$ of $\mathcal{X}_+$ is said to be \emph{$\gamma$-disconnected} if, for any $y\neq y'$, any $x\in\mathcal{X}_y$, and any $x'\in\mathcal{X}_{y'}$, it holds that $P_xP_{x'}=P_x\gamma P_{x'}=0$. It is said to be \emph{$\gamma$-connected} otherwise.
\end{definition}

\begin{lemma}\label{lmm:PPP}
    Given a partition $\{\mathcal{X}_y\}_{y\in\mathcal{Y}}$ of $\mathcal{X}_+$, let $\povm{Q}=\{Q_y\}_{y\in\mathcal{Y}}$ be a POVM such that $Q_y=\sum_{x\in\mathcal{X}_y}P_x$. Then, $\povm{Q}$ is a $\gamma$-commuting PPP of $\povm{P}$ if and only if the partition $\{\mathcal{X}_y\}_{y\in\mathcal{Y}}$ is $\gamma$-disconnected.
\end{lemma}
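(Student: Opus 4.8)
The plan is to translate the two conditions defining a $\gamma$-commuting PPP — that $\povm Q$ be a PVM and that each $Q_y$ commute with $\gamma$ — into the pointwise orthogonality relations defining $\gamma$-disconnectedness, using only elementary operator manipulations. By Lemma~\ref{lemma:partition}, a family $\povm Q=\{Q_y\}_{y\in\mathcal Y}$ with $Q_y=\sum_{x\in\mathcal X_y}P_x$ is automatically a POVM, so the content of the statement is the equivalence of ``$\povm Q$ is a PVM and $[Q_y,\gamma]=0$ for all $y$'' with ``the partition is $\gamma$-disconnected.'' The single ingredient I would use repeatedly is the standard fact that if $0\le A\le E$ for a projection $E$, then $AE=EA=A$ (equivalently $A(\openone-E)=0$): indeed $0\le(\openone-E)A(\openone-E)\le(\openone-E)E(\openone-E)=0$, so $(\openone-E)A(\openone-E)=0$, and writing $A=B^{\dagger}B$ this forces $B(\openone-E)=0$, hence $A(\openone-E)=0$ and, taking adjoints, $EA=A$. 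Applied with $E=Q_y$ and $A=P_x$ for $x\in\mathcal X_y$ — legitimate once $Q_y$ is known to be a projection, since $Q_y-P_x=\sum_{x'\in\mathcal X_y,\,x'\neq x}P_{x'}\ge0$ — this gives $P_xQ_y=Q_yP_x=P_x$ whenever $x\in\mathcal X_y$.

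For the ``if'' direction, suppose the partition is $\gamma$-disconnected. Then for $y\neq y'$ one has $Q_yQ_{y'}=\sum_{x\in\mathcal X_y,\,x'\in\mathcal X_{y'}}P_xP_{x'}=0$ and $Q_y\gamma Q_{y'}=\sum_{x\in\mathcal X_y,\,x'\in\mathcal X_{y'}}P_x\gamma P_{x'}=0$. From the first relation, $Q_y=Q_y\sum_{y'}Q_{y'}=Q_y^2$, so each $Q_y$ is a self-adjoint idempotent and, together with $Q_yQ_{y'}=0$ for $y\neq y'$, this makes $\povm Q$ a PVM. From the second relation, $Q_y\gamma=Q_y\gamma\sum_{y'}Q_{y'}=Q_y\gamma Q_y$ and $\gamma Q_y=\sum_{y'}Q_{y'}\gamma Q_y=Q_y\gamma Q_y$, so $[Q_y,\gamma]=0$; hence $\povm Q$ is a $\gamma$-commuting PPP of $\povm P$.

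For the ``only if'' direction, suppose $\povm Q$ is a $\gamma$-commuting PPP, so the $Q_y$ are mutually orthogonal projections commuting with $\gamma$. Fix $y\neq y'$, $x\in\mathcal X_y$, $x'\in\mathcal X_{y'}$. By the observation above, $P_x=P_xQ_y$ and $P_{x'}=Q_{y'}P_{x'}$, so $P_xP_{x'}=P_xQ_yQ_{y'}P_{x'}=0$, and likewise $P_x\gamma P_{x'}=P_xQ_y\gamma Q_{y'}P_{x'}=P_x\gamma Q_yQ_{y'}P_{x'}=0$, where in the last step I moved $\gamma$ past $Q_y$ using $[\gamma,Q_y]=0$. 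Thus the partition is $\gamma$-disconnected, completing the equivalence.

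The only step that is not pure bookkeeping with the resolutions of identity $\sum_xP_x=\openone$ and $\sum_yQ_y=\openone$ is the passage from the ``coarse'' identities $Q_yQ_{y'}=0$ and $Q_y\gamma Q_{y'}=0$ to the ``fine'' ones $P_xP_{x'}=0$ and $P_x\gamma P_{x'}=0$; this is exactly what the operator-ordering fact $0\le P_x\le Q_y\Rightarrow P_xQ_y=P_x$ delivers, and I expect it to be the only place requiring more than a one-line computation.
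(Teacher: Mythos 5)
Your proof is correct and takes essentially the same route as the paper's: both directions are elementary operator manipulations with the resolutions of identity $\sum_x P_x=\openone$ and $\sum_y Q_y=\openone$, turning the coarse orthogonality relations $Q_yQ_{y'}=0$, $Q_y\gamma Q_{y'}=0$ into the fine ones and back. The only mechanical difference is in the ``only if'' step, where you use the absorption fact $0\le P_x\le Q_y\Rightarrow P_xQ_y=Q_yP_x=P_x$ to sandwich $P_xP_{x'}$ and $P_x\gamma P_{x'}$ between orthogonal projections, while the paper instead gets $P_xP_{x'}=0$ from $\operatorname{Tr}[Q_yQ_{y'}]=0$ together with positivity and handles the $\gamma$ term via the block decomposition $\gamma=\sum_yQ_y\gamma Q_y$; both arguments are equally valid.
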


\begin{proof}
     To prove the ``if'' part, let $\{\mathcal{X}_y\}_{y\in\mathcal{Y}}$ be a $\gamma$-disconnected partition of $\mathcal{X}_+$ and define  $Q_y=\sum_{x\in\mathcal{X}_y}P_x$. For $y\neq y'$, we have
     \begin{align}   Q_yQ_{y'}&=\sum_{x\in\mathcal{X}_y}\sum_{x'\in\mathcal{X}_{y'}}P_xP_{x'}=0,\label{eq:QyQy}\\     Q_y\gamma Q_{y'}&=\sum_{x\in\mathcal{X}_y}\sum_{x'\in\mathcal{X}_{y'}}P_x\gamma P_{x'}=0.\label{eq:QygQy}
     \end{align}
     Noting that $\sum_{y\in\mathcal{Y}}Q_y=I$, we have, from (\ref{eq:QyQy}),
     \begin{align}
         {\rm Tr}[Q_y]=\sum_{y'\in\mathcal{Y}}{\rm Tr}[Q_yQ_{y'}]={\rm Tr}[Q_y^2].
     \end{align}
     Since $0\leq Q_y\leq I$, it follows that $Q_y$ are projectors, which are orthogonal due to (\ref{eq:QyQy}).
     Eq.~(\ref{eq:QygQy}) implies that $\gamma$ has no off-diagonal term with respect to $\{Q_y\}_{y\in\mathcal{Y}}$, which implies $[Q_y,\gamma]=0$.
     
     To prove the ``only if'' part, let $\{\mathcal{X}_y\}_{y\in\mathcal{Y}}$ be a given partition of $\mathcal{X}_+$, let $Q_y=\sum_{x\in\mathcal{X}_y}P_x$, and suppose that $\povm{Q}=\{Q_y\}_{y\in\mathcal{Y}}$ is a $\gamma$-commuting PPP of $\povm{P}$. Due to the orthogonality of the projectors, for $y\neq y'$, we have
     \begin{align}
         0={\rm Tr}[Q_yQ_{y'}]=\sum_{x\in\mathcal{X}_{y}}\sum_{x'\in\mathcal{X}_{y'}}{\rm Tr}[P_xP_{x'}].
     \end{align}
     Noting that $P_x\geq0$, this implies ${\rm Tr}[P_xP_{x'}]=0$ and thus $P_xP_{x'}=0$ for any $x\in\mathcal{X}_{y}$ and $x'\in\mathcal{X}_{y'}$. Additionally, the commutativity of $Q_y$ with $\gamma$ implies
     \begin{align}
         P_x\gamma P_{x'}=P_x\left(\sum_{y\in\mathcal{Y}}Q_y\gamma Q_y\right)P_{x'}=\sum_{y\in\mathcal{Y}}(P_xQ_y)\gamma (Q_yP_{x'}).
     \end{align}
     Since $P_xQ_y=0$ for any $x\notin\mathcal{X}_y$ due to the orthogonality of $\{Q_y\}_{y\in\mathcal{Y}}$, the above is equal to zero for any pair of $x$ and $x'$ that belong to different subsets $\mathcal{X}_{y}$ and $\mathcal{X}_{y'}$. This implies that $\{\mathcal{X}_y\}_{y\in\mathcal{Y}}$ is $\gamma$-disconnected.
\end{proof}

\

\begin{definition}\label{dfn:irreducibledisconnected}
    A $\gamma$-disconnected partition $\{\mathcal{X}_{y}\}_{y\in\mathcal{Y}}$ of $\mathcal{X}_+$ is said to be \emph{irreducible} if any of its strict refinements (in the sense of partition) is $\gamma$-connected. It is said to be \emph{reducible} otherwise.
\end{definition}

\begin{lemma}\label{lmm:finer}
    There exists an irreducible $\gamma$-disconnected partition of $\mathcal{X}_+$ that is finer than any other $\gamma$-disconnected partition of $\mathcal{X}_+$.
\end{lemma}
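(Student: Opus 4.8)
The plan is to produce the desired partition explicitly as the set of connected components of a graph on $\mathcal{X}_+$, and then to verify the three required properties in turn; the last of them (irreducibility) will come essentially for free once the first two are established.

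Concretely, I would call two elements $x,x'\in\mathcal{X}_+$ \emph{adjacent} whenever $P_xP_{x'}\neq 0$ or $P_x\gamma P_{x'}\neq 0$. Since $P_x$, $P_{x'}$ and $\gamma$ are Hermitian, $(P_xP_{x'})^\dagger=P_{x'}P_x$ and $(P_x\gamma P_{x'})^\dagger=P_{x'}\gamma P_x$, so adjacency is a symmetric relation, and its reflexive--transitive closure $\sim$ is therefore an equivalence relation on the finite set $\mathcal{X}_+$. Let $\{\mathcal{X}^*_y\}_{y\in\mathcal{Y}}$ be the partition of $\mathcal{X}_+$ into $\sim$-classes, which I denote $\Pi^*$. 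If $y\neq y'$ and $x\in\mathcal{X}^*_y$, $x'\in\mathcal{X}^*_{y'}$, then $x$ and $x'$ cannot be adjacent (adjacency would force $x\sim x'$, hence $y=y'$), so $P_xP_{x'}=P_x\gamma P_{x'}=0$; thus $\Pi^*$ is $\gamma$-disconnected.

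Next I would show that $\Pi^*$ refines every $\gamma$-disconnected partition $\{\mathcal{X}_z\}_z$ of $\mathcal{X}_+$. Since ``belonging to a common block of $\{\mathcal{X}_z\}_z$'' is an equivalence relation and $\sim$ is generated by adjacency, it suffices to check that any two adjacent elements $x,x'$ lie in a common block $\mathcal{X}_z$ — and this is immediate, because if they lay in different blocks then $\gamma$-disconnectedness would give $P_xP_{x'}=P_x\gamma P_{x'}=0$, contradicting adjacency. Chaining this along a path realizing $x\sim x'$ yields the refinement claim. Irreducibility of $\Pi^*$ then follows purely formally: if some strict refinement $\Pi'$ of $\Pi^*$ were also $\gamma$-disconnected, the refinement claim would make $\Pi^*$ a refinement of $\Pi'$ as well, and antisymmetry of the refinement order would force $\Pi'=\Pi^*$, a contradiction; hence every strict refinement of $\Pi^*$ is $\gamma$-connected. (Alternatively, one can argue directly: a strict refinement splits some block $C$ of $\Pi^*$ into nonempty parts, and since the graph induced on $C$ is connected it must contain an edge crossing the split, which witnesses $\gamma$-connectedness.)

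There is no genuine obstacle here — it is the standard ``connected components form the finest compatible partition'' argument. The only two points that deserve care are: (i) verifying that adjacency is symmetric, which is precisely where Hermiticity of the $P_x$ and of $\gamma$ enters; and (ii) keeping straight the orientation of the refinement order, since when this lemma is later combined with Lemma~\ref{lmm:PPP} one must translate ``$\Pi^*$ is the finest $\gamma$-disconnected partition'' into ``the associated PVM is the $\preceq$-greatest $\gamma$-commuting projective post-processing of $\povm{P}$''.
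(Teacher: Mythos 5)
Your proof is correct, but it takes a genuinely different route from the paper. You construct the finest partition explicitly as the set of connected components of the graph on $\mathcal{X}_+$ whose edges are pairs with $P_xP_{x'}\neq 0$ or $P_x\gamma P_{x'}\neq 0$ (with Hermiticity guaranteeing symmetry of adjacency), and then read off $\gamma$-disconnectedness, the refinement property, and irreducibility from that construction. The paper instead argues abstractly: it takes an arbitrary irreducible $\gamma$-disconnected partition $\{\mathcal{X}^*_z\}$ (its existence being implicit, from finiteness of $\mathcal{X}_+$ and the fact that the trivial partition $\{\mathcal{X}_+\}$ is $\gamma$-disconnected) and shows by contradiction that each block $\mathcal{X}^*_z$ must sit inside a single block of any other $\gamma$-disconnected partition: if $\mathcal{X}^*_z$ straddled two blocks, intersecting it with $\mathcal{X}_y$ and with $\bigcup_{y'\neq y}\mathcal{X}_{y'}$ would split it into two nonempty mutually $\gamma$-disconnected pieces, yielding a $\gamma$-disconnected strict refinement and contradicting irreducibility. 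Each approach has its merits: the paper's argument is shorter and needs no auxiliary graph, but leaves existence tacit; yours makes existence constructive and gives a concrete, algorithmic description of the blocks (and hence of the MPPP), essentially the ``connected components give the finest compatible partition'' principle. Your two side remarks are also on point, in particular the orientation issue in (ii): finer partitions correspond to $\preceq$-larger PVMs, which is exactly how the lemma feeds into Proposition~\ref{proposition:mppp-existence} and Corollary~\ref{corollary:mppp-existence-uniqueness}.
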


\begin{proof}
    Let $\{\mathcal{X}_{y}\}_{y\in\mathcal{Y}}$ be any $\gamma$-disconnected partition of $\mathcal{X}_+$ and let $\{\mathcal{X}_z^*\}_{z\in\mathcal{Z}}$ be any irreducible one. It suffices to prove that, for any $z\in\mathcal{Z}$, there exists $y\in\mathcal{Y}$ such that $\mathcal{X}_z^*\subseteq\mathcal{X}_{y}$. We prove this by contradiction. Suppose that there exist $z\in\mathcal{Z}$ and $y\in\mathcal{Y}$ such that $\mathcal{X}_{z,0}^*\coloneq \mathcal{X}_z^*\cap\mathcal{X}_y\neq\emptyset$ and $\mathcal{X}_{z,1}^*\coloneq \mathcal{X}_z^*\cap(\bigcup_{y'\neq y}\mathcal{X}_{y'})\neq\emptyset$. Since $\mathcal{X}_y$ and $\bigcup_{y'\neq y}\mathcal{X}_{y'}$ are $\gamma$-disconnected by assumption, $\mathcal{X}_{z,0}^*$ and $\mathcal{X}_{z,1}^*$ are $\gamma$-disconnected. Thus, $\{\mathcal{X}_{z,0}^*,\mathcal{X}_{z,1}^*\}\cup\{\mathcal{X}_z^*\}_{z'\in\mathcal{Z}\backslash\{z\}}$ is a $\gamma$-disconnected partition of $\mathcal{X}_+$, which contradicts the assumption that $\{\mathcal{X}_z^*\}_{z\in\mathcal{Z}}$ is irreducible.
\end{proof}

\

\begin{lemma}
The irreducible $\gamma$-disconnected partition of $\mathcal{X}_+$ is unique up to relabeling of the subsets.
\end{lemma}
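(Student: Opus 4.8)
The statement to prove is that the irreducible $\gamma$-disconnected partition of $\mathcal{X}_+$ is unique up to relabeling. The plan is to leverage Lemma~\ref{lmm:finer}, which already does most of the work: it guarantees the existence of an irreducible $\gamma$-disconnected partition that refines every $\gamma$-disconnected partition. So suppose $\{\mathcal{X}^*_z\}_{z\in\mathcal{Z}}$ and $\{\mathcal{X}^{**}_w\}_{w\in\mathcal{W}}$ are two irreducible $\gamma$-disconnected partitions of $\mathcal{X}_+$. First I would apply Lemma~\ref{lmm:finer} to conclude that there exists some irreducible $\gamma$-disconnected partition, call it $\{\mathcal{X}^\sharp_j\}_j$, that is finer than both of them; actually, rereading the lemma, it is cleaner to note that the \emph{proof} of Lemma~\ref{lmm:finer} shows directly that any irreducible partition is finer than any $\gamma$-disconnected partition. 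Hence $\{\mathcal{X}^*_z\}_z$ is finer than $\{\mathcal{X}^{**}_w\}_w$ and, symmetrically, $\{\mathcal{X}^{**}_w\}_w$ is finer than $\{\mathcal{X}^*_z\}_z$.

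The second step is the elementary order-theoretic fact that if two partitions of the same set each refine the other, then they coincide up to relabeling. Concretely: pick any $x\in\mathcal{X}_+$ and let $\mathcal{X}^*_z$ and $\mathcal{X}^{**}_w$ be the unique blocks containing it. Since $\{\mathcal{X}^*_z\}$ refines $\{\mathcal{X}^{**}_w\}$, we have $\mathcal{X}^*_z\subseteq\mathcal{X}^{**}_{w}$; since $\{\mathcal{X}^{**}_w\}$ refines $\{\mathcal{X}^*_z\}$, we have $\mathcal{X}^{**}_{w}\subseteq\mathcal{X}^{*}_{z'}$ for the block $\mathcal{X}^*_{z'}$ containing $\mathcal{X}^{**}_w$; but $x\in\mathcal{X}^*_z\cap\mathcal{X}^*_{z'}$ forces $z=z'$, whence $\mathcal{X}^*_z=\mathcal{X}^{**}_w$. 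This gives a well-defined bijection between the two index sets matching equal blocks, which is exactly uniqueness up to relabeling.

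I do not anticipate a genuine obstacle here — the content is entirely carried by Lemma~\ref{lmm:finer}, and what remains is the standard antisymmetry of the refinement partial order on partitions. The only point requiring a little care is to make sure that the "finer than any other" clause in Lemma~\ref{lmm:finer} is invoked correctly: its statement produces \emph{one} irreducible partition with the universal refinement property, so to compare two \emph{given} irreducible partitions directly one should either re-run the short contradiction argument from that proof (which never used irreducibility of the \emph{target} partition, only of the source) or first show each given irreducible partition equals the universal one. Either route closes the argument in a couple of lines.
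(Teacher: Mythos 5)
Your proposal is correct and follows essentially the same route as the paper: both derive, for any block of one irreducible partition, a chain of inclusions into a block of the other and back (i.e., mutual refinement via the argument of Lemma~\ref{lmm:finer}), and then conclude equality of blocks by the antisymmetry of refinement within a single partition. Your caveat that one must invoke the \emph{proof} of Lemma~\ref{lmm:finer} (which shows any irreducible partition refines any $\gamma$-disconnected one, without using irreducibility of the target) is exactly the reading the paper itself relies on when it cites that lemma.
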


\begin{proof}
Let $\{\mathcal{X}^*_z\}_{z\in\mathcal{Z}}$ and $\{\mathcal{X}^*_\theta\}_{\theta\in\Theta}$ be irreducible $\gamma$-disconnected partitions of $\mathcal{X}_+$. Due to Lemma \ref{lmm:finer}, for any $z\in\mathcal{Z}$, there exist $\theta\in\Theta$ and $z'\in\mathcal{Z}$ such that $\mathcal{X}^*_z\subseteq\mathcal{X}^*_\theta\subseteq\mathcal{X}^*_{z'}$. This relation holds only if $z=z'$ and $\mathcal{X}^*_z=\mathcal{X}^*_\theta$, thus $\{\mathcal{X}^*_z\}_{z\in\mathcal{Z}}$ and $\{\mathcal{X}^*_\theta\}_{\theta\in\Theta}$ are equal up to relabeling.
\end{proof}

\

\begin{proposition}
\label{proposition:mppp-existence}
    Let $\{\mathcal{X}_z\}_{z\in\mathcal{Z}}$ be a $\gamma$-disconnected partition of $\mathcal{X}_+$, and let $\povm{\Pi}=\{\Pi_z\}_{z\in\mathcal{Z}}$ be a POVM such that $\Pi_z=\sum_{x\in\mathcal{X}_z}P_x$ for each $z\in\mathcal{Z}$. Due to Lemma \ref{lmm:PPP}, $\povm{\Pi}$ is a $\gamma$-commuting PPP of $\povm{P}$ . Then, $\povm{\Pi}$ is maximal if and only if $\{\mathcal{X}_z\}_{z\in\mathcal{Z}}$ is irreducible. 
\end{proposition}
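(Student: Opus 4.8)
The plan is to translate the whole statement into combinatorics of partitions of $\mathcal{X}_+$, using the dictionary assembled in Lemmas~\ref{lemma:partition} and~\ref{lmm:PPP}: a PVM is a $\gamma$-commuting PPP of $\povm{P}$ exactly when it has the form $\povm{\Pi}_{\{\mathcal{X}_y\}}\coloneq\{\,\sum_{x\in\mathcal{X}_y}P_x\,\}_y$ for some $\gamma$-disconnected partition $\{\mathcal{X}_y\}$ of $\mathcal{X}_+$, and the $\povm{\Pi}$ in the statement is $\povm{\Pi}_{\{\mathcal{X}_z\}}$ (the hypothesis already guarantees $\{\mathcal{X}_z\}$ is $\gamma$-disconnected). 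Under this correspondence the proposition becomes: $\povm{\Pi}_{\{\mathcal{X}_z\}}$ dominates, in the $\preceq$-order, every $\gamma$-commuting PPP of $\povm{P}$ if and only if $\{\mathcal{X}_z\}$ is irreducible.

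First I would prove the key link between the $\preceq$-order and refinement of partitions: for partitions $\{\mathcal{X}_z\}$ and $\{\mathcal{X}'_w\}$ of $\mathcal{X}_+$, one has $\povm{\Pi}_{\{\mathcal{X}'_w\}}\preceq\povm{\Pi}_{\{\mathcal{X}_z\}}$ iff $\{\mathcal{X}_z\}$ refines $\{\mathcal{X}'_w\}$. The ``if'' direction is a one-line computation: grouping the cells $\mathcal{X}_z$ contained in each $\mathcal{X}'_w$ exhibits $\povm{\Pi}_{\{\mathcal{X}'_w\}}$ as a deterministic post-processing of $\povm{\Pi}_{\{\mathcal{X}_z\}}$. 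For the ``only if'' direction — the one genuinely new step — I would invoke Lemma~\ref{lemma:groupings}: since $\povm{\Pi}_{\{\mathcal{X}'_w\}}$ is a PVM that post-processes the POVM $\povm{\Pi}_{\{\mathcal{X}_z\}}$, the kernel $p(w|z)$ witnessing $\preceq$ is $\{0,1\}$-valued, so the sets $\mathcal{Z}_w\coloneq\{z:p(w|z)=1\}$ partition $\mathcal{Z}$; comparing $\sum_{x\in\mathcal{X}'_w}P_x=\sum_{z\in\mathcal{Z}_w}\sum_{x\in\mathcal{X}_z}P_x$ and using that the operators $\{P_x\}_{x\in\mathcal{X}_+}$ are nonzero forces $\mathcal{X}'_w=\bigcup_{z\in\mathcal{Z}_w}\mathcal{X}_z$, i.e.\ $\{\mathcal{X}_z\}$ refines $\{\mathcal{X}'_w\}$.

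Given this link, maximality of $\povm{\Pi}_{\{\mathcal{X}_z\}}$ is equivalent to $\{\mathcal{X}_z\}$ refining every $\gamma$-disconnected partition of $\mathcal{X}_+$, that is, to $\{\mathcal{X}_z\}$ being the \emph{finest} $\gamma$-disconnected partition. To close the argument I would identify ``finest'' with ``irreducible'': Lemma~\ref{lmm:finer} supplies an irreducible $\gamma$-disconnected partition that is finer than all others, so a finest $\gamma$-disconnected partition exists, is irreducible, and is unique (two partitions each refining the other are equal); conversely, by the uniqueness up to relabeling of the irreducible $\gamma$-disconnected partition established in the lemma just above, any irreducible $\gamma$-disconnected partition coincides with this finest one. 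Hence $\{\mathcal{X}_z\}$ is the finest $\gamma$-disconnected partition iff it is irreducible, which combined with the previous paragraph completes the proof.

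The only real obstacle is the ``only if'' half of the refinement link in the second paragraph, where Lemma~\ref{lemma:groupings} must be applied carefully to upgrade a stochastic post-processing between PVMs to a genuine coarsening of partitions; the remainder is bookkeeping with facts already in hand (Lemmas~\ref{lemma:partition}, \ref{lmm:PPP}, \ref{lmm:finer}, and the uniqueness lemma). Two minor points to keep straight are the direction of the order (the \emph{finer} of two comparable PVMs is the \emph{larger} one in $\preceq$) and the harmless passage from $\mathcal{X}$ to $\mathcal{X}_+$.
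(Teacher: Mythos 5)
Your proposal is correct and takes essentially the same route as the paper: reduce everything to $\gamma$-disconnected partitions via Lemmas~\ref{lemma:partition} and~\ref{lmm:PPP}, turn the $\preceq$ relation between the resulting PVMs into refinement of partitions via Lemma~\ref{lemma:groupings}, and identify ``irreducible'' with ``finest'' via Lemma~\ref{lmm:finer} and the uniqueness lemma, exactly the ingredients of the paper's two directions (the paper phrases the ``only if'' part as a contradiction with a strict inclusion, which is only a cosmetic difference). One small remark: the step ``equal block sums force $\mathcal{X}'_w=\bigcup_{z\in\mathcal{Z}_w}\mathcal{X}_z$'' needs not just that the $P_x$ are nonzero but also their positivity and the mutual orthogonality of the block projections (a $P_x$ lying under two orthogonal projections must vanish); the paper glosses this identically, and it is a one-line fix rather than a gap.
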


\begin{proof}
   To prove the ``if'' part, suppose that $\{\mathcal{X}_z\}_{z\in\mathcal{Z}}$ is irreducible. Let $\povm{\Pi}'=\{\Pi_y'\}_{y\in\mathcal{Y}}$ be any $\gamma$-commuting PPP of $\povm{P}$. Due to Lemma \ref{lemma:partition} and Lemma \ref{lmm:PPP}, there exists a $\gamma$-disconnected partition $\{\mathcal{X}_y'\}_{y\in\mathcal{Y}}$ of $\mathcal{X}_+$ such that $\Pi_y'=\sum_{x\in\mathcal{X}_y'}P_x$. Due to Lemma \ref{lmm:finer}, $\{\mathcal{X}_z\}_{z\in\mathcal{Z}}$ is finer than $\{\mathcal{X}_y'\}_{y\in\mathcal{Y}}$, which implies that there exists a partition $\{\mathcal{Z}_y\}_{y\in\mathcal{Y}}$ of $\mathcal{Z}$ such that $\mathcal{X}_y'=\cup_{z\in\mathcal{Z}_y}\mathcal{X}_z$. Hence, we have $\Pi_y'=\sum_{z\in\mathcal{Z}_y}\Pi_z$, which implies $\povm{\Pi}'\preceq\povm{\Pi}$.
   
   To prove the ``only if'' part, suppose, on the contrary, that $\{\mathcal{X}_z\}_{z\in\mathcal{Z}}$ is reducible. Let $\{\mathcal{X}_\theta^*\}_{\theta\in\Theta}$ be the irreducible $\gamma$-disconnected partition of $\mathcal{X}_+$. Reducibility of $\{\mathcal{X}_z\}_{z\in\mathcal{Z}}$ implies that there exist $\theta_1\in\Theta$ and $z_1\in\mathcal{Z}$ such that $\mathcal{X}_{\theta_1}^*\subsetneq\mathcal{X}_{z_1}$.
   Let $\povm{\Pi}^*=\{\Pi_\theta^*\}_{\theta\in\Theta}$ be a $\gamma$-commuting PPP of $\povm{P}$ such that $\Pi_\theta^*=\sum_{x\in\mathcal{X}_\theta^*}P_x$, which is maximal due to the first half of this proof. It suffices to prove that no classical post-processing transforms $\Pi$ to $\Pi^*$. Indeed, if such a classical post-processing exists, due to Lemma \ref{lemma:partition}, it is represented by a partition $\{\mathcal{Z}_{\theta}\}_{\theta\in\Theta}$ of $\mathcal{Z}$ such that $\Pi_\theta^*=\sum_{z\in\mathcal{Z}_\theta}\Pi_z$ for any $\theta\in\Theta$. This implies that $\mathcal{X}_\theta^*=\cup_{z\in\mathcal{Z}_\theta}\mathcal{X}_z$, so there exists $z_2\in\mathcal{Z}$ that satisfies $\mathcal{X}_{z_2}\subseteq\mathcal{X}_{\theta_1}^*$. Hence, we have $\mathcal{X}_{z_2}\subsetneq\mathcal{X}_{z_1}$, which is a contradiction.
\end{proof}

\

\begin{corollary}
\label{corollary:mppp-existence-uniqueness}
For any POVM $\povm{P}$, the maximal $\gamma$-commuting projective post-processing $\povm{\Pi}_{\povm{P},\gm}$ of $\povm{P}$ exists and is unique up to invertible relabeling of the measurement outcomes. 
\end{corollary}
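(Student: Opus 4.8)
The plan is to assemble the corollary directly from the chain of results established above—Lemma~\ref{lemma:partition}, Lemma~\ref{lmm:PPP}, Lemma~\ref{lmm:finer}, the lemma on uniqueness of the irreducible $\gamma$-disconnected partition, and Proposition~\ref{proposition:mppp-existence}—so that essentially no new computation is required.

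For existence, I would first invoke Lemma~\ref{lmm:finer} to produce an irreducible $\gamma$-disconnected partition $\{\mathcal{X}^*_z\}_{z\in\mathcal{Z}}$ of $\mathcal{X}_+$, and set $\Pi_z\coloneq\sum_{x\in\mathcal{X}^*_z}P_x$. Lemma~\ref{lemma:partition} then guarantees that $\povm{\Pi}=\{\Pi_z\}_{z\in\mathcal{Z}}$ is a PVM with $\povm{\Pi}\preceq\povm{P}$; Lemma~\ref{lmm:PPP} upgrades this to the statement that $\povm{\Pi}$ is a $\gamma$-commuting PPP of $\povm{P}$, since the partition is $\gamma$-disconnected; and Proposition~\ref{proposition:mppp-existence} yields maximality, since the partition is irreducible. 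Hence a $\gamma$-commuting MPPP $\povm{\Pi}_{\povm{P},\gamma}$ exists.

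For uniqueness, I would take two $\gamma$-commuting MPPPs $\povm{\Pi}$ and $\povm{\Pi}'$ of $\povm{P}$. Being in particular $\gamma$-commuting PPPs, each is—by Lemma~\ref{lemma:partition} together with Lemma~\ref{lmm:PPP}—of the form $\Pi_y=\sum_{x\in\mathcal{X}_y}P_x$ and $\Pi'_y=\sum_{x\in\mathcal{X}'_y}P_x$ for suitable $\gamma$-disconnected partitions $\{\mathcal{X}_y\}$ and $\{\mathcal{X}'_y\}$ of $\mathcal{X}_+$. Reading the maximality hypothesis through the ``only if'' direction of Proposition~\ref{proposition:mppp-existence} forces both partitions to be irreducible, and the lemma establishing uniqueness of the irreducible $\gamma$-disconnected partition (stated just before Proposition~\ref{proposition:mppp-existence}) then tells us that $\{\mathcal{X}_y\}$ and $\{\mathcal{X}'_y\}$ coincide up to a bijective relabeling $\pi$ of the outcome labels. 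Thus $\mathcal{X}'_{\pi(y)}=\mathcal{X}_y$ and hence $\Pi'_{\pi(y)}=\sum_{x\in\mathcal{X}'_{\pi(y)}}P_x=\sum_{x\in\mathcal{X}_y}P_x=\Pi_y$, which is precisely the claimed invertibility of the relabeling.

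The only point requiring a moment's care—and the closest thing to an obstacle—is logical rather than technical: one must be certain that maximality of $\povm{\Pi}$ in the post-processing preorder is transported faithfully to irreducibility of \emph{some} $\gamma$-disconnected partition representing it. This is clean here because Lemma~\ref{lemma:partition} shows that every PVM below $\povm{P}$ is literally obtained by grouping the $P_x$, so the correspondence between PVMs and partitions is explicit and compatible with $\preceq$; there is no hidden freedom to track, and the corollary follows immediately.
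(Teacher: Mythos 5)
Your proposal is correct and follows essentially the same route as the paper, which explicitly states that the corollary's proof is assembled from Lemma~\ref{lemma:partition}, Lemma~\ref{lmm:PPP}, Lemma~\ref{lmm:finer}, the uniqueness lemma for the irreducible $\gamma$-disconnected partition, and Proposition~\ref{proposition:mppp-existence}. Your assembly of existence (irreducible partition $\to$ PVM $\to$ $\gamma$-commuting PPP $\to$ maximal) and uniqueness (two MPPPs $\to$ irreducible partitions $\to$ equal up to relabeling) matches the paper's intended argument.
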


\subsection{Fixed points of the coarse-graining map}
\label{subsection:fixed}
In this subsection, we focus on the algebraic structure of coarse-graining maps and their adjoints. We prove that the fixed-point set of the adjoint map of a coarse-graining is completely characterized by the MPPP:

\begin{theorem}
\label{theorem:dual}
Let $\povm{\Pi}_{\povm{P},\gm}=\{\Pi_y\}_{y\in\set{Y}}$ be the MPPP with respect to POVM $\povm{P}=\{P_x\}_{x\in\set{X}}$ and prior $\gm$. Moreover, let   
\begin{align}
    \mf{F}_{\mcl{C}_{\povm{P},\gm}^*}\coloneq\left\{A\in\mcl{L}(\sH)\mid\mcl{C}_{\povm{P},\gm}^*(A)=A\right\}
\end{align}
be the set of fixed points of $\mcl{C}_{\povm{P},\gm}^*$. Then, we have
    \begin{align}
        \mf{F}_{\mcl{C}_{\povm{P},\gm}^*}=\sn\left\{\Pi_y\right\}_{y\in\set{Y}}\;.
    \end{align}
\end{theorem}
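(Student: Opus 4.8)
The goal is to identify the fixed-point set of $\mcl{C}_{\povm{P},\gm}^*$ with $\sn\{\Pi_y\}_{y\in\set{Y}}$, where $\povm{\Pi}_{\povm{P},\gm}$ is the MPPP. The natural strategy is a two-way inclusion, with the ``$\supseteq$'' direction being a direct computation and the ``$\subseteq$'' direction requiring the structural results of Section~\ref{subsection:maximal}. Throughout, it is convenient to use the explicit form $\mcl{C}_{\povm{P},\gm}^*(A)=\sum_x\tr{\gm^{1/2}P_x\gm^{1/2}\tr{P_x\gm}^{-1}A}P_x$ obtained by dualizing Eq.~\eqref{eq:CGMdfn}.

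First I would prove $\sn\{\Pi_y\}\subseteq\mf{F}_{\mcl{C}_{\povm{P},\gm}^*}$. Since $\mcl{C}_{\povm{P},\gm}^*$ is linear, it suffices to check $\mcl{C}_{\povm{P},\gm}^*(\Pi_y)=\Pi_y$ for each $y$. Writing $\povm{\Pi}$ via the irreducible $\gamma$-disconnected partition $\{\mcl{X}_y\}_{y\in\mcl{Y}}$ from Proposition~\ref{proposition:mppp-existence}, so that $\Pi_y=\sum_{x\in\mcl{X}_y}P_x$, one computes $\mcl{C}_{\povm{P},\gm}^*(\Pi_y)=\sum_{x'}\tr{Q_y}$-type coefficients; the key point is that for $x'\in\mcl{X}_{y'}$ with $y'\neq y$ the coefficient $\tr{\gm^{1/2}P_{x'}\gm^{1/2}\Pi_y}/\tr{P_{x'}\gm}$ vanishes because $\gamma$-disconnectedness gives $P_{x'}\gm\Pi_y=\sum_{x\in\mcl{X}_y}P_{x'}\gm P_x=0$, while for $x'\in\mcl{X}_y$ the coefficient is $\tr{P_{x'}\gm}/\tr{P_{x'}\gm}=1$ using $P_{x'}\Pi_y=P_{x'}$. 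Hence $\mcl{C}_{\povm{P},\gm}^*(\Pi_y)=\sum_{x'\in\mcl{X}_y}P_{x'}=\Pi_y$. (Alternatively, this inclusion also follows from Theorem~\ref{theorem:macroscopic}: the equivalence (\ref{theorem:macroscopic:resource-destroying})$\iff$(\ref{theorem:macroscopic:explicit}) identifies macroscopic states with $\sn\{\Pi_y\gm\}$, and a short duality argument relates the fixed points of $\mcl{C}_{\povm{P},\gm}$ to those of its adjoint.)

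For the reverse inclusion $\mf{F}_{\mcl{C}_{\povm{P},\gm}^*}\subseteq\sn\{\Pi_y\}$, the plan is to exploit that $\mcl{C}_{\povm{P},\gm}^*$ has range contained in $\sn\{P_x\}_{x\in\mcl{X}_+}$, so any fixed point $A$ is already a linear combination $A=\sum_x a_x P_x$. The task is then to show the coefficients must be constant on each block $\mcl{X}_y$, i.e.\ $a_x=a_{x'}$ whenever $x,x'$ lie in the same $\mcl{X}_y$. Applying the fixed-point equation $\mcl{C}_{\povm{P},\gm}^*(A)=A$ and equating coefficients relative to the (possibly linearly dependent) family $\{P_x\}$ is delicate; instead I would pair both sides against well-chosen operators. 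A cleaner route: show that $\mcl{C}_{\povm{P},\gm}^*$, restricted to $\sn\{P_x\}$, is conjugate (via $\mcl{M}_{\povm{P}}^*$ and its ``inverse'' on the relevant subspace) to the classical stochastic matrix $q(y|x')$-type operator built from $\gm^{1/2}P_x\gm^{1/2}/\tr{P_x\gm}$, whose fixed points within the classical simplex are exactly the functions constant on the ergodic classes of the associated Markov chain — and those ergodic classes are precisely the blocks of the \emph{irreducible} $\gamma$-disconnected partition. The irreducibility from Proposition~\ref{proposition:mppp-existence} is exactly what guarantees there are no finer invariant subsets, pinning the fixed space down to $\sn\{\Pi_y\}$.

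The main obstacle I anticipate is handling the case where $\{P_x\}_{x\in\mcl{X}_+}$ is linearly dependent, so that ``reading off coefficients'' is not immediate and a fixed point $A$ does not have a unique expansion in the $P_x$. The way through this is to argue at the level of the channel $\mcl{M}_{\povm{P}}$: note $\mcl{C}_{\povm{P},\gm}^*=\mcl{M}_{\povm{P}}^*\circ\mcl{N}^*\circ(\mcl{M}_{\povm{P}}^*)$-adjoint structure where the inner map is genuinely a classical channel on $\ell^1(\mcl{X})$, reduce the fixed-point problem to the classical one where linear independence is automatic (functions on $\mcl{X}$), solve it there using irreducibility, and then push the classical fixed points $\mathbf{1}_{\mcl{X}_y}$ forward through $\mcl{M}_{\povm{P}}^*$ to recover $\Pi_y$. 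A second, more routine subtlety is verifying idempotency/consistency so that the forward and backward maps compose correctly on the relevant subspaces; this should follow from $\gamma$-disconnectedness of the MPPP partition and the fact, already noted after Definition~\ref{definition:mppp}, that $\povm{\Pi}$'s elements commute with $\gm$.
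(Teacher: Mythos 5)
Your ``$\supseteq$'' computation and your reduction of the ``$\subseteq$'' direction to a classical fixed-point problem are sound in principle: since $\mcl{C}_{\povm{P},\gm}^*=\mcl{M}_{\povm{P}}^*\circ\mcl{R}_{\mcl{M}_{\povm{P}},\gm}^*$, the eigenvalue-$1$ spaces of $\mcl{M}_{\povm{P}}^*\circ\mcl{R}^*$ and of $T\coloneq\mcl{R}^*\circ\mcl{M}_{\povm{P}}^*$ (the latter acting on the commutative outcome algebra, where it is given by the stochastic kernel $t(x'|x)=\tr{\gm^{1/2}P_{x'}\gm^{1/2}P_x}/\tr{P_x\gm}$) are in linear bijection via $\mcl{M}_{\povm{P}}^*$ and $\mcl{R}^*$. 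This is a genuinely different route from the paper, which instead invokes the Petz/Takesaki characterization of the fixed points (Lemma~\ref{lemma:dual}: multiplicativity of $\mcl{M}_{\povm{P}}^*$ on the fixed algebra plus commutation with $\gm$) and then forces the coefficients to be block-constant via a symmetrization identity. However, your argument has a genuine gap at its crux: the assertion that the ergodic (communicating) classes of the chain $t(x'|x)$ ``are precisely the blocks of the irreducible $\gamma$-disconnected partition'' is not proved, and it is not immediate. The zero pattern of the kernel is $t(x'|x)=0\iff P_{x'}\gm^{1/2}P_x=0$, a relation involving $\gm^{1/2}$ which does not coincide pair-by-pair with the relation ``$P_xP_{x'}\neq0$ or $P_x\gm P_{x'}\neq0$'' that defines the partition; in particular one can have $P_xP_{x'}\neq0$ while $P_x\gm^{1/2}P_{x'}=0$ (and vice versa), so a priori a single block $\set{X}_y$ could split into several chain classes, which would make $\dim\mf{F}_{\mcl{C}_{\povm{P},\gm}^*}$ strictly larger than $|\set{Y}|$ and break the proof. (A smaller point: you also need that the chain has no transient states, so that harmonic functions are exactly the class-constant ones; this follows from the symmetry $t(x'|x)=0\iff t(x|x')=0$, but it should be said.)

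The gap can be closed, but it requires an additional argument of the same flavor as the paper's: (a) chain classes never straddle blocks, because $[\Pi_y,\gm]=0$ gives $[\Pi_y,\gm^{1/2}]=0$ and hence $P_x\gm^{1/2}P_{x'}=0$ across distinct blocks; and (b) each block is connected under the $\gm^{1/2}$-relation. For (b), suppose a block $\set{X}_y$ split into chain components and let $A=\sum_{x\in W}P_x$ for one component $W$; then $A\gm^{1/2}(\Pi_y-A)=0$ forces $A\gm^{1/2}=A\gm^{1/2}A$, which is self-adjoint, hence $[A,\gm^{1/2}]=0$, and therefore $A(\Pi_y-A)=0$ and $A\gm(\Pi_y-A)=0$; passing to the ranges of these positive operators yields $P_xP_{x'}=0$ and $P_x\gm P_{x'}=0$ for all cross pairs, i.e.\ a strictly finer $\gamma$-disconnected partition, contradicting irreducibility (Proposition~\ref{proposition:mppp-existence}). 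Without step (b) --- or, alternatively, without importing the multiplicativity-plus-commutation characterization of Lemma~\ref{lemma:dual} as the paper does --- the identification of the fixed space with $\sn\{\Pi_y\}_{y\in\set{Y}}$ is unjustified as written.
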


(The proof of the above theorem will be given after the proof of Lemma~\ref{lemma:dual} below.) Therefore, in particular, we can interpret $\povm{\Pi}_{\povm{P},\gamma}$ as a sort of \emph{inferential reference frame} for the macroscopic observer represented by the pair $(\povm{P},\gamma)$: specifically, the fixed-point algebra of the channel $\mcl{C}^*_{\povm{P},\gamma}$, which describes the observer's constraints, is spanned precisely by the projections in $\povm{\Pi}_{\povm{P},\gamma}$, and it is this algebra that encapsulates all information recoverable under those constraints.

Moreover, using Theorem~\ref{theorem:dual}, we explicitly determine the conditional expectation~\cite{umegaki1962conditional},~\cite[9.2 Conditional Expectations]{petz2007quantum} that characterizes the set of fixed points of the coarse-graining map.

\begin{theorem}
\label{theorem:rdm-coarse}
    Let $\povm{\Pi}_{\povm{P},\gm}=\{\Pi_y\}_{y\in\set{Y}}$ be the MPPP with respect to POVM $\povm{P}=\{P_x\}_{x\in\set{X}}$ and prior $\gm$. Then,
    \begin{align}
        \Delta_{\povm{P},\gm}(\bigcdot)&=\lim_{n\to\infty}\frac{1}{n}\sum_{k=1}^{n}\mcl{C}_{\povm{P},\gm}^k(\bigcdot)\;. \label{eq:DeltaCk}
    \end{align}
    Furthermore, for any quantum state $\rho$, 
    \begin{align}
        \Delta_{\povm{P},\gm}(\rho)=\rho
    \end{align}
    if and only if
    \begin{align}
        \mcl{C}_{\povm{P},\gm}(\rho)=\rho\;.
    \end{align}
\end{theorem}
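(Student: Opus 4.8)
The plan is to prove the two assertions of Theorem~\ref{theorem:rdm-coarse} in turn, relying on the structure of $\mcl{C}_{\povm{P},\gm}$ as a measure-and-prepare channel and on the characterization of its dual's fixed points provided by Theorem~\ref{theorem:dual}. For the Ces\`aro-mean formula~\eqref{eq:DeltaCk}, first I would note that $\mcl{C}_{\povm{P},\gm}$ is a CPTP map with $\gm$ as a fixed point (indeed $\mcl{C}_{\povm{P},\gm}(\gm)=\gm$), hence it is a contraction on $\mcl{L}(\sH)$ in a suitable norm, and by the mean ergodic theorem the Ces\`aro averages $\frac1n\sum_{k=1}^n\mcl{C}_{\povm{P},\gm}^k$ converge to the projection $E_{\mcl{C}}$ onto the fixed-point space $\mf{F}_{\mcl{C}_{\povm{P},\gm}}=\{A : \mcl{C}_{\povm{P},\gm}(A)=A\}$ along the closure of the range of $\id-\mcl{C}_{\povm{P},\gm}$. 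It then remains to identify this limiting projection with $\Delta_{\povm{P},\gm}$ as defined in~\eqref{eq:RDMrtm}. The clean way is to check (a) that $\Delta_{\povm{P},\gm}$ is itself idempotent and fixes exactly the span of $\{\Pi_y\gm\}_{y\in\set{Y}}$, (b) that $\mcl{C}_{\povm{P},\gm}\circ\Delta_{\povm{P},\gm}=\Delta_{\povm{P},\gm}=\Delta_{\povm{P},\gm}\circ\mcl{C}_{\povm{P},\gm}$, which pins down $\Delta_{\povm{P},\gm}$ as the ergodic projection, and (c) that both maps have dual fixed-point space equal to $\sn\{\Pi_y\}$ by Theorem~\ref{theorem:dual} together with the analogous direct computation for $\Delta^*_{\povm{P},\gm}$. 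Since the ergodic projection $E_{\mcl{C}}$ onto $\mf{F}_{\mcl{C}_{\povm{P},\gm}}$ is uniquely determined by being idempotent, CPTP-limit-of-Ces\`aro-means, and having range $\mf{F}_{\mcl{C}_{\povm{P},\gm}}$, matching it with $\Delta_{\povm{P},\gm}$ reduces to verifying the intertwining relations in (b), which in turn follow from $\Pi_y$ being a PVM commuting with $\gm$ that coarse-grains $\povm{P}$.

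Concretely, to verify $\mcl{C}_{\povm{P},\gm}\circ\Delta_{\povm{P},\gm}=\Delta_{\povm{P},\gm}$ I would use the MPPP partition $\mathcal{X}=\bigsqcup_y\mathcal{X}_y$ with $\Pi_y=\sum_{x\in\mathcal{X}_y}P_x$ guaranteed by Lemma~\ref{lemma:partition} and Proposition~\ref{proposition:mppp-existence}; applying $\mcl{C}_{\povm{P},\gm}$ to a term $\Pi_y\gm/\tr{\Pi_y\gm}$ one computes $\sum_x\tr{P_x\Pi_y\gm}\,\gm^{1/2}P_x\gm^{1/2}/\tr{P_x\gm}$, and because $\Pi_y$ commutes with $\gm$ and absorbs exactly the $P_x$ with $x\in\mathcal{X}_y$ (using $P_x\Pi_y=P_x$ for $x\in\mathcal{X}_y$ and $P_x\Pi_y=0$ otherwise, which holds since $\povm{\Pi}$ is an orthogonal PVM post-processing), this collapses back to $\Pi_y\gm/\tr{\Pi_y\gm}$ after re-summing over $\mathcal{X}_y$, using that $\sum_{x\in\mathcal{X}_y}\gm^{1/2}P_x\gm^{1/2} = \gm^{1/2}\Pi_y\gm^{1/2}=\Pi_y\gm$. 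The reverse composition $\Delta_{\povm{P},\gm}\circ\mcl{C}_{\povm{P},\gm}=\Delta_{\povm{P},\gm}$ is similar, using $\tr{\Pi_y\gm^{1/2}P_x\gm^{1/2}}=\tr{P_x\gm}$ when $x\in\mathcal{X}_y$ and $0$ otherwise.

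For the second assertion — that $\Delta_{\povm{P},\gm}(\rho)=\rho$ iff $\mcl{C}_{\povm{P},\gm}(\rho)=\rho$ — the ``$\Leftarrow$'' direction is immediate from~\eqref{eq:DeltaCk}, since if $\rho$ is a fixed point of $\mcl{C}_{\povm{P},\gm}$ then every Ces\`aro average equals $\rho$, hence so does the limit $\Delta_{\povm{P},\gm}(\rho)$. For ``$\Rightarrow$'', if $\Delta_{\povm{P},\gm}(\rho)=\rho$ then by~\eqref{eq:RDMrtm} $\rho=\sum_y c_y\Pi_y\gm$ with $c_y=\tr{\Pi_y\rho}/\tr{\Pi_y\gm}\geq0$, i.e.\ condition~(\ref{theorem:macroscopic:explicit}) of Theorem~\ref{theorem:macroscopic} holds; then either invoke the already-established equivalence (\ref{theorem:macroscopic:resource-destroying})$\iff$(\ref{theorem:macroscopic:explicit}) together with the target equivalence (\ref{theorem:macroscopic:coarse-graining})$\iff$(\ref{theorem:macroscopic:resource-destroying}) being precisely what we are proving — so to avoid circularity I would instead directly compute $\mcl{C}_{\povm{P},\gm}(\Pi_y\gm)=\Pi_y\gm$ using the intertwining relation established above (namely $\mcl{C}_{\povm{P},\gm}\circ\Delta_{\povm{P},\gm}=\Delta_{\povm{P},\gm}$, evaluated on states whose $\Delta$-image is themselves), giving $\mcl{C}_{\povm{P},\gm}(\rho)=\sum_y c_y\mcl{C}_{\povm{P},\gm}(\Pi_y\gm)=\sum_y c_y\Pi_y\gm=\rho$. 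The main obstacle I anticipate is the first assertion: making the mean ergodic / Ces\`aro-convergence argument fully rigorous (choosing the right norm in which $\mcl{C}_{\povm{P},\gm}$ is a contraction with semisimple peripheral spectrum, so that the Ces\`aro means — not just the iterates — converge, and the limit is the fixed-point projection) and then cleanly identifying that projection with the explicitly given $\Delta_{\povm{P},\gm}$ via the intertwining identities, rather than the routine but slightly tedious projector bookkeeping, which is where care is needed with the MPPP partition.
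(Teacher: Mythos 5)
Your route is essentially sound and genuinely different from the paper's. The paper works entirely at the dual (Heisenberg) level: it observes that $\Delta_{\povm{P},\gm}^*$ is a conditional expectation onto $\sn\{\Pi_y\}$ preserving the functional $\varphi_\gm(\bigcdot)=\tr{\gm\,\bigcdot}$, that the Ces\`aro limit of $(\mcl{C}_{\povm{P},\gm}^*)^k$ is a conditional expectation onto $\mf{F}_{\mcl{C}_{\povm{P},\gm}^*}$, and then invokes Theorem~\ref{theorem:dual} plus Takesaki's uniqueness theorem for $\varphi_\gm$-preserving conditional expectations to identify the two; the final fixed-point equivalence is cited from the literature. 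You instead stay at the primal level: the finite-dimensional mean ergodic theorem (CPTP maps are trace-norm contractions, hence power-bounded, so the peripheral spectrum is semisimple and the Ces\`aro means converge to the projection onto $\mathrm{Fix}(\mcl{C}_{\povm{P},\gm})$ along $\operatorname{ran}(\id-\mcl{C}_{\povm{P},\gm})$), plus explicit block computations with the MPPP partition giving $\mcl{C}_{\povm{P},\gm}\circ\Delta_{\povm{P},\gm}=\Delta_{\povm{P},\gm}=\Delta_{\povm{P},\gm}\circ\mcl{C}_{\povm{P},\gm}$ and $\mcl{C}_{\povm{P},\gm}(\Pi_y\gm)=\Pi_y\gm$ (these computations, including $P_x\Pi_y=P_x$ for $x\in\mathcal{X}_y$ and $=0$ otherwise, are correct). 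What your approach buys is elementarity — no conditional-expectation or Takesaki machinery, and a self-contained proof of the second assertion ($\Leftarrow$ trivially from the Ces\`aro formula, $\Rightarrow$ directly from $\rho=\sum_yc_y\Pi_y\gm$ and $\mcl{C}_{\povm{P},\gm}(\Pi_y\gm)=\Pi_y\gm$, which the paper instead outsources to references); what the paper's route buys is a one-line identification via a named uniqueness theorem and a structure that generalizes beyond finite dimensions.

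One step you should tighten: the sentence claiming that matching $\Delta_{\povm{P},\gm}$ with the ergodic projection $E$ ``reduces to verifying the intertwining relations in (b)'' is not quite right as stated — intertwining plus idempotence only give $E\Delta_{\povm{P},\gm}=\Delta_{\povm{P},\gm}E=\Delta_{\povm{P},\gm}$ (any idempotent commuting with $\mcl{C}_{\povm{P},\gm}$ whose range lies in the fixed-point set satisfies these, e.g.\ trivially when $\mcl{C}_{\povm{P},\gm}=\id$), so you must also show $\mathrm{Fix}(\mcl{C}_{\povm{P},\gm})\subseteq\operatorname{span}\{\Pi_y\gm\}_y$. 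This is exactly where your ingredient (c), i.e.\ Theorem~\ref{theorem:dual}, must enter: for instance, pass to adjoints, note that $E^*$ is the ergodic projection onto $\mf{F}_{\mcl{C}_{\povm{P},\gm}^*}=\sn\{\Pi_y\}_y$, that $\Delta_{\povm{P},\gm}^*$ is an idempotent with range $\sn\{\Pi_y\}_y$ acting as the identity there, and that the dualized intertwining gives $\Delta_{\povm{P},\gm}^*E^*=\Delta_{\povm{P},\gm}^*$ while $\operatorname{ran}E^*=\sn\{\Pi_y\}_y$ gives $\Delta_{\povm{P},\gm}^*E^*=E^*$, whence $E^*=\Delta_{\povm{P},\gm}^*$ and $E=\Delta_{\povm{P},\gm}$ (a dimension count $\dim\mathrm{Fix}(\mcl{C}_{\povm{P},\gm})=\dim\mathrm{Fix}(\mcl{C}_{\povm{P},\gm}^*)=|\set{Y}|$ works equally well). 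With that wiring made explicit, your proof is complete.
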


The proofs in this subsection rely on foundational results in quantum information theory and operator algebras regarding quantum statistical sufficiency and conditional expectations, especially the theorems of Petz and Takesaki~\cite{petz1988sufficiency,takesaki1972conditional}. To apply these results to our context, we explicitly state our primary assumptions. First, we work within a finite-dimensional Hilbert space. Second, the prior state $\gm$ is invertible, or faithful. Finally, we assume that $\mcl{M}_{\povm{P}}(\gm)$ is invertible (i.e., faithful). These assumptions, particularly the condition that $\gm$ and $\mcl{M}_{\povm{P}}(\gm)$ are invertible, are necessary for applying Lemma~\ref{lemma:dual} below.

The proof of Theorem~\ref{theorem:dual} relies on the following result, which specializes~\cite[Theorem 2]{petz1988sufficiency} and \cite[Lemma 6.12]{hiai2021quantum} to the present situation.

\begin{lemma}
\label{lemma:dual}
    Let $\povm{P}=\{P_x\}_{x\in\set{X}}$ be a POVM and $\gm$ be an invertible quantum state such that $\mcl{M}_{\povm{P}}(\gm)$ is invertible. Let $B\in\sn\{\ketbra{x}\}_{x\in\set{X}}$. The following are then equivalent:
    \begin{enumerate}
        \item $(\mcl{R}_{\mcl{M}_{\povm{P}},\gm}^*\circ\mcl{M}_{\povm{P}}^*)(B)=B\;.$
        \item $\mcl{M}_{\povm{P}}^{*}(B^{\dagger}B)=\mcl{M}_{\povm{P}}^{*}(B)^{\dagger}\mcl{M}_{\povm{P}}^{*}(B)$ and $\;\mcl{M}_{\povm{P}}^*(B)\gm=\gamma\mcl{M}_{\povm{P}}^*(B)\;.$
    \end{enumerate}
    Furthermore, we get
    \begin{align}
        \mf{E}_{\mcl{C}_{\povm{P},\gm}^*}\coloneq\left\{B\in\sn\{\ketbra{x}\}_{x\in\set{X}}\mid(\mcl{R}_{\mcl{M}_{\povm{P}}, \gm}^*\circ\mcl{M}_{\povm{P}}^*)(B)=B\right\}\simeq\mf{F}_{\mcl{C}_{\povm{P},\gm}^*}\;,
    \end{align}
    where $\mcl{M}_{\povm{P}}^{*}$ is a $*$-isomorphism from $\mf{E}_{\mcl{C}_{\povm{P},\gm}^*}$ to $\mf{F}_{\mcl{C}_{\povm{P},\gm}^*}$, whose inverse is $\mcl{R}_{\mcl{M}_{\povm{P}},\gm}^*$.
\end{lemma}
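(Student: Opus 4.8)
The plan is to prove \cref{lemma:dual} by recognizing it as an instance of the Petz--Takesaki structure theorem for sufficient channels, applied to the finite-dimensional commutative algebra $\mf{B}\coloneq\sn\{\ketbra{x}\}_{x\in\set{X}}$ equipped with the state $\mcl{M}_{\povm{P}}(\gm)$, mapped by the Schwarz map $\mcl{M}_{\povm{P}}^*$ into $\mcl{L}(\sH)$ equipped with $\gm$. First I would set up the dual picture carefully: $\mcl{M}_{\povm{P}}$ is a quantum-classical channel, so $\mcl{M}_{\povm{P}}^*(\ketbra{x})=P_x$ is positive and unital, and its Petz dual $\mcl{R}^*_{\mcl{M}_{\povm{P}},\gm}$ acts on $\mcl{L}(\sH)$ with output in $\mf{B}$. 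The composition $\mcl{R}^*_{\mcl{M}_{\povm{P}},\gm}\circ\mcl{M}_{\povm{P}}^*$ is then a unital completely positive map on $\mf{B}$ fixing $\mcl{M}_{\povm{P}}(\gm)$, and condition (i) says $B$ is in its fixed-point set. The equivalence (i)$\iff$(ii) is exactly \cite[Theorem 2]{petz1988sufficiency} / \cite[Lemma 6.12]{hiai2021quantum}: $B$ is fixed iff $\mcl{M}_{\povm{P}}^*$ is multiplicative on $B$ (i.e.\ $\mcl{M}_{\povm{P}}^*(B^\dagger B)=\mcl{M}_{\povm{P}}^*(B)^\dagger\mcl{M}_{\povm{P}}^*(B)$, which by the Schwarz inequality forces $B$ into the multiplicative domain) together with the commutation $\mcl{M}_{\povm{P}}^*(B)\gm=\gm\mcl{M}_{\povm{P}}^*(B)$ expressing that the image lands in the centralizer of $\gm$. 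I would cite these references for the general statement and only verify that our hypotheses --- finite dimension, $\gm$ invertible, $\mcl{M}_{\povm{P}}(\gm)$ invertible --- match the hypotheses needed there (faithfulness of both states on their respective algebras).

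Next I would establish the ``Furthermore'' part. The set $\mf{E}_{\mcl{C}^*_{\povm{P},\gm}}$ is the fixed-point set of the unital Schwarz map $\mcl{R}^*_{\mcl{M}_{\povm{P}},\gm}\circ\mcl{M}_{\povm{P}}^*$ on $\mf{B}$; by the fixed-point theory of Schwarz maps (Arveson; see also \cite[9.2]{petz2007quantum}) this is a $*$-subalgebra of $\mf{B}$. By characterization (ii), $\mcl{M}_{\povm{P}}^*$ restricted to $\mf{E}_{\mcl{C}^*_{\povm{P},\gm}}$ is multiplicative and $*$-preserving, hence a $*$-homomorphism; it is injective because $\mcl{R}^*_{\mcl{M}_{\povm{P}},\gm}\circ\mcl{M}_{\povm{P}}^*$ acts as the identity there, so $\mcl{R}^*_{\mcl{M}_{\povm{P}},\gm}$ is a left inverse. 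It remains to identify the image with $\mf{F}_{\mcl{C}^*_{\povm{P},\gm}}=\{A\in\mcl{L}(\sH)\mid\mcl{C}^*_{\povm{P},\gm}(A)=A\}$. For this I would check the two inclusions: if $B\in\mf{E}_{\mcl{C}^*_{\povm{P},\gm}}$ then $A\coloneq\mcl{M}_{\povm{P}}^*(B)$ satisfies $\mcl{C}^*_{\povm{P},\gm}(A)=\mcl{M}_{\povm{P}}^*(\mcl{R}^*_{\mcl{M}_{\povm{P}},\gm}\circ\mcl{M}_{\povm{P}}^*(B))=\mcl{M}_{\povm{P}}^*(B)=A$ using $\mcl{C}^*_{\povm{P},\gm}=\mcl{M}_{\povm{P}}^*\circ\mcl{R}^*_{\mcl{M}_{\povm{P}},\gm}$; conversely, if $\mcl{C}^*_{\povm{P},\gm}(A)=A$, set $B\coloneq\mcl{R}^*_{\mcl{M}_{\povm{P}},\gm}(A)\in\mf{B}$ and verify $\mcl{R}^*_{\mcl{M}_{\povm{P}},\gm}\circ\mcl{M}_{\povm{P}}^*(B)=\mcl{R}^*_{\mcl{M}_{\povm{P}},\gm}(\mcl{C}^*_{\povm{P},\gm}(A))=\mcl{R}^*_{\mcl{M}_{\povm{P}},\gm}(A)=B$, so $B\in\mf{E}_{\mcl{C}^*_{\povm{P},\gm}}$ and $\mcl{M}_{\povm{P}}^*(B)=\mcl{M}_{\povm{P}}^*(\mcl{R}^*_{\mcl{M}_{\povm{P}},\gm}(A))=\mcl{C}^*_{\povm{P},\gm}(A)=A$. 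This gives the bijection $\mcl{M}_{\povm{P}}^*\colon\mf{E}_{\mcl{C}^*_{\povm{P},\gm}}\to\mf{F}_{\mcl{C}^*_{\povm{P},\gm}}$ with inverse $\mcl{R}^*_{\mcl{M}_{\povm{P}},\gm}$, and transports the product, so it is a $*$-isomorphism of algebras.

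The main obstacle I anticipate is not the algebraic bookkeeping but correctly matching the adjoint conventions and the hypotheses with the cited structure theorem: \cite[Theorem 2]{petz1988sufficiency} is stated for a 2-positive (or Schwarz) unital map and a faithful invariant state, and one must be careful that the \emph{dual} map $\mcl{R}^*_{\mcl{M}_{\povm{P}},\gm}\circ\mcl{M}_{\povm{P}}^*$ (acting in the Heisenberg picture on $\mf{B}$) really is the relevant object, with $\mcl{M}_{\povm{P}}(\gm)$ as its faithful fixed point --- faithfulness here is precisely why the invertibility assumption on $\mcl{M}_{\povm{P}}(\gm)$ is imposed, and why we restrict to $\set{X}_+$ implicitly. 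A secondary subtlety is that $\mcl{M}_{\povm{P}}^*$ need only be a Schwarz (2-positive) map rather than a $*$-homomorphism on all of $\mf{B}$, so the multiplicativity in (ii) genuinely has to be extracted on the fixed-point subalgebra via the equality case of the operator Schwarz inequality; I would make sure the reference I invoke supplies exactly that equality condition. Once these conventions are pinned down, everything else is a direct transcription.
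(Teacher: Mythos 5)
Your proposal follows essentially the same route as the paper: both the equivalence and the ``furthermore'' part are obtained by specializing Petz's structure theorem for sufficiency (\cite[Theorem 2]{petz1988sufficiency}, \cite[Lemma 6.12]{hiai2021quantum}) to the commutative output algebra $\sn\{\ketbra{x}\}_{x\in\set{X}}$ with the faithful states $\mcl{M}_{\povm{P}}(\gm)$ and $\gm$, and your explicit verification of the $*$-isomorphism $\mcl{M}_{\povm{P}}^*\colon\mf{E}_{\mcl{C}_{\povm{P},\gm}^*}\to\mf{F}_{\mcl{C}_{\povm{P},\gm}^*}$ with inverse $\mcl{R}_{\mcl{M}_{\povm{P}},\gm}^*$ (via $\mcl{C}_{\povm{P},\gm}^*=\mcl{M}_{\povm{P}}^*\circ\mcl{R}_{\mcl{M}_{\povm{P}},\gm}^*$ and the multiplicative-domain argument) is correct and in fact more self-contained than the paper, which simply cites the same references for that part.

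The one place where you over-claim: the second condition in the cited theorem is not the plain commutation $\mcl{M}_{\povm{P}}^*(B)\gm=\gm\mcl{M}_{\povm{P}}^*(B)$ but the intertwining of the modular flows, $\mcl{M}_{\povm{P}}^*\big(\mcl{M}_{\povm{P}}(\gm)^{it}B\,\mcl{M}_{\povm{P}}(\gm)^{-it}\big)=\gm^{it}\mcl{M}_{\povm{P}}^*(B)\gm^{-it}$ for all $t\in\mbb{R}$. To arrive at statement (ii) one still needs the short reduction the paper carries out explicitly: since $B$ and $\mcl{M}_{\povm{P}}(\gm)$ are both diagonal in the classical basis $\{\ket{x}\}_{x\in\set{X}}$, the left-hand side equals $\mcl{M}_{\povm{P}}^*(B)$, so the condition collapses to invariance of $\mcl{M}_{\povm{P}}^*(B)$ under conjugation by $\gm^{it}$ for all $t$, i.e.\ $[\mcl{M}_{\povm{P}}^*(B),\gm]=0$. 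This is a one-line fix rather than a flaw in the strategy, but as written your assertion that the reference is ``exactly'' the equivalence (i)$\iff$(ii) skips the only step of the proof that is specific to the quantum-classical setting.
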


\begin{proof}
    By applying results in Ref.~\cite[Theorem 2]{petz1988sufficiency}, \cite[Lemma 6.12]{hiai2021quantum}, in our setting, it follows that
    \begin{align}
        (\mcl{R}_{\mcl{M}_{\povm{P}},\gm}^*\circ\mcl{M}_{\povm{P}}^*)(B)=B
    \end{align}
    if and only if $\mcl{M}_{\povm{P}}^{*}(B^{\dagger}B)=\mcl{M}_{\povm{P}}^{*}(B)^{\dagger}\mcl{M}_{\povm{P}}^{*}(B)$ and
    \begin{align}
        \mcl{M}_{\povm{P}}^{*}\Big(\mcl{M}_{\povm{P}}(\gm)^{it}B\mcl{M}_{\povm{P}}(\gm)^{-it}\Big)=\gm^{it}\mcl{M}_{\povm{P}}^{*}(B)\gm^{-it}\;,
        \label{eq:MBBMBM}
    \end{align}
    for all $t\in\mbb{R}$. In our case, using the fact that the image of $\mcl{M}_{\povm{P}}$ (the domain of $\mcl{M}_{\povm{P}}^*$) is inside of $\sn\{\ketbra{x}\}_{x\in\set{X}}$, we obtain $\mcl{M}_{\povm{P}}(\gm)^{it}B\mcl{M}_{\povm{P}}(\gm)^{-it}=B$. Therefore, condition (\ref{eq:MBBMBM}) is equivalent to
    \begin{align}
        \mcl{M}_{\povm{P}}^{*}(B)=\gm^{it}\mcl{M}_{\povm{P}}^{*}(B)\gm^{-it}\;,\qquad\f t\in\mathbb{R}\;.\label{eq:MBGB}
    \end{align}
This means that  $[\mcl{M}_{\povm{P}}^{*}(B),\gm^{-it}]=0$ for $t\neq0$. Since $\gm$ is Hermitian and $\gm=(\gm^{it})^{1/it}$,  $[\mcl{M}_{\povm{P}}^*(B),\gm]=0$.
    Thus, condition~(\ref{eq:MBBMBM}) is equivalent to $\mcl{M}_{\povm{P}}^*(B)\gm=\gamma\mcl{M}_{\povm{P}}^*(B)$. The converse implication is trivial.

    Furthermore, as shown in~\cite[Theorem 2]{petz1988sufficiency},~\cite[Lemma 6.12]{hiai2021quantum}, $\mcl{M}_{\povm{P}}^*$ restricted onto $\mf{E}_{\mcl{C}_{\povm{P},\gm}^*}$ is a $*$-isomorphism to $\mf{F}_{\mcl{C}_{\povm{P},\gm}^*}$ whose inverse is $\mcl{R}_{\mcl{M}_{\povm{P}},\gm}^*$.
\end{proof}

\ 

We are now ready to prove Theorem~\ref{theorem:dual} and Theorem~\ref{theorem:rdm-coarse}.

\begin{proof}[Proof of Theorem~\ref{theorem:dual}]
    
    It is straightforward by calculation from (\ref{eq:CGMdfn}) and the definition of MPPP (Definition \ref{definition:mppp}) that $\mf{F}_{\mcl{C}_{\povm{P},\gm}^*}\supseteq\sn\left\{\Pi_y\right\}_{y\in\set{Y}}$.
    In the following, we prove the opposite relation $\mf{F}_{\mcl{C}_{\povm{P},\gm}^*}\subseteq\sn\{\Pi_y\}_{y\in\set{Y}}$. By Lemma~\ref{lemma:dual}, $\mcl{M}_{\povm{P}}^{*}$ is an isomorphism from $\mf{E}_{\mcl{C}_{\povm{P},\gm}^*}$ to $\mf{F}_{\mcl{C}_{\povm{P},\gm}^*}$, where
    \begin{align}
        \mf{E}_{\mcl{C}_{\povm{P},\gm}^*}=\left\{B\in\sn\{\ketbra{x}\}_{x\in\set{X}}\mid\mcl{M}_{\povm{P}}^*(B^{\dagger}B)=\mcl{M}_{\povm{P}}^*(B)^{\dagger}\mcl{M}_{\povm{P}}^*(B),\; \mcl{M}_{\povm{P}}^*(B)\gm=\gamma\mcl{M}_{\povm{P}}^*(B)\right\}\;.
        \label{eq:CeqBspan}
    \end{align}
    Our strategy will be to prove that
    \begin{align}
        \mf{E}_{\mcl{C}_{\povm{P},\gm}^*}\subseteq\sn\left\{\sum_{x\in\set{X}_y}\ketbra{x}\right\}_{y\in\set{Y}},
        \label{eq:ECeqSPANp}
    \end{align}
    where $\{\set{X}_y\}_{y\in\set{Y}}$ is the irreducible $\gamma$-disconnected partition of $\mathcal{X}$ (see Definition \ref{dfn:irreducibledisconnected}). Then, the statement follows. This is because, noting that $\mcl{M}_{\povm{P}}^{*}(\ketbra{x})=P_x$ and $\mcl{M}_{\povm{P}}^{*}(\sum_{x\in\set{X}_y}\ketbra{x})=\Pi_y$, we have
    \begin{align}
        \mf{F}_{\mcl{C}_{\povm{P},\gm}^*}
        =
        \mcl{M}_{\povm{P}}^*\left(\mf{E}_{\mcl{C}_{\povm{P},\gm}^*}\right)
        \subseteq\sn\left\{\Pi_y\right\}_{y\in\set{Y}}.
    \end{align}
    
    In order to prove (\ref{eq:ECeqSPANp}), it suffices to show that $\sum_{x\in\mathcal{X}}\alpha_x\ketbra{x}\in\mf{E}_{\mcl{C}_{\povm{P},\gm}^*}$ only if $\alpha_x=\alpha_{x'}$ for any $x,x'\in\mathcal{X}_y$ and any $y\in\mathcal{Y}$.
    We start with
    \begin{align}
        \sum_{x\in\set{X}}\alpha_x^*\alpha_xP_x=        \mcl{M}_{\povm{P}}^*\left[\sum_{x\in\set{X}}\alpha_x^*\alpha_x\ketbra{x}\right]
        =        \mcl{M}_{\povm{P}}^*\left[\left(\sum_{x\in\set{X}}\alpha_x\ketbra{x}\right)^{\dagger}\left(\sum_{x'\in\set{X}}\alpha_{x'}\ketbra{x'}\right)\right]
    \end{align}
and
\begin{align}    
        \left(\sum_{x\in\set{X}}\alpha_x^*P_x\right)\left(\sum_{x'\in\set{X}}\alpha_{x'}P_{x'}\right)
        =\mcl{M}_{\povm{P}}^*\left(\sum_{x\in\set{X}}\alpha_x\ketbra{x}\right)^{\dagger}\mcl{M}_{\povm{P}}^*\left(\sum_{x'\in\set{X}}\alpha_{x'}\ketbra{x'}\right).
    \end{align}
    From the first condition in (\ref{eq:CeqBspan}), the above two are equal. 
    Therefore, for $\xi\in\{\openone,\gamma\}$, it holds that
    \begin{align}
    \sum_{x,x'\in\set{X}}\alpha_x^*\alpha_xP_x\xi P_{x'}
        &=        \sum_{x\in\set{X}}\alpha_x^*\alpha_xP_x\xi
        =        \left(\sum_{x\in\set{X}}\alpha_x^*P_x\right)\left(\sum_{x'\in\set{X}}\alpha_{x'}P_{x'}\right)\xi,
    \end{align}
    where we made use of the normalization condition $\sum_{x'\in\set{X}}P_{x'}=\openone$. Due to the second condition in (\ref{eq:CeqBspan}), the last one in the above equality is equal to
    \begin{align}
       \left(\sum_{x\in\set{X}}\alpha_x^*P_x\right)\xi\left(\sum_{x'\in\set{X}}\alpha_{x'}P_{x'}\right)       =\sum_{x,x'\in\set{X}}\alpha_x^*\alpha_{x'}P_x\xi P_{x'}.
    \end{align}
    Hence, we have
    \begin{align}        \sum_{x,x'\in\set{X}}\alpha_x^*(\alpha_x-\alpha_{x'})P_x\xi P_{x'}
        =0\;.
    \end{align}    
The above, taking its trace on $\xi$, in turn yields
    \begin{align}        \sum_{x,x'\in\set{X}}\alpha_x^*(\alpha_x-\alpha_{x'}){\rm Tr}\left[\left(\xi^{\frac{1}{2}}P_x\xi^{\frac{1}{2}}\right)\left(\xi^{\frac{1}{2}} P_{x'}\xi^{\frac{1}{2}}\right)\right]
        =0.
    \end{align}
    Obviously, this relation holds when the roles of $x$ and $x'$ are exchanged. Summing up these two, we arrive at
    \begin{align}               
        \sum_{x,x'\in\set{X}}|\alpha_{x}-\alpha_{x'}|^2{\rm Tr}\left[\left(\xi^{\frac{1}{2}}P_x\xi^{\frac{1}{2}}\right)\left(\xi^{\frac{1}{2}} P_{x'}\xi^{\frac{1}{2}}\right)\right]=0.
        \label{eq:alphaxx}
    \end{align}
Note that $\xi^{\frac{1}{2}}P_x\xi^{\frac{1}{2}}\geq0$ and thus ${\rm Tr}[(\xi^{\frac{1}{2}}P_x\xi^{\frac{1}{2}})(\xi^{\frac{1}{2}} P_{x'}\xi^{\frac{1}{2}})]\geq0$, with the equality if and only if $(\xi^{\frac{1}{2}}P_x\xi^{\frac{1}{2}})(\xi^{\frac{1}{2}} P_{x'}\xi^{\frac{1}{2}})=0$. Because of the invertibility of $\xi$, this condition is equivalent to $P_x\xi P_{x'}=0$. Hence, (\ref{eq:alphaxx}) holds only if $\alpha_x=\alpha_{x'}$ for any $x$ and $x'$ such that $P_x\xi P_{x'}\neq0$. 
This argument applies for both of $\xi=\openone,\gamma$.
Thus, it must hold that $\alpha_x=\alpha_{x'}$ whenever $P_xP_{x'}\neq0$ or $P_x\gamma P_{x'}\neq0$, which implies that $\alpha_x$ must be constant in $\mathcal{X}_y$.   
\end{proof}

\begin{proof}[Proof of Theorem~\ref{theorem:rdm-coarse}]
We start by noticing that the map defined as
    \begin{align}
        \Delta_{\povm{P},\gm}^*(\bigcdot)=\sum_{y\in\set{Y}}\tr{(\bigcdot)\frac{\Pi_y\gm}{\tr{\Pi_y\gm}}}\Pi_y
    \end{align}
    is a \emph{conditional expectation}~\cite{umegaki1962conditional},~\cite[9.2 Conditional Expectations]{petz2007quantum} onto $\sn\left\{\Pi_y\right\}_{y\in\set{Y}}$, whereas 
    \begin{align}
        \mcl{E}^*(\bigcdot)=\lim_{n\to\infty}\frac{1}{n}\sum_{k=1}^{n}\left(\mcl{C}_{\povm{P},\gm}^*\right)^k(\bigcdot)
    \end{align}
    is a conditional expectation on $\mf{F}_{\mcl{C}_{\povm{P},\gm}^*}=\left\{A\in\mcl{L}(\sH)\mid\mcl{C}_{\povm{P},\gm}^*(A)=A\right\}$~\cite[Lemma 11]{hayden2004structure}.
    
    Let $\varphi_\gm$ be defined as
    \begin{align}
        \varphi_\gm:A\mapsto\tr{\gm A}\;.
    \end{align}
    Then, we get $\varphi_\gm\circ\Delta_{\povm{P},\gm}^*=\varphi_\gm$ and $\varphi_\gm\circ\mcl{E}^*=\varphi_\gm$. Thus, from Theorem~\ref{theorem:dual} and the uniqueness of $\varphi_\gm$-preserving conditional expectation (or \textit{Takesaki theorem})~\cite{takesaki1972conditional},~\cite[Theorem A.10]{hiai2021quantum},
    \begin{align}
        \Delta_{\povm{P},\gm}^*(\bigcdot)=\lim_{n\to\infty}\frac{1}{n}\sum_{k=1}^{n}(\mcl{C}_{\povm{P},\gm}^*)^k(\bigcdot)\;,
    \end{align}
    i.e., 
    \begin{align}
        \Delta_{\povm{P},\gm}(\bigcdot)=\lim_{n\to\infty}\frac{1}{n}\sum_{k=1}^{n}\mcl{C}_{\povm{P},\gm}^k(\bigcdot)\;.
    \end{align}
    Furthermore, for any quantum state $\rho$, we have~\cite[Proof of Theorem 5]{jencova2017preservation},~\cite[Example 9.4.]{petz2007quantum}
    \begin{align}
        \lim_{n\to\infty}\frac{1}{n}\sum_{k=1}^{n}\mcl{C}_{\povm{P},\gm}^k(\rho)=\rho
    \end{align}
    if and only if
    \begin{align}
        \mcl{C}_{\povm{P},\gm}(\rho)=\rho\;.
    \end{align}
\end{proof}

\section{Resource theory of microscopicity}
\label{section:microscopicity}

Having established the structure of macroscopic states, we now ask: what makes a state ``more or less macroscopic/microscopic'' than another? More crucially, how does this microscopicity relate to the visibility of quantum correlations under restricted observations? To answer this, we develop a resource theory of microscopicity, which captures observer-dependent irreversibility and correlation loss.

As briefly mentioned after Theorem~\ref{theorem:dual}, the MPPP associated to a measurement $\povm{P}$ and a prior state $\gamma$ encapsulates the information-preserving structure induced by both the measurement and the prior: all macroscopic states are block-diagonal with respect to it. The MPPP thus plays the role of an \textit{inferential reference frame} generalizing the ``choice of a preferred basis'' in the resource theory of coherence, where incoherent states are diagonal in a fixed basis. The difference is that, here, the inferential structure is shaped not only by the measurement but also by the prior $\gamma$. To make this notion precise, we construct a resource theory of \textit{microscopicity}, in which macroscopic states are treated as free states, and introduce the \textit{relative entropy of microscopicity} as a measure of irretrodictability, i.e., the extent to which microscopic details are lost and unrecoverable from macroscopic observations~\cite{buscemi2022observational,nagasawa2024generic}.

\subsection{Free states and operations}
\label{subsection:microscopicity}

In the framework of \textit{resource theories}~\cite{Chitamber2019quantum,gour2019how,gour2024resources}, a resource theory can be constructed either from the set of free states (i.e., the so-called ``geometric approach''~\cite{Zhou_2020}) or from the set of free operations (the ``operational approach''). In either case, however, it is important that the minimal conditions for internal consistency are satisfied: (i) every free state is mapped to a free state by a free operation, and (ii) every free state can be generated by a free operation.
A resource theory is said to be {\it convex} if the set of free states and/or the set of free operations is closed under convex combination.
A convex resource theory is further said to be {\it affine} if no resource state can be represented as a linear combination of free states~\cite[9.3 Affine Resource Theories]{gour2024resources}.

The set of free states in our resource theory of microscopicity is the set of macroscopic states, as characterized in Theorem \ref{theorem:macroscopic}.
Given a pair $(\povm{P}, \gamma)$, we denote such a set as
\begin{align}
    \mf{M}(\povm{P},\gm)&\coloneq\{\sigma\in\mcl{S}(\sH)\mid\Delta_{\povm{P},\gm}(\sigma)=\sigma\}\nonumber\\
    &=\left\{\left.\sum_{y\in\set{Y}}p_y\frac{\Pi_y\gm}{\tr{\Pi_y\gm}}\right|p_y\ge 0,\;\sum_yp_y=1\right\}
    \label{eq:freestates}\\
    &=\operatorname{span}\{\Pi_y\gm\}_y\cap\mcl{S}(\mcl{H})\;,\nonumber
\end{align}
where $\povm{\Pi}_{\povm{P},\gamma}=\{\Pi_y\}_y$ is the uniquely defined $\gamma$-commuting MPPP and $\Delta_{\povm{P},\gm}$ is the map defined by Eq.~(\ref{eq:RDMrtm}). The final equality in Eq.~(\ref{eq:freestates}) follows because all blocks $\Pi_y\gm$ are non-zero and orthogonal to each other by construction, so that any linear combination $\sum_{y\in\set{Y}}\lambda_y\Pi_y\gm$, with $\lambda_y\in\mathbb{R}$, belongs to $\mcl{S}(\mcl{H})$ if and only if $\lambda_y\ge 0$ and $\sum_y\lambda_y\tr{\Pi_y\gm}=1$, i.e., it is in fact a convex combination of the extremal points. Alternatively, although more implicitly, affinity of $\mf{M}(\povm{P},\gm)$ can be seen as a consequence of its definition, where $\sigma\in\mf{M}(\povm{P},\gm)$ if and only if $\Delta_{\povm{P},\gm}(\sigma)=\sigma$, where $\Delta_{\povm{P},\gm}$ is linear. Anyway, in conclusion, the resource theory of microscopicity is convex and affine.

Since for any $\rho\in\mcl{S}(\mcl{H})$ we have $\Delta_{\povm{P},\gm}(\rho)\in\mf{M}(\povm{P}, \gm)$, while $\Delta_{\povm{P},\gm}(\sigma)=\sigma$ for any $\sigma\in\mf{M}(\povm{P}, \gm)$, the map $\Delta_{\povm{P},\gm}$ is \emph{the resource destroying map} (RDM)~\cite{liu2017resource} in the resource theory of microscopicity. From it, free operations (i.e., \textit{macroscopic operations} in our language) can be introduced as follows.

\begin{definition}[Macroscopic operations]
    We introduce three classes of free operations.
A CPTP map $\mcl{E}$ is called a 
\begin{itemize}
    \item \emph{microscopicity non-generating operation} (MNO) if $\sigma\in\mf{M}(\povm{P}, \gm)$ $\implies$ $\mcl{E}(\sigma)\in\mf{M}(\povm{P},\gm)$;
    \item \emph{resource-destroying covariant operation} (RCO) if $\mcl{E}\circ\Delta_{\povm{P},\gm}=\Delta_{\povm{P},\gm}\circ\mcl{E}$;
    \item \emph{coarse-graining covariant operation} (CCO) if $\mcl{E}\circ\mcl{C}_{\povm{P},\gm}=\mcl{C}_{\povm{P},\gm}\circ\mcl{E}$.
\end{itemize}

\end{definition}

MNO and RCO are natural generalizations of maximally incoherent operations (MIO) and dephasing-covariant incoherent operations (DIO), respectively, in the resource theory of coherence~\cite{aberg2006quantifying,chitambar2016critical,streltsov2017colloquium} (see Example~\ref{example:coherence} below). CCO instead constitutes a new class, since in general $\mcl{C}_{\povm{P},\gm}\neq \Delta_{\povm{P},\gm}$. The following proposition shows that the above classes of macroscopic operations in fact form a hierarchy, which can be strict or not, depending on the POVM $\povm{P}$ and the prior $\gamma$.

\begin{proposition}\label{proposition:CCOMNO}
The three classes of free operations satisfy the following inclusion relation:
\begin{align}
    \mr{CCO}\subseteq\mr{RCO}\subseteq\mr{MNO}\;.
    \label{eq:freeOPhie}
\end{align}
A sufficient condition for the first equality is $\povm{P}\preceq \povm{\Pi}_{\povm{P},\gamma}$.
The second equality holds if and only if the MPPP is trivial, i.e. $\povm{\Pi}_{\povm{P},\gamma}=\{\openone\}$.
\end{proposition}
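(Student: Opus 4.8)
\textbf{Proof plan for Proposition~\ref{proposition:CCOMNO}.}
The plan is to establish the two inclusions separately, then treat the equality conditions. For $\mr{RCO}\subseteq\mr{MNO}$, I would argue directly: if $\mcl{E}\circ\Delta_{\povm{P},\gm}=\Delta_{\povm{P},\gm}\circ\mcl{E}$ and $\sigma\in\mf{M}(\povm{P},\gm)$, then $\Delta_{\povm{P},\gm}(\sigma)=\sigma$, so $\mcl{E}(\sigma)=\mcl{E}(\Delta_{\povm{P},\gm}(\sigma))=\Delta_{\povm{P},\gm}(\mcl{E}(\sigma))$, which means $\mcl{E}(\sigma)$ is a fixed point of $\Delta_{\povm{P},\gm}$, hence macroscopic. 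For $\mr{CCO}\subseteq\mr{RCO}$, the natural route is to use Theorem~\ref{theorem:rdm-coarse}: since $\Delta_{\povm{P},\gm}=\lim_{n\to\infty}\frac1n\sum_{k=1}^n\mcl{C}_{\povm{P},\gm}^k$, commutation of $\mcl{E}$ with $\mcl{C}_{\povm{P},\gm}$ implies commutation of $\mcl{E}$ with every power $\mcl{C}_{\povm{P},\gm}^k$, and passing to the Ces\`aro limit (which is legitimate since these are finite-dimensional linear maps and the limit exists) yields $\mcl{E}\circ\Delta_{\povm{P},\gm}=\Delta_{\povm{P},\gm}\circ\mcl{E}$.

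For the first equality, under the hypothesis $\povm{P}\preceq\povm{\Pi}_{\povm{P},\gamma}$: combined with the always-true $\povm{\Pi}_{\povm{P},\gamma}\preceq\povm{P}$ (by definition of MPPP), this makes $\povm{P}$ and $\povm{\Pi}_{\povm{P},\gamma}$ post-processing-equivalent, and by Lemma~\ref{lemma:groupings} (applied to the PVM $\povm{\Pi}_{\povm{P},\gamma}$) the post-processings are deterministic, so $\povm{P}$ is itself, up to relabeling/merging of null outcomes, the PVM $\povm{\Pi}_{\povm{P},\gamma}$. In that case the Petz map reconstruction shows $\mcl{C}_{\povm{P},\gm}$ coincides with $\Delta_{\povm{P},\gm}$ (both become $\sum_y\tr{\Pi_y\,\bigcdot}\,\Pi_y\gm/\tr{\Pi_y\gm}$, using that $\gm^{1/2}\Pi_y\gm^{1/2}=\Pi_y\gm$ since $[\Pi_y,\gm]=0$), and therefore the defining conditions of CCO and RCO are literally the same, giving $\mr{CCO}=\mr{RCO}$.

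For the second equality, $\mr{RCO}=\mr{MNO}$: the ``if'' direction, when $\povm{\Pi}_{\povm{P},\gamma}=\{\openone\}$, is immediate because then $\Delta_{\povm{P},\gm}(\bigcdot)=\tr{\bigcdot}\,\gm$ (the single block being all of $\sH$), the free set $\mf{M}(\povm{P},\gm)=\{\gm\}$ is a singleton, so MNO reduces to ``$\mcl{E}(\gm)=\gm$'', and a direct check shows any trace-preserving $\mcl{E}$ with $\mcl{E}(\gm)=\gm$ satisfies $\mcl{E}\circ\Delta_{\povm{P},\gm}=\Delta_{\povm{P},\gm}\circ\mcl{E}$ (both sides send $\rho$ to $\tr{\rho}\gm$). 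For the ``only if'' direction — which I expect to be the main obstacle — I would argue by contraposition: if the MPPP is nontrivial, i.e., $|\set{Y}|\geq 2$, I must exhibit a channel $\mcl{E}$ that preserves the free set $\mf{M}(\povm{P},\gm)$ but fails to commute with $\Delta_{\povm{P},\gm}$. A natural candidate is a permutation-type channel that swaps or relabels the blocks $\Pi_y\gm/\tr{\Pi_y\gm}$ when they happen to have equal ``weight'' $\tr{\Pi_y\gm}$ but are not otherwise interchangeable; more robustly, one can take a measure-and-prepare map that measures $\povm{\Pi}_{\povm{P},\gamma}$ and re-prepares a fixed macroscopic state — this is MNO (its image lies in $\mf{M}(\povm{P},\gm)$) but a short computation shows it does not in general commute with $\Delta_{\povm{P},\gm}$ on inputs outside the free set, precisely because $\Delta_{\povm{P},\gm}$ retains the outcome statistics of $\povm{\Pi}_{\povm{P},\gamma}$ whereas the re-preparation map erases them. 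The delicate point is checking that such a separating channel exists for \emph{every} nontrivial MPPP, including degenerate cases; I would handle this by reducing to the two-block case via restriction and using the analogous known separation MIO $\neq$ DIO in the resource theory of coherence (Example~\ref{example:coherence}) as a template, adapting it to the prior-twisted blocks $\Pi_y\gm$.
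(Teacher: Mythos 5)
Your two inclusions, the sufficiency argument for $\mr{CCO}=\mr{RCO}$, and the ``if'' half of the second equality follow essentially the paper's own route (Ces\`aro limit for $\mr{CCO}\subseteq\mr{RCO}$, the one-line fixed-point argument for $\mr{RCO}\subseteq\mr{MNO}$, $\mcl{C}_{\povm{P},\gm}=\Delta_{\povm{P},\gm}$ under $\povm{P}\preceq\povm{\Pi}_{\povm{P},\gamma}$, and $\Delta_{\povm{P},\gm}=\tr{\bigcdot}\gm$ in the trivial case). One small imprecision: Lemma~\ref{lemma:groupings} constrains only the post-processing whose \emph{output} is the PVM, so from $\povm{P}\simeq\povm{\Pi}_{\povm{P},\gamma}$ you cannot conclude that $\povm{P}$ \emph{is} the PVM up to relabeling/null outcomes; $\povm{P}$ may be a probabilistic splitting such as $\{\tfrac12\Pi_1,\tfrac12\Pi_1,\Pi_2\}$. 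Since each $P_x$ is nevertheless proportional to some $\Pi_y$, regrouping the sum in Eq.~\eqref{eq:CGMdfn} still yields $\mcl{C}_{\povm{P},\gm}=\Delta_{\povm{P},\gm}$, so this is harmless.

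The genuine gap is the ``only if'' half of $\mr{RCO}=\mr{MNO}$: both witnesses you name actually lie in $\mr{RCO}$. Re-preparing a fixed macroscopic state $\sigma_0$ after measuring $\povm{\Pi}_{\povm{P},\gamma}$ is the constant channel $\rho\mapsto\tr{\rho}\sigma_0$, and it commutes with $\Delta_{\povm{P},\gm}$ because $\Delta_{\povm{P},\gm}(\sigma_0)=\sigma_0$; more generally any map $\rho\mapsto\sum_y\tr{\Pi_y\rho}\,\tau_y$ with macroscopic $\tau_y$ commutes with $\Delta_{\povm{P},\gm}$, since $\tr{\Pi_y\Delta_{\povm{P},\gm}(\rho)}=\tr{\Pi_y\rho}$, and this also covers block swaps implemented by measuring the MPPP and re-preparing (unitary permutations of the blocks that preserve the free set commute with $\Delta_{\povm{P},\gm}$ as well). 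So your stated mechanism---``$\Delta$ retains the outcome statistics of $\povm{\Pi}$ whereas the re-preparation erases them''---does not create non-commutation: erasing commutes with dephasing. The separation requires a measurement that is \emph{not} block-diagonal with respect to the MPPP, which is exactly what the paper's construction does: pick a unit vector $|\phi\rangle$ with $\Pi_{y_1}|\phi\rangle\neq0\neq\Pi_{y_2}|\phi\rangle$ for $y_1\neq y_2$, measure $\{\ketbra{\phi},\openone-\ketbra{\phi}\}$, and prepare $\Pi_{y_1}\gm/\tr{\Pi_{y_1}\gm}$ resp.\ $\Pi_{y_2}\gm/\tr{\Pi_{y_2}\gm}$. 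This $\mcl{E}$ is MNO (its range is free) and satisfies $\Delta_{\povm{P},\gm}\circ\mcl{E}=\mcl{E}$, yet $\mcl{E}\circ\Delta_{\povm{P},\gm}(\ketbra{\phi})$ is a strict mixture of the two block states because $\tr{\Delta_{\povm{P},\gm}(\ketbra{\phi})\,\ketbra{\phi}}<1$ (the dephased state is mixed across two orthogonal blocks), whereas $\Delta_{\povm{P},\gm}\circ\mcl{E}(\ketbra{\phi})=\Pi_{y_1}\gm/\tr{\Pi_{y_1}\gm}$; hence $\mcl{E}\in\mr{MNO}\setminus\mr{RCO}$ for every nontrivial MPPP. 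Your fallback of adapting the MIO/DIO separation is the right instinct, but the adaptation must keep a measurement effect with support across at least two blocks; as written, your candidates never leave RCO.
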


\begin{proof}
    The first relation follows from the fact that $\Delta_{\povm{P},\gm}$ is represented by $\mcl{C}_{\povm{P},\gm}$ as Eq.~(\ref{eq:DeltaCk}). The second relation follows because $\mcl{E}(\sigma)=\mcl{E}\circ\Delta_{\povm{P},\gm}(\sigma)=\Delta_{\povm{P},\gm}\circ\mcl{E}(\sigma)\in\mf{M}(\povm{P}, \gm)$ for any $\sigma\in\mf{M}(\povm{P}, \gm)$ and $\mathcal{E}\in\mr{RCO}$.
    
    The relation $\povm{P}\preceq \povm{\Pi}_{\povm{P},\gamma}$, which is equivalent to $\povm{P}\simeq \povm{\Pi}_{\povm{P},\gamma}$ as $\povm{P}\succeq \povm{\Pi}_{\povm{P},\gamma}$ holds by definition, guarantees that any element of $\povm{P}$ is proportional to an element of $\povm{\Pi}_{\povm{P},\gamma}$, which is the case when $\povm{P}$ is obtained from $\povm{\Pi}_{\povm{P},\gamma}$ by probabilistic splitting of the measurement outcomes. 
    Hence, if it holds, we have $\mathcal{C}_{\povm{P},\gm}=\Delta_{\povm{P},\gm}$ and thus $\mr{CCO}=\mr{RCO}$.
    
    When $\Pi_{\povm{P},\gamma}=\{\openone\}$, the prior state $\gamma$ is the only macroscopic state due to (\ref{eq:freestates}). In this case, RDM takes the form of $\Delta_{\povm{P},\gm}(\:\bigcdot\:)={\rm Tr}[\:\bigcdot\:]\gamma$ and is the only element of MNO, which implies $\mr{RCO}=\mr{MNO}$. Conversely, suppose that $\povm{\Pi}_{\povm{P},\gm}=\{\Pi_y\}_{y\in\mathcal{Y}}$, where $|\mathcal{Y}|\geq2$. Let $|\phi\rangle$ be a normalized vector such that $\Pi_{y_1}|\phi\rangle\neq0$ and $\Pi_{y_2}|\phi\rangle\neq0$ for some $y_1\neq y_2$, and define an operation
    \begin{equation}
        \mathcal{E}(\bigcdot):={\rm Tr}\Big[(\bigcdot)\;|\phi\rangle\!\langle\phi|\Big]\Pi_{y_1}\gamma+{\rm Tr}\Big[(\bigcdot)\;(I-|\phi\rangle\!\langle\phi|)\Big]\Pi_{y_2}\gamma\;.
    \end{equation}
    It is straightforward that $\Delta_{\povm{P},\gm}\circ\mathcal{E}(|\phi\rangle\!\langle\phi|)=\Pi_{y_1}\gamma\neq\mathcal{E}\circ\Delta_{\povm{P},\gm}(|\phi\rangle\!\langle\phi|)$ and $\Delta_{\povm{P},\gm}\circ\mathcal{E}=\mathcal{E}$, which implies $\mr{RCO}\subsetneq\mr{MNO}$.
\end{proof}

\

\begin{remark}
In general, the first inclusion, Eq.~\eqref{eq:freeOPhie} can also be strict, that is, it could indeed be the case that $\mr{CCO}\subsetneq\mr{RCO}$. An example is given as follows.
Consider a one-qubit system ($\dim\mathcal{H}=2$) and let $\povm{P}\equiv\{P_0,P_1\}$ be a binary POVM such that
\begin{align}
    P_0=\frac{2}{3}|0\rangle\!\langle0|+\frac{1}{3}|1\rangle\!\langle1|,
    \quad
    P_1=\frac{1}{3}|0\rangle\!\langle0|+\frac{2}{3}|1\rangle\!\langle1|,
\end{align}
for which the MPPP is trivial.
For the uniform prior, the RDM takes the form of $\Delta_{\povm{P},u}(\:\bigcdot\:)={\rm Tr}[\:\bigcdot\:]u$, thus commutes with the Hadamard gate $H:=|+\rangle\!\langle0|+|-\rangle\!\langle1|$.
Nevertheless, the coarse-graining map $\mathcal{C}_{\povm{P},u}(\bigcdot)$ does not commute with $H$. The unitary channel $H$ is thus in RCO but not in CCO. The search for necessary and equivalent conditions for equality between CCO and RCO is left open for future research.
\end{remark}

\subsection{Relative Entropy of Microscopicity}

We introduce the {\it relative entropy of microscopicity} to quantify how a state is distant from the set of the macroscopic states, and analyze its property as a resource measure. 
This is a special case of the Umegaki relative entropy of resourcefulness in general resource theories~\cite{gour2024resources}.

\begin{definition}[Relative entropy of microscopicity]
    For a POVM $\povm{P}=\{P_x\}_{x}$ and an invertible quantum state $\gm$, the relative entropy of microscopicity is defined as
    \begin{align}
        D_{\povm{P},\gm}(\bigcdot)\coloneq \inf_{\sigma\in\mf{M}(\povm{P}, \gm)}D(\bigcdot\|\sigma).
    \end{align}
    
\end{definition}
\noindent 
By definition and from a general argument, it is straightforward to see that $D_{\povm{P},\gm}(\bigcdot)$ is a microscopicity monotone, that is, $D_{\povm{P},\gm}(\rho)\geq D_{\povm{P},\gm}(\mathcal{E}(\rho))$ for any $\rho\in\mathcal{S}(\mathcal{H})$ and $\mathcal{E}\in\mr{MNO}$.
The relative entropy of microscopicity can be represented in terms of RDM and is bounded below by the observational deficit:
\begin{theorem}
\label{theorem:micro}
    Let $\povm{\Pi}_{\povm{P},\gm}\equiv\povm{\Pi}=\{\Pi_y\}_{y\in\set{Y}}$ be the MPPP with respect to a quantum prior $\gm$ and a POVM $\povm{P}=\{P_x\}_{x\in\set{X}}$. Then, for any quantum state $\rho$,
    \begin{align}
        D_{\povm{P},\gm}(\rho)&=D(\rho\|\Delta_{\povm{P},\gm}(\rho))\\
        &\geq\delta_{\povm{P}}(\rho\|\sigma)\label{eq:last-ineq}
    \end{align}
    for all $\sigma\in\mf{M}(\povm{P}, \gm)$. The condition for equality in~\eqref{eq:last-ineq} is that a classical post-processing that converts $\povm{P}$ to $\povm{\Pi}_{\povm{P},\gm}$, which is represented by the conditional probability distribution $\{p(y|x)\}$ as $\Pi_y=\sum_{x\in\mathcal{X}}p(y|x)P_x$, is sufficient with respect to the dichotomy $(\mcl{M}_{\povm{P}}(\rho),\mcl{M}_{\povm{P}}(\sigma))$.
\end{theorem}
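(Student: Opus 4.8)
The plan is to prove the two displayed statements in turn, both as consequences of a single block‑diagonal computation of the relative entropy restricted to $\mf{M}(\povm{P},\gm)$, and then to extract the equality condition from the only inequality the argument uses. Throughout, write $\povm{\Pi}\equiv\povm{\Pi}_{\povm{P},\gm}=\{\Pi_y\}_{y\in\set{Y}}$ for the MPPP and $\mcl{M}_{\povm{\Pi}}$ for its quantum--classical channel, and set $p_y\coloneq\tr{\Pi_y\rho}$ and $q_y\coloneq\tr{\Pi_y\sigma}$.

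First I would use the explicit form of $\mf{M}(\povm{P},\gm)$ from Eq.~\eqref{eq:freestates}: every $\sigma\in\mf{M}(\povm{P},\gm)$ equals $\sum_y q_y\,\Pi_y\gm/\tr{\Pi_y\gm}$ for a probability vector $\{q_y\}$, and since the $\Pi_y$ are mutually orthogonal projections commuting with $\gm$ (Definition~\ref{definition:mppp}), both $\sigma$ and $\Delta_{\povm{P},\gm}(\rho)$ are block‑diagonal with respect to $\{\Pi_y\}$ with the $y$-th block proportional to $\Pi_y\gm$. Hence $\log\sigma=\log\gm+\sum_y\log(q_y/\tr{\Pi_y\gm})\,\Pi_y$ on its support, and a short computation gives
\begin{align}
    D(\rho\|\sigma)=D(\rho\|\gm)+\sum_{y\in\set{Y}}p_y\log\frac{\tr{\Pi_y\gm}}{q_y}\;,
\end{align}
with the usual convention that the right‑hand side is $+\infty$ when some $p_y>0$ but $q_y=0$; one also checks $\supp\rho\subseteq\supp\Delta_{\povm{P},\gm}(\rho)$ from $p_y=0\Rightarrow\Pi_y\rho\Pi_y=0$. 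The only $q$-dependent term, $-\sum_y p_y\log q_y$, is minimized over probability vectors at $q_y=p_y$, i.e.\ at $\sigma=\Delta_{\povm{P},\gm}(\rho)$; this is the identity $D_{\povm{P},\gm}(\rho)=D(\rho\|\Delta_{\povm{P},\gm}(\rho))$. (Alternatively, one may note that $\Delta_{\povm{P},\gm}$ is the predual of the $\gm$-preserving conditional expectation onto $\sn\{\Pi_y\}_y$, as shown in the proof of Theorem~\ref{theorem:rdm-coarse}, and invoke the Pythagorean theorem for relative entropy under such conditional expectations.)

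For the lower bound I would read off from the same computation that, for every $\sigma\in\mf{M}(\povm{P},\gm)$,
\begin{align}
    D(\rho\|\sigma)-D(\rho\|\Delta_{\povm{P},\gm}(\rho))=\sum_{y\in\set{Y}}p_y\log\frac{p_y}{q_y}=D(\mcl{M}_{\povm{\Pi}}(\rho)\|\mcl{M}_{\povm{\Pi}}(\sigma))\;,
\end{align}
because $\mcl{M}_{\povm{\Pi}}(\rho)=\sum_y p_y\ketbra{y}$ and $\mcl{M}_{\povm{\Pi}}(\sigma)=\sum_y q_y\ketbra{y}$. Since $\povm{\Pi}\preceq\povm{P}$ (with a deterministic post‑processing, by Lemma~\ref{lemma:groupings}), $\mcl{M}_{\povm{\Pi}}=\mN\circ\mcl{M}_{\povm{P}}$ where $\mN$ is the classical channel $\ketbra{x}\mapsto\sum_y p(y|x)\ketbra{y}$ realizing $\Pi_y=\sum_x p(y|x)P_x$. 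Applying the data‑processing inequality to $\mN$ gives $D(\mcl{M}_{\povm{\Pi}}(\rho)\|\mcl{M}_{\povm{\Pi}}(\sigma))\leq D(\mcl{M}_{\povm{P}}(\rho)\|\mcl{M}_{\povm{P}}(\sigma))$, and substituting into the display above yields
\begin{align}
    D_{\povm{P},\gm}(\rho)=D(\rho\|\sigma)-D(\mcl{M}_{\povm{\Pi}}(\rho)\|\mcl{M}_{\povm{\Pi}}(\sigma))\geq D(\rho\|\sigma)-D(\mcl{M}_{\povm{P}}(\rho)\|\mcl{M}_{\povm{P}}(\sigma))=\delta_{\povm{P}}(\rho\|\sigma)\;,
\end{align}
which is~\eqref{eq:last-ineq}. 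The only inequality here is the data‑processing step for $\mN$ on the dichotomy $(\mcl{M}_{\povm{P}}(\rho),\mcl{M}_{\povm{P}}(\sigma))$, so equality in~\eqref{eq:last-ineq} holds exactly when that inequality is saturated, i.e.\ --- by Petz's equality condition, Eqs.~\eqref{equation:eq-dpi}--\eqref{equation:eq-petz} --- exactly when the post‑processing $\{p(y|x)\}$ sending $\povm{P}$ to $\povm{\Pi}_{\povm{P},\gm}$ is sufficient with respect to $(\mcl{M}_{\povm{P}}(\rho),\mcl{M}_{\povm{P}}(\sigma))$.

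The step I expect to need the most care is the block evaluation of $D(\rho\|\sigma)$: one has to treat the non‑faithful cases correctly (the $+\infty$ case when $p_y>0=q_y$, and the verification $\supp\rho\subseteq\supp\Delta_{\povm{P},\gm}(\rho)$ so that $D_{\povm{P},\gm}(\rho)$ is finite and attained), and one has to use crucially that inside each block both $\Delta_{\povm{P},\gm}(\rho)$ and $\sigma$ are proportional to $\Pi_y\gm$ --- which is precisely where the defining properties of the MPPP (projections, mutually orthogonal, commuting with $\gm$) are used --- so that their logarithms differ there only by an additive scalar. Once this identity is established, the factorization $\mcl{M}_{\povm{\Pi}}=\mN\circ\mcl{M}_{\povm{P}}$, the data‑processing inequality, and the appeal to Petz's theorem are routine.
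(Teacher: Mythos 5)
Your proposal is correct and follows essentially the same route as the paper: you establish the Pythagorean-type identity $D(\rho\|\sigma)=D(\rho\|\Delta_{\povm{P},\gm}(\rho))+D(\mcl{M}_{\povm{\Pi}}(\rho)\|\mcl{M}_{\povm{\Pi}}(\sigma))$ for $\sigma\in\mf{M}(\povm{P},\gm)$ (the paper does this by inserting $\pm\tr{\rho\log\Delta_{\povm{P},\gm}(\rho)}$ and pinching, you by writing $\log\sigma$ block-diagonally, which is the same computation), deduce $D_{\povm{P},\gm}(\rho)=D(\rho\|\Delta_{\povm{P},\gm}(\rho))$ from it, and obtain \eqref{eq:last-ineq} plus its equality condition from data processing under the classical post-processing $\povm{P}\to\povm{\Pi}$ together with Petz's sufficiency criterion. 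Your explicit treatment of the $q_y=0$ and support issues is a minor refinement of, not a departure from, the paper's argument.
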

\begin{proof}
    For any $\sigma\in\mf{M}(\povm{P}, \gm)$, we get
    \begin{align}
        &D(\rho\|\sigma)\\
        &=\tr{\rho\log\rho}-\tr{\rho\log\sigma}\\
        &
            =\tr{\rho\log\rho}-\tr{\rho\log\Delta_{\povm{P},\gm}(\rho)}+\tr{\rho\log\Delta_{\povm{P},\gm}(\rho)}-\tr{\rho\log\Delta_{\povm{P},\gm}(\sigma)}
        \\
        &
            =D(\rho\|\Delta_{\povm{P},\gm}(\rho))+\tr{\mathcal{P}_{\povm{\Pi}}(\rho)\log\Delta_{\povm{P},\gm}(\rho)}-\tr{\mathcal{P}_{\povm{\Pi}}(\rho)\log\Delta_{\povm{P},\gm}(\sigma)}\;,\label{eq:81}
    \end{align}
    where we introduced the pinching map $\mathcal{P}_{\povm{\Pi}}(\bigcdot)\coloneq\sum_{y\in\set{Y}}\Pi_y\bigcdot\Pi_y$. (Note that, in general, $\mathcal{P}_{\povm{\Pi}}\neq \Delta_{\povm{P},\gm}$.)
    The second term is calculated as
    \begin{align}
        &\tr{\mathcal{P}_{\povm{\Pi}}(\rho)\log\Delta_{\povm{P},\gm}(\rho)}\nonumber\\
        &=\sum_y\tr{\Pi_y\rho\Pi_y\log\left(\tr{\Pi_y\rho}\frac{\Pi_y\gm}{\tr{\Pi_y\gm}}\right)}\\
        &=\sum_y\tr{\Pi_y\rho\Pi_y\left\{\log\left(\tr{\Pi_y\rho}\right)\openone+\log\left(\frac{\Pi_y\gm}{\tr{\Pi_y\gm}}\right)\right\}}.
    \end{align}
    Similarly, for the third term we have
    \begin{align}
        &\tr{\mathcal{P}_{\povm{\Pi}}(\rho)\log\Delta_{\povm{P},\gm}(\sigma)}\nonumber\\
        &=\sum_y\tr{\Pi_y\rho\Pi_y\left\{\log\left(\tr{\Pi_y\sigma}\right)\openone+\log\left(\frac{\Pi_y\gm}{\tr{\Pi_y\gm}}\right)\right\}}.
    \end{align}
    Thus,
    \begin{align}
        &\tr{\mathcal{P}_{\povm{\Pi}}(\rho)\log\Delta_{\povm{P},\gm}(\rho)}-\tr{\mathcal{P}_{\povm{\Pi}}(\rho)\log\Delta_{\povm{P},\gm}(\sigma)}\nonumber\\
        &=D(\mcl{M}_{\povm{\Pi}_{\povm{P},\gm}}(\rho)\|\mcl{M}_{\povm{\Pi}_{\povm{P},\gm}}(\sigma)).
    \end{align}
    Combining this with (\ref{eq:81}), we obtain
    \begin{align}
        D(\rho\|\Delta_{\povm{P},\gm}(\rho))&=D(\rho\|\sigma)-D(\mcl{M}_{\povm{\Pi}_{\povm{P},\gm}}(\rho)\|\mcl{M}_{\povm{\Pi}_{\povm{P},\gm}}(\sigma))\label{eq:86}
    \end{align}
    Since $\povm{\Pi}_{\povm{P},\gm}$ is obtained from $\povm{P}$ by classical post-processing, we have
    \begin{align}
        
        D(\mcl{M}_{\povm{P}}(\rho)\|\mcl{M}_{\povm{P}}(\sigma))\geq D(\mcl{M}_{\povm{\Pi}_{\povm{P},\gm}}(\rho)\|\mcl{M}_{\povm{\Pi}_{\povm{P},\gm}}(\sigma))
    \end{align}
    by the data processing inequality. Thus, we obtain $D(\rho\|\Delta_{\povm{P},\gm}(\rho))\geq\delta_{\povm{\P}}(\rho\|\sigma)$.
    From (\ref{eq:86}), and noting that $\Delta_{\povm{P},\gm}(\rho)\in\mf{M}(\povm{P}, \gm)$, we also have
    \begin{align}
        D(\rho\|\Delta_{\povm{P},\gm}(\rho))\geq \inf_{\sigma\in\mf{M}(\povm{P}, \gm)}D(\rho\|\sigma)\geq D(\rho\|\Delta_{\povm{P},\gm}(\rho)),
    \end{align}
    which implies $D_{\povm{P},\gm}(\rho)=D(\rho\|\Delta_{\povm{P},\gm}(\rho))$.
\end{proof}

\subsection{Reduction to other resource theories}
\label{subsection:examples}
The resource theories of \emph{coherence}, \emph{athermality}, \emph{nonuniformity} and \emph{asymmetry} are obtained from the resource theory of microscopicity by properly choosing the prior state $\gamma$, or by restricting the POVM $\povm{P}$ so that the MPPP satisfies certain conditions, or both.

\begin{figure}[h]
    \centering
    \includegraphics[width=0.6\linewidth]{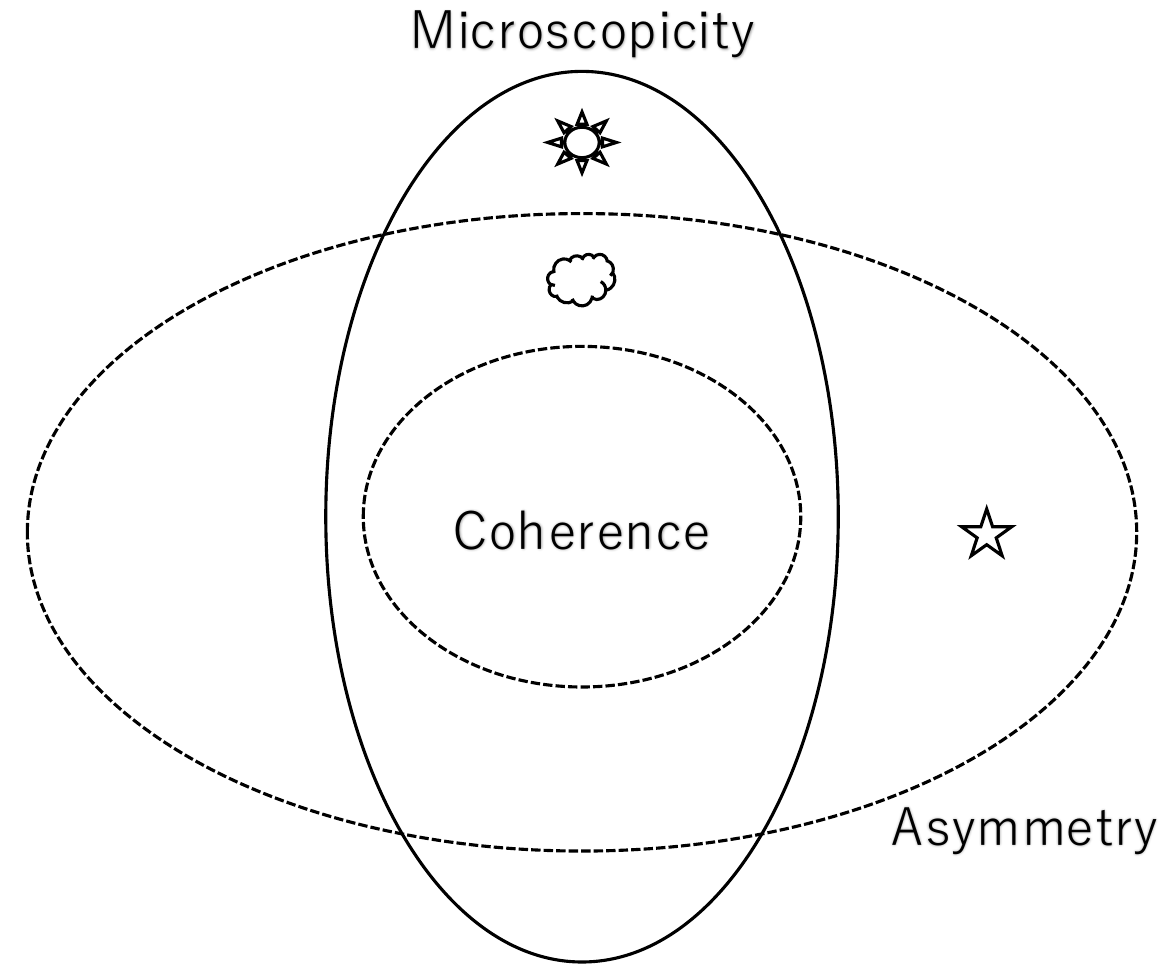}
    \caption{A schematic diagram illustrating the relationship between the resource theory of microscopicity and other prominent resource theories. The large vertical ellipse represents the theory of microscopicity, which encompasses several other theories. Within this framework, the inner circle signifies the resource theory of coherence, while the sun symbol represents the resource theory of athermality, a specific case of microscopicity (see Example~\ref{example:athermality}). The dashed, partially overlapping horizontal ellipse depicts the resource theory of asymmetry, where the cloud denotes a special case in which the multiplicity space of group representations is trivial (see Example~\ref{example:asymmetry}). Finally, the star symbol stands for the resource theory of block-coherence, which cannot be derived from our framework due to the more constrained structure of macroscopic states (see Remark~\ref{remark:block}).}
\end{figure}

\begin{example}[Resource theory of coherence]
\label{example:coherence}
    Fix a complete orthonormal system (CONS) $\{\ket{i}\}_i$ on the $d$-dimensional Hilbert space $\mcl{H}$. 
    Suppose that the prior state $\gamma$ is diagonal in that basis and that $\povm{P}$ is a rank-one POVM which is also diagonal. In this case, the MPPP is the rank-one PVM in that basis. The RDM reduces to the completely dephasing (pinching) map $\mathcal{P}_{\{\ketbra{i}\}_{i}}(\bigcdot)=\sum_i \braket{i|\bigcdot|i}\ketbra{i}$. 
    The free states are the diagonal states:
    \begin{align}
        \mf{M}(\povm{P}, \gm)&= \left\{\sum_ip_i\ketbra{i}\middle|p_i\geq 0,\sum_i p_i=1\right\}\\
        &=\operatorname{span}\{\ketbra{i}\}_i\cap\mcl{S}(\mcl{H})\;.\nonumber
    \end{align}
    The MNO and RCO reduces to the maximally incoherent operations (MIO) and the dephasing-covariant operations (DIO)~\cite{aberg2006quantifying,chitambar2016critical,streltsov2017colloquium}, respectively, defined as 
    \begin{align}
        \mr{MIO}&\coloneq\{\mcl{E}\in\mr{CPTP}\mid \sigma\in\mf{M}(\povm{P}, \gm)\implies\mcl{E}(\sigma)\in\mf{M}(\povm{P},\gm)\}\\
        \mr{DIO}&\coloneq \{\mcl{E}\in\mr{CPTP}\mid\mathcal{P}_{\{\ketbra{i}\}_{i}}\circ\mcl{E}=\mcl{E}\circ\mathcal{P}_{\{\ketbra{i}\}_{i}}\}\;.
    \end{align}
    It is known that the inclusion of DIO in MIO is strict, i.e., $\mr{DIO}\subsetneq\mr{MIO}$~\cite{chitambar2016critical}. This is also valid in the theory of microscopicity, since the inclusion of RCO in MNO is strict as soon as the MPPP is not the trivial one, see Proposition~\ref{proposition:CCOMNO}.
\end{example}

\begin{example}[Resource theory of athermality and nonuniformity]
\label{example:athermality}
    Suppose that the MPPP is trivial, i.e. $\povm{\Pi}_{\povm{P},\gm}=\{\openone\}$. This happens when, for example, ${\rm Tr}[P_xP_{x'}]\neq0$ for any $x,x'\in\mathcal{X}$. Due to Condition IV in Theorem \ref{theorem:macroscopic}, the only macroscopic state in this case is the prior state $\gamma$. Furthermore, the classes of free operations $\mr{RCO}$ coincides with $\mr{MNO}$ (Proposition \ref{proposition:CCOMNO}). If $\gamma$ is the Gibbs state for a given Hamiltonian and a temperature, it reduces to the resource theory of athermality~\cite{brandao2013resource,brandao2015second}. The free operations are the {\it Gibbs preserving operations}, which are the operations that preserves the Gibbs state invariant. If $\gamma$ is the maximally mixed state, it is the resource theory of nonuniformity, in which the free operations are those that are described as the unital CPTP channels.
\end{example}

\begin{example}[Resource theory of asymmetry]
\label{example:asymmetry}
    Let $G$ be a group that represents a symmetry of a given quantum system, and let $\mathcal{U}_G:=\{U_g\}_{g\in G}$ be a unitary representation of $G$.  
    In the resource theory of asymmetry, a state $\sigma\in\mathcal{S}(\mathcal{H})$ is said to be {\it symmetric} (or {\it $G$-invariant}) if it is invariant under the action of any element of $G$, that is, $U_g\sigma U_g^\dagger=\sigma$ for all $g\in G$. 
    It is known that, given a unitary representation $\mathcal{U}_G$, there exists a decomposition of the Hilbert space $\mathcal{H}$ into the direct-sum-product form $\sH=\bigoplus_{\lam\in\mr{Irr}(U)}A_\lam=\bigoplus_{\lam\in\mr{Irr}(U)}B_\lam\otimes C_\lam$, such that $U_g$ is decomposed into $U_g=\bigoplus_{\lam\in\mr{Irr}(U)}v_{g,\lambda}^{B_\lam}\otimes I^{C_\lam}$, where each $v_{g,\lambda}$ is an irreducible unitary representation of $G$. A state $\sigma\in\mathcal{S}(\mathcal{H})$ is symmetric if and only if it is of the form
    \begin{align}
        \sigma=\bigoplus_{\lam\in\mr{Irr}(U)}u^{B_\lam}\otimes\sigma_\lam^{C_\lam},
    \end{align}
    where $\sigma_\lam^{C_\lam}\coloneq\mr{Tr}_{B_\lam}[\Pi^{A_\lam}\sigma\Pi^{A_\lam}]$.
    We particularly consider the case where the multiplicity space is trivial, i.e. $\dim{C_\lambda}=1$. Then the above decomposition yields $\sigma=\bigoplus_{\lam\in\mr{Irr}(U)}\tilde{c}_{\sigma,\lambda} u^{B_\lam}$, with the coefficients satisfying $\tilde{c}_{\sigma,\lambda}\geq0$ and $\sum_{\lam\in\mr{Irr}(U)}\tilde{c}_{\sigma,\lambda}=1$. The states of this form are exactly the macroscopic states for the uniform prior state, with the block-diagonal decomposition of the space defined by the MPPP (see IV in Theorem \ref{theorem:macroscopic}). When $G$ is a compact Lie group, there is a unique group invariant measure $\mu$ on $\mathcal{U}_G$, which is referred to as the Haar measure. 
    The $G$-twirling operation~\cite[15.2.1 The G-Twirling Operation]{gour2024resources} is then defined by
    \begin{align}
        \mathcal{T}_{\mathcal{U}_G}(\bigcdot):=\int_{g\in G} U_g(\bigcdot)U_g^{\dagger}\;\mu(dg),
    \end{align}
    and is equal to the RDM $\Delta_{\povm{P},u}(\bigcdot)$.
    Let
    \begin{align}
        \mf{C}(G)\coloneq\left\{\mcl{E}\in\mr{CPTP}\;\middle|\;\mcl{E}(U_g\;\bigcdot\;U_g^{\dagger})=U_g\;\mcl{E}(\bigcdot)\;U_g^{\dagger}\;(\f g\in G)\right\}
    \end{align}
    be the set of covariant operations. In this case, $\mf{C}(G)\subset\mr{RCO}$. 
    It is left open whether $\mf{C}(G)=\mr{CCO}$ holds for a proper choice of the POVM.
\end{example}

\begin{remark}[Resource theory of block-coherence]
\label{remark:block}
    It is obvious from Condition IV in Theorem \ref{theorem:macroscopic} that the macroscopic states are not only block-diagonal with respect to the MPPP but also in a fixed state in each block, which is $\Pi_{\povm{P},\gamma}\gamma$.
    Thus, a state which is block-diagonal in MPPP is not necessarily a macroscopic state.
    Consequently, the resource theory of block coherence~\cite{bischof2019resource} cannot be deduced from the resource theory of microscopicity. 
\end{remark}

\section{Observer-dependent measure of quantum correlations}
\label{section:correlation}

This section applies the framework developed in the preceding sections to explore quantum correlations, specifically \emph{entanglement}~\cite{einstein1935can,horodecki2009quantum}, \emph{deficit}~\cite{oppenheim2002thermodynamical,horodecki2003local,devetak2005distillation,horodecki2005local}, and \emph{discord}~\cite{ollivier2001quantum,modi2012classical,ming2018quantum}, from the perspective of the observer's \emph{(quantum) reference frame}~\cite{bartlett2007reference,gour2008resource,fewster2025quantum}. Our approach is inspired by Ryszard Horodecki’s vision of correlations as physical resources, in which correlations emerge not only from state structure but also from the interplay between observation, inference, and prior knowledge.

More precisely, here we explore how limitations in measurement capabilities influence the perception and utility of these correlations. We establish a resource theory for locally macroscopic states and their corresponding operations, considering observational limitations on a subsystem rather than the entire system. This leads us to introduce the concept of \emph{local microscopicity}, which describes quantum features that are lost when a system is observed through a coarse-grained local measurement.

\subsection{Locally macroscopic states and operations}
\label{subsection:locally-macro}

We begin with a complete characterization of states that are \emph{locally macroscopic} with respect to POVM $\povm{P}_A=\{P^x_A\}_x$ and prior state $\gm_{AB}=\gamma_A\otimes\gamma_B$. In what follows, $\mcl{M}_{\povm{P}_A}$ is the quantum-classical channel corresponding to the POVM $\povm{P}_A$ on the Hilbert space $\sH_A$, as in Eq.~\eqref{eq:qc-channel}.

\begin{theorem}[Locally macroscopic states]
\label{theorem:locally-micro-states}
    Let $\gm_{AB}\coloneq\gm_A\ot\gm_B$ be a product quantum prior in $\sH_A\ot\sH_B$, $\povm{\Pi}_{\povm{P}_A,\gm_A}\equiv\povm{\Pi}=\{\Pi_y\}_{y\in\set{Y}}$ be the MPPP with respect to POVM $\povm{P}_A$ and prior state $\gm_A$, and $\mcl{I}_B$ be an identity operator on the set of linear operators on a Hilbert space $\sH_B$. Then, for any $\rho_{AB}\in\mcl{S}(\sH_A\ot\sH_B)$, the following conditions are equivalent:
    \begin{enumerate}
        \item $D(\rho_{AB}\|\gm_{AB})=D\Big((\mcl{M}_{\povm{P}_A}\ot\mcl{I}_B)(\rho_{AB})\|(\mcl{M}_{\povm{P}_A}\ot\mcl{I}_B)(\gm_{AB})\Big)$;\label{theorem:locally-micro-states:divergence}
        \item $(\mcl{C}_{\povm{P}_A,\gm_A}\ot\mcl{I}_B)(\rho_{AB})\coloneq\Big[(\mcl{R}_{\mcl{M}_{{\povm{P}}_A},\;\gm_{A}}\circ\mcl{M}_{\povm{P}_A})\ot\mcl{I}_B\Big](\rho_{AB})=\rho_{AB}$;\label{theorem:locally-micro-states:coarse-graining}
        \item $(\Delta_{\povm{P}_A,\gm_A}\ot\mcl{I}_B)(\rho_{AB})=\rho_{AB}$;\label{theorem:locally-micro-states:resource-destroying}
        \item there exist $c_y^i\geq0$, $\lam_i\geq0$ satisfying $\sum_i\lam_i=1$, $\sigma_A^i\in\mcl{S}(\sH_A)$, and $\sigma_B^i\in\mcl{S}(\sH_B)$ such that
        \begin{align}
            \rho_{AB}&=\sum_{i}\lam_i\;\sigma_A^i\ot\sigma_B^i\\
            &=\sum_{i}\lam_i\;\Delta_{\povm{P}_A,\gm_A}(\sigma_A^i)\ot\sigma_B^i\\
            &=\sum_{i}\lam_i\left(\sum_{y\in\set{Y}}c_y^i\Pi_y\gm_A\right)\ot\sigma_B^i\;.
        \end{align}\label{theorem:locally-micro-states:explicit}
    \end{enumerate}
\end{theorem}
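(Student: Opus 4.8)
The plan is to reduce the bipartite statement to the single-system Theorem~\ref{theorem:macroscopic} by tensoring with the identity channel on $B$, exploiting the product structure of the prior $\gm_{AB}=\gm_A\ot\gm_B$. First I would observe that the equivalences $(\ref{theorem:locally-micro-states:divergence})\iff(\ref{theorem:locally-micro-states:coarse-graining})$ and $(\ref{theorem:locally-micro-states:resource-destroying})\iff(\ref{theorem:locally-micro-states:explicit})$ can be obtained by essentially the same arguments used for Theorem~\ref{theorem:macroscopic}, applied to the channel $\mcl{M}_{\povm{P}_A}\ot\mcl{I}_B$ and the invertible prior $\gm_{AB}$: the first pair follows from the equality condition of the data-processing inequality (Eqs.~(\ref{equation:eq-dpi})--(\ref{equation:eq-petz})), once one checks that the Petz map of $\mcl{M}_{\povm{P}_A}\ot\mcl{I}_B$ with respect to $\gm_A\ot\gm_B$ factorizes as $\mcl{R}_{\mcl{M}_{\povm{P}_A},\gm_A}\ot\mcl{I}_B$ — this is a direct computation from the definition of the Petz map using $\mcl{E}^*=(\mcl{M}_{\povm{P}_A})^*\ot\mcl{I}_B$ and $(\gm_A\ot\gm_B)^{\pm 1/2}=\gm_A^{\pm1/2}\ot\gm_B^{\pm1/2}$. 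The second pair follows because, by Theorem~\ref{theorem:macroscopic}, $\Delta_{\povm{P}_A,\gm_A}(\sigma_A^i)=\sigma_A^i$ is equivalent to $\sigma_A^i=\sum_y c_y^i\Pi_y\gm_A$, and tensoring with an arbitrary $\sigma_B^i$ and taking convex combinations is transparent.

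The substantive part is the equivalence $(\ref{theorem:locally-micro-states:coarse-graining})\iff(\ref{theorem:locally-micro-states:resource-destroying})$, i.e., that the fixed points of $\mcl{C}_{\povm{P}_A,\gm_A}\ot\mcl{I}_B$ coincide with those of $\Delta_{\povm{P}_A,\gm_A}\ot\mcl{I}_B$. The cleanest route is to invoke Theorem~\ref{theorem:rdm-coarse} in the "tensored" setting: one wants $\Delta_{\povm{P}_A,\gm_A}\ot\mcl{I}_B=\lim_{n\to\infty}\frac1n\sum_{k=1}^n(\mcl{C}_{\povm{P}_A,\gm_A}\ot\mcl{I}_B)^k$, which holds because $(\mcl{C}_{\povm{P}_A,\gm_A}\ot\mcl{I}_B)^k=\mcl{C}_{\povm{P}_A,\gm_A}^k\ot\mcl{I}_B$ and the Cesàro limit commutes with tensoring by the fixed identity channel; then the fixed point of the averaged map equals the fixed point of $\mcl{C}_{\povm{P}_A,\gm_A}\ot\mcl{I}_B$ by the same ergodic-type argument cited in the proof of Theorem~\ref{theorem:rdm-coarse} (e.g.~\cite[Proof of Theorem 5]{jencova2017preservation}). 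Alternatively, one can argue directly via Lemma~\ref{lemma:dual}: the fixed-point algebra of $(\mcl{C}_{\povm{P}_A,\gm_A}\ot\mcl{I}_B)^*$ is $\sn\{\Pi_y\}_y\ot\mcl{L}(\sH_B)$, and on that algebra the two channels visibly agree.

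The main obstacle I anticipate is making the "tensoring preserves the relevant structure" steps fully rigorous without simply reproving Petz--Takesaki in the bipartite case: specifically, verifying that $\mcl{M}_{\povm{P}_A}(\gm_A)\ot\gm_B$ is invertible (so the assumptions of Theorem~\ref{theorem:rdm-coarse} and Lemma~\ref{lemma:dual} are met — this is immediate since $\mcl{M}_{\povm{P}_A}(\gm_A)$ is invertible by hypothesis on $\povm{P}_A$ having no zero elements, and $\gm_B>0$), and checking that the MPPP of $\povm{P}_A\ot\{\openone_B\}$ with respect to $\gm_A\ot\gm_B$ is exactly $\{\Pi_y\ot\openone_B\}_y$, so that $\Delta_{\povm{P}_A\ot\{\openone_B\},\,\gm_A\ot\gm_B}=\Delta_{\povm{P}_A,\gm_A}\ot\mcl{I}_B$. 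This last identification reduces, via Lemma~\ref{lmm:PPP} and Proposition~\ref{proposition:mppp-existence}, to checking that the irreducible $\gm_A\ot\gm_B$-disconnected partition of the outcome set of $\povm{P}_A\ot\{\openone_B\}$ is inherited from the irreducible $\gm_A$-disconnected partition of $\povm{P}_A$ — true because $(P^x_A\ot\openone_B)(\gm_A\ot\gm_B)(P^{x'}_A\ot\openone_B)=(P^x_A\gm_A P^{x'}_A)\ot\gm_B$ vanishes iff $P^x_A\gm_A P^{x'}_A=0$, and similarly for the product without $\gm$. Once these bookkeeping facts are in place, the theorem follows by direct application of Theorems~\ref{theorem:macroscopic} and~\ref{theorem:rdm-coarse}.
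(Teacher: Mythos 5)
Your main line is sound and essentially parallels the paper's proof: the equivalence (\ref{theorem:locally-micro-states:divergence})$\iff$(\ref{theorem:locally-micro-states:coarse-graining}) via the factorization $\mcl{R}_{\mcl{M}_{\povm{P}_A}\ot\mcl{I}_B,\,\gm_A\ot\gm_B}=\mcl{R}_{\mcl{M}_{\povm{P}_A},\gm_A}\ot\mcl{I}_B$ is exactly what the paper does, and for (\ref{theorem:locally-micro-states:coarse-graining})$\iff$(\ref{theorem:locally-micro-states:resource-destroying}) your route --- tensoring the Ces\`aro identity of Theorem~\ref{theorem:rdm-coarse} with $\mcl{I}_B$ and invoking the general fact that the fixed points of a channel coincide with those of its Ces\`aro mean --- is a legitimate and slightly more direct variant of the paper's argument, which instead first uses that $\mcl{C}_{\povm{P}_A,\gm_A}$ is entanglement-breaking to write $\rho_{AB}$ in separable form and then applies the Ces\`aro limit termwise.

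Two points need repair. First, the identification you list as required bookkeeping, $\Delta_{\povm{P}_A\ot\{\openone_B\},\,\gm_A\ot\gm_B}=\Delta_{\povm{P}_A,\gm_A}\ot\mcl{I}_B$, is false. While the MPPP of $\{P^x_A\ot\openone_B\}_x$ with respect to $\gm_A\ot\gm_B$ is indeed $\{\Pi_y\ot\openone_B\}_y$, Eq.~\eqref{eq:RDMrtm} then makes the left-hand side the measure-and-prepare channel $\sum_y\tr{(\Pi_y\ot\openone_B)(\bigcdot)}\,\frac{\Pi_y\gm_A}{\tr{\Pi_y\gm_A}}\ot\gm_B$, which resets $B$ to $\gm_B$ (e.g.\ it maps $\gm_A\ot\rho_B$ to $\gm_A\ot\gm_B$), whereas the right-hand side acts as the identity on $B$ (mapping $\gm_A\ot\rho_B$ to itself). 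The same caution applies to the measurement channel: $\mcl{M}_{\povm{P}_A}\ot\mcl{I}_B$, which keeps $B$ quantum, is not the quantum-classical channel of the bipartite POVM $\{P^x_A\ot\openone_B\}_x$, so the theorem is not literally a single-system instance of Theorem~\ref{theorem:macroscopic} for that POVM. Fortunately your actual arguments never use this identification: the Ces\`aro and fixed-point facts are applied directly to the channel $\mcl{C}_{\povm{P}_A,\gm_A}\ot\mcl{I}_B$, so you should simply delete that claim rather than try to establish it. Second, your treatment of (\ref{theorem:locally-micro-states:resource-destroying})$\iff$(\ref{theorem:locally-micro-states:explicit}) only covers the direction (\ref{theorem:locally-micro-states:explicit})$\implies$(\ref{theorem:locally-micro-states:resource-destroying}) (``tensoring and taking convex combinations''); for the converse you need the observation that $\Delta_{\povm{P}_A,\gm_A}$ is measure-and-prepare, so that $(\Delta_{\povm{P}_A,\gm_A}\ot\mcl{I}_B)(\rho_{AB})=\sum_y\frac{\Pi_y\gm_A}{\tr{\Pi_y\gm_A}}\ot\operatorname{Tr}_A\!\big[(\Pi_y\ot\openone_B)\rho_{AB}\big]$ is manifestly of the form stated in (\ref{theorem:locally-micro-states:explicit}); this is precisely the entanglement-breaking step the paper invokes.
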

\begin{proof}
    The equivalence~(\ref{theorem:locally-micro-states:divergence})$\iff$(\ref{theorem:locally-micro-states:coarse-graining}) follows directly from the equality condition for the data processing inequality (DPI) of the Umegaki relative entropy, i.e., $D(\rho_{AB}\|\gm_{AB})=D((\mcl{M}_{\povm{P}_A}\ot\mcl{I}_B)(\rho_{AB})\|(\mcl{M}_{\povm{P}_A}\ot\mcl{I}_B)(\gm_{AB})) \iff\mcl{C}_{\povm{P}_A,\gm_{AB}}(\rho_{AB})\coloneq(\mcl{R}_{\mcl{M}_{\povm{P}_A}\ot\mcl{I}_B,\;\gm_{AB}})\circ(\mcl{M}_{\povm{P}_A}\ot\mcl{I}_B)(\rho_{AB})=\rho_{AB}$. By assumption $\gm_{AB}=\gm_A\ot\gm_B$ which together with the locality of measurement, implies $\mcl{C}_{\povm{P}_A,\gm_{AB}}=\mcl{C}_{\povm{P}_A,\gm_A}\ot\mcl{I}_B$. Furthermore,~(\ref{theorem:locally-micro-states:resource-destroying})$\iff$(\ref{theorem:locally-micro-states:explicit}) follows from $\Delta_{\povm{P}_A,\gm}$ being entanglement breaking~\cite{horodecki2003entanglement}, i.e., $(\Delta_{\povm{P}_A,\gm_A}\ot\mcl{I}_B)(\rho_{AB})$ is the separable state. Therefore, we show~(\ref{theorem:locally-micro-states:coarse-graining})$\iff$(\ref{theorem:locally-micro-states:resource-destroying}). By $\mcl{C}_{\povm{P}_A,\gm_A}$ also being entanglement breaking, if $(\mcl{C}_{\povm{P}_A,\gm_A}\ot\mcl{I}_B)(\rho_{AB})=\rho_{AB}$, then
    \begin{align}
        \rho_{AB}&=\sum_{i}\lam_i\sigma_A^i\ot\sigma_B^i\\
        &=(\mcl{C}_{\povm{P}_A,\gm_A}\ot\mcl{I}_B)\left(\sum_{i}\lam_i\sigma_A^i\ot\sigma_B^i\right)\\
        &=\sum_{i}\lam_i\mcl{C}_{\povm{P}_A,\gm_A}(\sigma_A^i)\ot\sigma_B^i\\
        &=\sum_{i}\lam_i\mcl{C}_{\povm{P}_A,\gm_A}^k(\sigma_A^i)\ot\sigma_B^i\;,
    \end{align}
    where $\sigma_A^i\in\mcl{S}(\sH_A)$ and $\sigma_B^i\in\mcl{S}(\sH_B)$. Moreover, from Theorem~\ref{theorem:rdm-coarse}, we get
    \begin{align}
        \rho_{AB}&=\sum_{i}\lam_i\left(\lim_{n\to\infty}\frac{1}{n}\sum_{k=1}^n\mcl{C}_{\povm{P}_A,\gm_A}^k(\sigma_A^i)\right)\ot\sigma_B^i\\
        &=\sum_{i}\lam_i\;\Delta_{\povm{P}_A,\gm_A}(\sigma_A^i)\ot\sigma_B^i\;.
    \end{align}
    Thus, we have~(\ref{theorem:locally-micro-states:coarse-graining})$\implies$(\ref{theorem:locally-micro-states:resource-destroying}). Conversely,~(\ref{theorem:locally-micro-states:resource-destroying})$\implies$(\ref{theorem:locally-micro-states:coarse-graining}) follows from the fact that $\Delta_{\povm{P}_A,\gm_A}$ is a resource destroying map.
\end{proof}

\begin{remark}
    We note that Theorem~\ref{theorem:locally-micro-states} is derived under the assumption of a prior state in tensor-product form, i.e., $\gm_{AB}\coloneq\gm_A\ot\gm_B$. This simplification enables a clear characterization of locally macroscopic states. In general, for a correlated prior $\gm_{AB}$, it becomes $\mcl{C}_{\povm{P}_A,\gm_{AB}}\neq\mcl{C}_{\povm{P}_A,\gm_A}\ot\mcl{I}_B$. Therefore, whether Theorem~\ref{theorem:locally-micro-states} holds for a prior that includes general non-tensor-product states remains an open question.
\end{remark}

In addition, it is possible to define \emph{locally macroscopic operations}. The set of locally macroscopic states with respect to POVM $\povm{P}_A$ and prior state $\gm_{AB}=\gm_A\ot\gm_B$ is denoted $\mf{L}(\povm{P}_A,\gm_{AB})$. 

\begin{itemize}
    \item \textit{Local Microscopicity Non-Generating Operations (LMNO)}: a CPTP map $\mcl{E}_{AB}$ is LMNO whenever
    \begin{align}
        \mcl{E}_{AB}(\rho_{AB})\in\mf{L}(\povm{P}_A,\gm_{AB})\;,    
    \end{align}
    for all $\rho_{AB}\in\mf{L}(\povm{P}_A,\gm_{AB})$.
    \item \textit{Local Resource-Destroying Covariant Operations (LRCO)}: a CPTP map $\mcl{E}_{AB}$ is LRCO whenever
    \begin{align}
        \mcl{E}_{AB}\circ(\Delta_{\povm{P}_A,\gm_A}\ot\mcl{I}_B)=(\Delta_{\povm{P}_A,\gm_A}\ot\mcl{I}_B)\circ\mcl{E}_{AB}\;.
    \end{align}
    \item \textit{Local Coarse-Graining Covariant Operations (LCCO)}: a CPTP map $\mcl{E}_{AB}$ is LCCO whenever
    \begin{align}
        \mcl{E}_{AB}\circ(\mcl{C}_{\povm{P}_A,\gm_A}\ot\mcl{I}_B)=(\mcl{C}_{\povm{P}_A,\gm_A}\ot\mcl{I}_B)\circ\mcl{E}_{AB}\;.
    \end{align}
\end{itemize}

Similar to the resource theory of microscopicity (see Proposition~\ref{proposition:CCOMNO}), the following hierarchy holds:
\begin{align}
    \mr{LCCC}\subseteq\mr{LRCO}\subseteq\mr{LMNO}\;.
\end{align}

\subsection{Observational discord}
\label{subsection:locally-micro-correlations}

Based on the concept of locally macroscopic states, we introduce a new measure of quantum correlation that depends on an observer's measurement capabilities. While standard measures, such as quantum (one-way) deficit~\cite{oppenheim2002thermodynamical,horodecki2003local,devetak2005distillation,horodecki2005local} and discord~\cite{ollivier2001quantum,modi2012classical,ming2018quantum}, quantify an intrinsic property of a quantum state by optimizing over all possible measurements, our approach addresses a different operational question: ``When is a local observer with fixed -- and possibly limited -- measurement capabilities able to access all the correlations present in a bipartite state?'' This question is physically relevant in many scenarios where optimization is not possible. The measure we introduce, \textit{observational discord}, quantifies the amount of total correlation, measured by quantum mutual information, that is inaccessible to a specific local observer characterized by the pair $(\povm{P}_A,\rho_A)$.

In what follows, $\mcl{M}_{\povm{P}_A}$ is the quantum-classical channel corresponding to the POVM $\povm{P}_A$ on the Hilbert space $\sH_A$, as in Eq.~\eqref{eq:qc-channel}.

\begin{definition}[Observational discord]
\label{definition:ob-discord}
    Let $\rho_{AB}\in\mcl{S}(\sH_A\ot\sH_B)$. Then, the \emph{observational discord} is defined as
    \begin{align}
        D_{\povm{P}_A}(\bar{A};B)_{\rho_{AB}}&\coloneq D(\rho_{AB}\|\rho_A\ot\rho_B)-D\Big((\mcl{M}_{\povm{P}_A}\ot\mcl{I}_B)(\rho_{AB})\|(\mcl{M}_{\povm{P}_A}\ot\mcl{I}_B)(\rho_A\ot\rho_B)\Big)\\
        &=I(A;B)_{\rho_{AB}}-I(X;B)_{\omega_{XB}}\;,
    \end{align}
    with the overbar denoting the measured system, $\omega_{XB}\coloneq(\mcl{M}_{\povm{P}_A}\ot\mcl{I}_B)(\rho_{AB})$ and $I(A;B)_{\rho_{AB}}\coloneq D(\rho_{AB}\|\rho_A\ot\rho_B)$ is the quantum mutual information.
\end{definition}

Note that in the above definition, the prior is taken to be the product of the marginals of $\rho_{AB}$, instead of an arbitrary tensor product state, as in the previous subsection.

\begin{remark}[Quantum discord]
    While the observational discord is defined for a fixed POVM acting on $A$, optimizing over all POVMs we obtain the quantum discord~\cite{ollivier2001quantum,modi2012classical,ming2018quantum,seshadreesan2015Renyi}
    \begin{align}
        \inf_{\povm{P}_A:\;\mr{POVM}}D_{\povm{P}_A}(\bar{A};B)_{\rho_{AB}}=I(A;B)_{\rho_{AB}}-\sup_{\povm{P}_A:\;\mr{POVM}}I(X;B)_{\omega_{XB}}\;.
    \end{align}
    The minimum is achieved with a rank-one POVM~\cite{modi2012classical,seshadreesan2015Renyi,seshadreesan2015fidelity}. 
\end{remark}

In the precise sense that discord is optimized over all POVMs, discord is observer-independent. However, our observational discord depends on the choice of a particular observer's macroscopic viewpoint.

As a consequence of Theorem~\ref{theorem:locally-micro-states}, we can derive the necessary and sufficient conditions for vanishing observational discord.

\begin{corollary}
\label{corollary:locally-micro-correlation}
    Let $\rho_{AB}\in\mcl{S}(\sH_A\ot\sH_B)$ and let $\povm{\Pi}_{\povm{P}_A,\rho_A}\equiv\povm{\Pi}=\{\Pi_y\}_{y\in\set{Y}}$ be the MPPP with respect to POVM $\povm{P}_A$ and prior state $\rho_A$. Then, the following conditions are equivalent:
    \begin{enumerate}
        \item $D_{\povm{P}_A}(\bar{A};B)_{\rho_{AB}}=0$;
        \item $(\mcl{C}_{\povm{P}_A,\rho_A}\ot\mcl{I}_B)(\rho_{AB})=(\mcl{R}_{\mcl{M}_{{\povm{P}}_A}\ot\mcl{I}_B,\;\rho_A\ot\rho_B})\circ(\mcl{M}_{\povm{P}_A}\ot\mcl{I}_B)(\rho_{AB})=\rho_{AB}$;
        \item $(\Delta_{\povm{P}_A,\rho_A}\ot\mcl{I}_B)(\rho_{AB})=\rho_{AB}$;
        \item there exist $c_y^i\geq0$, $\lam_i\geq0$ satisfying $\sum_i\lam_i=1$, $\sigma_A^i\in\mcl{S}(\sH_A)$, and $\sigma_B^i\in\mcl{S}(\sH_B)$ such that
        \begin{align}
            \rho_{AB}&=\sum_{i}\lam_i\;\sigma_A^i\ot\sigma_B^i\\
            &=\sum_{i}\lam_i\;\Delta_{\povm{P}_A,\rho_A}(\sigma_A^i)\ot\sigma_B^i\\
            &=\sum_{i}\lam_i\left(\sum_{y\in\set{Y}}c_y^i\Pi_y\rho_A\right)\ot\sigma_B^i\;.
        \end{align}
    \end{enumerate}
\end{corollary}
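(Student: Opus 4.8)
The plan is to obtain Corollary~\ref{corollary:locally-micro-correlation} as a direct specialization of Theorem~\ref{theorem:locally-micro-states}, with the prior state $\gm_{AB}=\gm_A\ot\gm_B$ chosen to be $\rho_A\ot\rho_B$, i.e., the product of the marginals of $\rho_{AB}$. The only point requiring genuine attention is that the observational discord $D_{\povm{P}_A}(\bar A;B)_{\rho_{AB}}$ is exactly the observational deficit $\delta$ of the channel $\mcl{M}_{\povm{P}_A}\ot\mcl{I}_B$ evaluated on the dichotomy $(\rho_{AB},\,\rho_A\ot\rho_B)$; once this identification is made, condition (i) here is literally condition (\ref{theorem:locally-micro-states:divergence}) there, and (ii), (iii), (iv) are literally (\ref{theorem:locally-micro-states:coarse-graining}), (\ref{theorem:locally-micro-states:resource-destroying}), (\ref{theorem:locally-micro-states:explicit}).

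Concretely, I would proceed as follows. First I would note that, by Definition~\ref{definition:ob-discord}, $D_{\povm{P}_A}(\bar A;B)_{\rho_{AB}}=D(\rho_{AB}\|\rho_A\ot\rho_B)-D\big((\mcl{M}_{\povm{P}_A}\ot\mcl{I}_B)(\rho_{AB})\,\big\|\,(\mcl{M}_{\povm{P}_A}\ot\mcl{I}_B)(\rho_A\ot\rho_B)\big)$, which is precisely $\delta_{\povm{P}_A\ot\mathrm{triv}_B}(\rho_{AB}\|\rho_A\ot\rho_B)$ in the notation of Definition~\ref{definition:deficit} applied to the bipartite system with the local-on-$A$ measurement; in particular it is nonnegative by the data processing inequality. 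Hence $D_{\povm{P}_A}(\bar A;B)_{\rho_{AB}}=0$ is the equality case of DPI for the channel $\mcl{M}_{\povm{P}_A}\ot\mcl{I}_B$ and the dichotomy $(\rho_{AB},\rho_A\ot\rho_B)$. This is exactly the hypothesis under which Theorem~\ref{theorem:locally-micro-states} is stated, with the identification $\gm_A\leftarrow\rho_A$ and $\gm_B\leftarrow\rho_B$ (both of which are valid quantum states, and $\rho_A$ invertible after restricting to its support, as remarked for priors throughout the paper). Applying that theorem verbatim, with $\povm{\Pi}_{\povm{P}_A,\gm_A}$ replaced by $\povm{\Pi}_{\povm{P}_A,\rho_A}$, yields the equivalence of (i)–(iv) as claimed. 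The small remaining check is that the prior being $\rho_A\ot\rho_B$ rather than a generic $\gm_A\ot\gm_B$ changes nothing in the argument: the proof of Theorem~\ref{theorem:locally-micro-states} uses only that the prior is a product state with an invertible $A$-marginal, so that $\mcl{C}_{\povm{P}_A,\rho_A\ot\rho_B}=\mcl{C}_{\povm{P}_A,\rho_A}\ot\mcl{I}_B$ and $\Delta_{\povm{P}_A,\rho_A}$ is entanglement breaking.

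I do not expect a real obstacle here; the only subtlety worth flagging is notational/conceptual rather than technical. In Theorem~\ref{theorem:locally-micro-states} the prior $\gm_{AB}$ is an externally fixed reference, whereas in the definition of observational discord the "prior" $\rho_A\ot\rho_B$ is extracted from the state $\rho_{AB}$ itself; one must therefore be slightly careful that the MPPP appearing in condition (iv), namely $\povm{\Pi}_{\povm{P}_A,\rho_A}$, is computed with respect to the actual reduced state $\rho_A$ of the input, not with respect to some other reference — but this is exactly how the corollary is phrased, so the substitution is consistent. The hard part, such as it is, was already done in the proof of Theorem~\ref{theorem:locally-micro-states} (in particular the use of Theorem~\ref{theorem:rdm-coarse} to pass between the coarse-graining map and the resource-destroying map on the $A$-factor); the present statement is an immediate corollary and the write-up can be kept to a few lines.
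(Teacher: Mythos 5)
Your proposal is correct and matches the paper's treatment: the corollary is indeed obtained simply by specializing Theorem~\ref{theorem:locally-micro-states} to the product prior $\gm_{AB}=\rho_A\ot\rho_B$, after identifying the observational discord with the observational deficit of $\mcl{M}_{\povm{P}_A}\ot\mcl{I}_B$ on the dichotomy $(\rho_{AB},\rho_A\ot\rho_B)$. Your remark about the prior being extracted from $\rho_{AB}$ itself (and the MPPP being taken with respect to $\rho_A$) is the right point to flag, and it is handled exactly as you describe.
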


Corollary~\ref{corollary:locally-micro-correlation} implies that the correlation, as quantified by Definition~\ref{definition:ob-discord}, vanishes when one part of the bipartite system is coarse-grained. This contrasts with the observer-independent nature of quantum (one-way) deficit~\cite{oppenheim2002thermodynamical,horodecki2003local,devetak2005distillation,horodecki2005local} and discord~\cite{ollivier2001quantum,modi2012classical,ming2018quantum}.  Our measure (Definition~\ref{definition:ob-discord}) thus highlights a more realistic aspect of quantum correlations: their perceived ``quantumness'' is not absolute but depends on the capabilities and perspective of the observer.

In the case of quantum discord~\cite{modi2012classical,ming2018quantum}, the necessary and sufficient condition for being
\begin{align}
    \inf_{\povm{P}_A:\;\mr{POVM}}D_{\povm{P}_A}(\bar{A};B)_{\rho_{AB}}=0
\end{align}
is that $\rho_{AB}$ is a classical-quantum state, i.e., $\rho_{AB}=\sum_i\lam_i\ketbra{i}_A\ot\sigma_B^i$ for some local orthonormal basis $\{\ket{i}_A\}_{i}$, $\lam_i\geq0$, $\sum_i\lam_i=1$, and $\sigma_B\in\mcl{S}(\sH_B)$. Let $\mf{CQ}$ be the set of classical-quantum states. Then,
\begin{align}
    \inf_{\sigma_{AB}\in\mf{CQ}}D(\bigcdot\|\sigma_{AB})
\end{align}
is called \emph{relative entropy of discord}~\cite{streltsov2017colloquium,modi2012classical,ming2018quantum}. 

Similarly, we can define the \emph{relative entropy of local microscopicity} as follows:
\begin{align}
    \inf_{\sigma_{AB}\in\mf{L}(\povm{P}_A,\gm_{AB})}D(\bigcdot\|\sigma_{AB})\;,
\end{align}
where $\mf{L}(\povm{P}_A,\gm_{AB})$ is the set of locally macroscopic states with respect to POVM $\povm{P}_A$ and the prior state $\gm_{AB}$. Note that while $\mf{L}(\povm{P}_A,\gm_{AB})$ is a convex set, $\mf{CQ}$ is not. We leave it open the problem of finding a more quantitative relationship between the relative entropy of local microscopicity and our Definition~\ref{definition:ob-discord}.

\section{Conclusion}

In this paper, we developed a general framework for understanding macroscopic states in quantum systems, grounded in the interplay between general POVMs and quantum priors. We defined macroscopic states operationally as fixed points of a coarse-graining map, i.e., the composition of a measurement and its Petz recovery map, and provided several equivalent characterizations. These include conditions based on quantum statistical sufficiency, the structure of maximal projective post-processings, and the fixed points of a corresponding resource-destroying map. Together, these results offer a unified perspective on macroscopic irreversibility and open the way for a generalized resource-theoretic treatment of coarse-graining and retrodiction in quantum theory. These developments provide a new perspective on the nature of quantum correlations within realistic and constrained settings.

Future work includes extending our framework to the case of incompatible POVMs, where the structure of observational entropy and macroscopic states becomes more subtle. In this context, a rigorous definition of generalized observational entropy and a corresponding law of irreducibility (i.e., monotonicity under macroscopic operations) remain open questions. Moreover, a key challenge is to formulate an appropriate notion of the tensor product for macroscopic states, which is essential for analyzing composite systems and catalytic transformations.

An intriguing direction is the connection between the relative entropy of microscopicity and correlated catalytic transformations. In the resource theory of coherence, it is known that 
\begin{equation}
D\left(\rho\Big\|\mathcal{P}_{\{\ketbra{i}\}_i}(\rho)\right) \geq D\left(\sigma\Big\|\mathcal{P}_{\{\ketbra{i}\}_i}(\sigma)\right)
\end{equation}
if and only if a correlated catalytic state conversion from $\rho$ to $\sigma$ is possible under dephasing-covariant incoherent operations (DIO)~\cite{datta2023there,takagi2022correlation}. Similar to results in the resource theory of coherence and building on the generalized quantum Stein's lemma~\cite{hayashi2024generalized,lami2024solution}, we conjecture that converting a state from $\rho$ to $\sigma$ via free macroscopic operations with a correlated catalyst might be possible if the following condition holds: 
\begin{equation}
    D\left(\rho\Big\|\Delta_{\povm{P},\gamma}(\rho)\right) \geq D\left(\sigma\Big\|\Delta_{\povm{P},\gamma}(\sigma)\right)\;.
\end{equation}
Proving this conjecture remains an open problem.


Furthermore, we can discuss the implications of this framework for quantum memory usefulness~\cite{takagi2024when}, particularly by exploring how the inferential reference frame (i.e., the prior used to evaluate correlations) determines the operational value of entanglement in communication protocols where encoding operations are restricted by coarse-graining~\cite{korzekwa2022encoding,hayashi2022dense}. We can also consider the connection between irreversible symmetries imposed on quantum systems by the coarse-graining map~\cite{okada2024noninvertible} and the limits of possible measurements, particularly in light of foundational results like the Wigner-Araki-Yanase theorem~\cite{loveridge2020relational}.

Finally, a promising line of investigation is to use this framework to analyze entropy production in coarse-grained quantum systems, possibly unifying recent approaches to fluctuating entropy production and thermodynamic consistency~\cite{degunther2024fluctuating,bai2024fully,dieball2025perspective}.

\section*{Acknowledgments}
The authors are grateful to Niklas Galke, Masahito Hayashi, Anna Jencova, Shintaro Minagawa, Joseph Schindler, and Ryuji Takagi for suggestions that improved the manuscript. T.~N. acknowledges the ``Nagoya University Interdisciplinary Frontier Fellowship'' supported by Nagoya University and JST, the establishment of university fellowships towards the creation of science technology innovation, Grant Number JPMJFS2120 and ``THERS Make New Standards Program for the Next Generation Researchers'' supported by JST SPRING, Grant Number JPMJSP2125. K.~K. acknowledges support from JSPS Grant-in-Aid for Early-Career Scientists, No. 22K13972; and from the MEXT-JSPS Grant-in-Aid for Transformative Research Areas (B) No. 24H00829. 
K.~K, E.~W. and F.~B. acknowledge support from MEXT Quantum Leap Flagship Program (MEXT QLEAP) Grant No. JPMXS0120319794.
F.~B. also acknowledges  support from MEXT-JSPS  Grant-in-Aid for Transformative Research Areas  (A) ``Extreme Universe,'' No.~21H05183, and  from JSPS  KAKENHI Grant No.~23K03230.

\printbibliography 

\end{document}